\title{A Game Semantics for Generic Polymorphism}
\author{Samson Abramsky\\
Oxford University Computing Laboratory\\
\and
Radha Jagadeesan\\
DePaul University}
\date{}
\newcommand{\gequiv}{\approx}
\newcommand{\lsem}{\llbracket}
\newcommand{\rsem}{\rrbracket}
\newcommand{\eqdef}{\stackrel{\vartriangle}{=}}
\newcommand{\UU}{\mathcal{U}}
\newtheorem{proposition}{Proposition}[section]
\newtheorem{lemma}[proposition]{Lemma}
\newtheorem{theorem}[proposition]{Theorem}
\newtheorem{corollary}[proposition]{Corollary}
\newcommand{\GG}[1]{\mathcal{G}(#1)}
\newcommand{\GasS}{\mathbf{Games}}
\newcommand{\numoccs}[2]{\mathsf{numoccs}(#1 , #2 )}
\newcommand{\parity}{\mathsf{parity}}
\newcommand{\restrict}{{\upharpoonright}}
\newcommand{\llwith}{\, \& \,}
\newcommand{\domapprox}{\, \sqsubseteq \,}
\newcommand{\ginc}{\, \trianglelefteq \,}
\newcommand{\Occ}{\mathcal{O}}
\newcommand{\PInv}[1]{\mathsf{PInv}(#1)}
\newcommand{\twist}{\mathsf{twist}}
\newcommand{\Rel}[1]{\mathsf{Rel}(#1)}
\newcommand{\pow}[1]{\mathcal{P}(#1)}
\newcommand{\invcomp}{\bowtie}
\newcommand{\rconv}[1]{#1^{\mathsf{c}}}
\newcommand{\rdconv}[1]{#1^{\mathsf{c}\, \mathsf{c}}}
\newcommand{\preord}{\lessapprox}
\newcommand{\ident}{\mathsf{id}}
\newcommand{\ICC}{\mathcal{C}}
\newcommand{\CCC}{\mathbf{CCC}}
\newcommand{\CCB}{\ICC_{B}}
\newcommand{\GB}[1]{\mathcal{G}_{B}(#1)}
\newcommand{\GU}[1]{\mathcal{G}_{\UU}(#1)}
\newcommand{\SubU}{\mathsf{Sub}(\UU )}
\newcommand{\SubB}{\mathsf{Sub}(B)}
\newcommand{\Base}{\mathbb{B}}
\newcommand{\List}[1]{\mathsf{List}[#1]}
\newcommand{\Pref}{\mathsf{Pref}}
\newcommand{\ie}{\textit{i.e.}\ }
\newcommand{\pfnarrow}{\; - \! \! \! \rightharpoonup \;}
\newcommand{\al}{\mathtt{p}}
\newcommand{\ar}{\mathtt{q}}
\newcommand{\fr}{\mathtt{r}}
\newcommand{\fl}[1]{\mathtt{l}_{#1}}
\newcommand{\hext}[1]{#1^{\dagger}}
\newcommand{\MM}{\mathcal{M}}
\newenvironment{proof}{\textsc{Proof}\ }{$\;\; \Box$}
\newcommand{\bangindex}[1]{\mathtt{k}_{#1}}
\newcommand{\linearfl}{\mathtt{l}}
\newcommand{\tensorfl}{\mathtt{l^t}}
\newcommand{\tensorfr}{\mathtt{r^t}}
\newcommand{\tensor}{\otimes}
\newcommand{\linimpl}{\multimap}
\newcommand{\LUU}{\mathcal{U'}}
\newcommand{\SubLU}{\mathsf{Sub}(\LUU )}
\newcommand{\linearMM}{\mathcal{M'}}
\newcommand{\lambdaomega}{\lambda\Omega}
\newcommand{\lang}{\langle}
\newcommand{\rang}{\rangle}
\begin{document}

\maketitle

\begin{center}
\textit{Dedicated to Helmut Schwichtenberg on the occasion of his
  sixtieth birthday. His commitment to the highest scientific
  standards, coupled with a wise and kind humanity, is a continuing
  source of inspiration.}
\end{center}

\begin{abstract}
Genericity is the idea that the same program can work at many
different data types.  Longo, Milstead and Soloviev proposed to capture the
inability of generic programs to probe the structure of their
instances by the following equational principle: if two generic
programs, viewed as terms of type $\forall X. \, A[X]$, are equal
at any given instance $A[T]$, then they are equal at all
instances. They proved that this rule is admissible in a certain
extension of System F, but finding a semantically motivated model
satisfying this principle remained an open problem.

In the present paper, we construct a categorical
model of polymorphism, based on game semantics, which contains a
large collection of generic types. This model builds on two novel
constructions:
\begin{itemize}
\item A direct interpretation of variable types as games, with a
  natural notion of substitution of games. This allows moves in
  games $A[T]$ to be decomposed into the generic part from $A$, and
  the part pertaining to the instance $T$. This leads to a
  simple and natural notion of generic strategy.

\item A ``relative polymorphic product''
  $\Pi_i (A, B)$ which expresses quantification over the type variable
  $X_i$ in the variable type $A$ with respect to a ``universe'' which
  is explicitly given as an additional parameter $B$. We then solve a
  recursive equation involving this relative product to obtain a
  universe in a suitably ``absolute'' sense.
  \end{itemize}
Full Completeness for ML types (universal closures of
quantifier-free types) is proved for this model.
\end{abstract}

\section{Introduction}
We begin with an illuminating quotation from G\'erard Berry \cite{Ber00}:
\begin{quotation}
Although it is not always made explicit, the \textit{Write Things
  Once} or WTO principle is clearly the basis for loops, procedures,
higher-order functions, object-oriented programming and inheritance,
concurrency \textit{vs.} choice between interleavings, etc.
\end{quotation}
In short, much of the search for high-level structure in programming
can be seen as the search for concepts which allow commonality to be expressed. An important facet of this
quest concerns \emph{genericity}: the idea that the same program can
work at many different data types.

For illustration, consider the abstraction step involved in passing
from list-processing programs which work on data types $\List{T}$ for
specific types $T$, to programs which work generically on $\List{X}$.
Since lists can be so clearly visualized,
\begin{figure}
\begin{center}

\begin{picture}(400,60)(-50,-30)
\put(-40,10){\vector(1,0){40}}                  
\multiput(0,0)(100,0){2}{\line(1,0){40}}       
\multiput(0,20)(100,0){2}{\line(1,0){40}}       
\multiput(0,0)(100,0){2}{\line(0,1){20}}        
\multiput(20,0)(100,0){2}{\line(0,1){20}}       
\multiput(40,0)(100,0){2}{\line(0,1){20}}       
\multiput(10,10)(100,0){2}{\vector(0,-1){40}}  
\multiput(30,10)(100,0){2}{\vector(1,0){70}}    
\multiput(205,10)(10.00000,0.00000){8}{\line( 1, 0){5.000}}
\put(260,10){\vector(1,0){40}}                  
\put(320,0){\line(1,1){20}}                    
\put(320,20){\line(1,-1){20}}                  
\put(300,0){\line(1,0){40}}                   
\put(300,20){\line(1,0){40}}                 
\put(300,0){\line(0,1){20}}        
\put(320,0){\line(0,1){20}}       
\put(340,0){\line(0,1){20}}       
\put(310,10){\vector(0,-1){40}}     
\end{picture}

\end{center}
\caption{`Generic' list structure}
\end{figure}
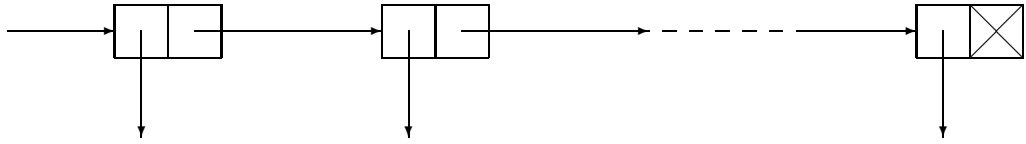
it is easy to see what this should mean (see Figure 1). A generic program cannot
probe the internal structure of the list elements. Thus e.g. list
concatenation and reversal are generic, while summing a list is
not. However, when we go beyond lists and other concrete data structures,
to higher-order types and beyond, what genericity or
type-independence should mean becomes much less clear.

One very influential proposal for a general understanding of the
\emph{uniformity} which generic programs should exhibit with respect
to the  type instances has been John Reynolds' notion of
\emph{relational parametricity} \cite{Rey83}, which requires that
relations between  instances be preserved in a suitable sense
by generic programs. This has led to numerous further
developments, e.g. \cite{MR92,ACC93,PA93}.

Relational parametricity is a beautiful and important notion. However,
in our view it is not the whole story. In particular:
\begin{itemize}
\item It is a ``pointwise'' notion, which gets at genericity
  indirectly, via a notion of uniformity applied to the family of
  instantiations of the program, rather than directly
  capturing the idea of a  program written at the generic level,
  which necessarily cannot probe the structure
  of an instance.
\item It is closely linked to strong extensionality principles, as
  shown e.g. in \cite{ACC93,PA93}, whereas the intuition of generic programs not
  probing the structure of instances is \textit{prima facie} an
intensional notion---a constraint on the behaviour of processes.
\end{itemize}
An interestingly different analysis of genericity with different
formal consequences was proposed by Giuseppe Longo, Kathleen Milsted
and Sergei Soloviev \cite{LMS93,Lon95}. Their idea
was to capture the inability of generic programs to probe the
structure of their instances by the following equational principle: if
two generic programs, viewed as terms $t$, $u$ of type $A[X]$, are
equal at \emph{any} given instance $T$, then they are equal at
\emph{all} instances:
\[ \exists T. \, t\{ T \} = u\{ T \}  :  A[T] \;\;
\Longrightarrow \;\; \forall U. \, t \{ U \} = u \{ U \} : A[U] . \]
This principle can be stated even more strongly when second-order
polymorphic quantification over type variables is used. For $t, u :
\forall X. \, A$:
\[  \frac{t\{ T \} = u\{ T \}  :  A[T]}
{t  = u  : \forall X. \, A}. \]
We call this the \emph{Genericity Rule}.
In one of the most striking syntactic results obtained for System F
(\ie the polymorphic second-order $\lambda$-calculus \cite{Gir72,Rey74}), Longo, Milsted
and Soloviev proved in \cite{LMS93} that the Genericity Rule is admissible in the system
obtained by extending System F with the following axiom scheme:
\[ (C) \qquad t \{ B \} = t \{ C \} : A \qquad (t : \forall X. \, A, \; X \not\in
\mathrm{FV}(A)) . \]
While many of the known semantic models of System F satisfy axiom (C),
\emph{there is no known naturally occurring model which satisfies the
  Genericity principle} (\ie in which the  rule of
  Genericity is valid). In fact, in the strong form given above, the
  Genericity rule is actually \emph{incompatible} with
  well-pointedness and parametricity, as observed by Longo. Thus if we
  take the standard polymorphic terms representing the Boolean values
\[ \Lambda X.\, \lambda x{:}X.\, \lambda y{:}X.\, x , \;\; \Lambda X. \, \lambda
x{:}X. \, \lambda y{:}X.\, y \;\; : \;\; \forall X. \, X \rightarrow X \rightarrow X \]
then if the type $\forall X. \, X \rightarrow X$ has only one
inhabitant --- as will be the case in a parametric model --- then by
well-pointedness the Boolean values will be equated at this instance,
while they cannot be equated in general on pain of inconsistency.

However, we can state a more refined version. Say that a type $T$ is
\emph{a generic instance} if for all types $A[X]$:
\[ t\{ T\} = u\{T\} : A[T] \;\; \Longrightarrow \;\; t = u : \forall X. \, A
. \]
This leads to the following problem posed by Longo in  \cite{Lon95}, and
still, to the best of our knowledge, open:
\begin{quotation}
Open Problem 2. Construct, at least, some (categorical) models that
contain a collection of ``generic'' types. \ldots If our intuition
about constructivity is correct, infinite objects in categories of
(effective) sets should satisfy this property.
\end{quotation}

In the present paper, we present a solution to this problem by
constructing a categorical model of polymorphism which contains a
large collection of generic types. The model is based on game
semantics; more precisely, it extends the ``AJM games'' of \cite{AJM00} to
provide a model for generic polymorphism. Moreover, Longo's intuition
as expressed above is confirmed in the following sense: our main
sufficient condition for games (as denotations of types) to be generic
instances is that they have plays of arbitrary length. This can be
seen as an intensional version of Longo's intuition about infinite
objects.

In addition to providing a solution to this problem, the present paper
also makes the following contributions.
\begin{itemize}
\item We interpret variable types in a simple and direct way, with a
  natural notion of \emph{substitution of games into variable
    games}. The crucial aspect of this idea is that it allows moves in
  games $A[T]$ to be decomposed into the generic part from $A$, and
  the part pertaining to the instance $T$. This in turn allows the
  evident content of genericity in the case of concrete data
  structures such as lists to be carried over to arbitrary
  higher-order and polymorphic types. In particular, we obtain a
  simple and natural notion of \emph{generic strategy}. This extends
  the notion of history-free strategy from \cite{AJM00}, which is
  determined by a function on moves, to that of a generic strategy,
  which is determined by a function on \emph{the generic part of the move only},
and simply acts as the identity on the part pertaining to the
instance. This captures the intuitive idea of a generic program, existing ``in
advance'' of its instances, in a rather direct way.

\item We solve the size problem inherent in modelling System F in a
  somewhat novel way. We define a ``relative polymorphic product''
  $\Pi_i (A, B)$ which expresses quantification over the type variable
  $X_i$ in the variable type $A$ with respect to a ``universe'' which
  is explicitly given as an additional parameter $B$. We then solve a
  recursive equation involving this relative product to obtain a
  universe in a suitably ``absolute'' sense: a game $\UU$
  with the requisite closure properties to provide a model for System
  F.

\item We prove Full Completeness for the ML types (\ie the universal
  closures of quantifier-free types).

\end{itemize}

\section{Background}

\subsection{Syntax of System F}
We briefly review the syntax of System F. For further background
information we refer to \cite{GLT89}.

\subsubsection*{Types (Formulas)}
\[ A \quad ::= \quad X \; \mid \; A \rightarrow B \; \mid \; \forall X. \, A \]

\subsubsection*{Typing Judgements}
Terms in context have the form
\[ x_1 : A_1 , \ldots , x_k : A_k \vdash t : A \]
\textbf{Assumption}
\[ \infer{\Gamma , x:T \vdash  x:T}{} \]

\noindent \textbf{Implication}
\[ \infer[({\rightarrow}-I)]{\Gamma \vdash \lambda x{:}U. \, t: U \rightarrow T}{\Gamma , x:U \vdash t:T}
\qquad
\infer[({\rightarrow}-E)]{\Gamma \vdash tu : T}{\Gamma \vdash t : U \rightarrow T \qquad \Gamma
  \vdash u:U}
\]

\noindent \textbf{Second-order Quantification}
\[  \infer[({\forall}-I)]{\Gamma \vdash \Lambda X. \, t : \forall
  X. \, A}{\Gamma \vdash t:A}  \qquad
\infer[({\forall}-E)]{\Gamma \vdash t\{ B \} :
  A[B/X]}{\Gamma \vdash t : \forall X. \, A}
\]
The $(\forall-I)$ rule is subject to the usual eigenvariable
condition, that $X$ does not occur free in $\Gamma$.

\noindent The following isomorphism is definable in System F:
\[ \forall X. \, A \rightarrow B \;\; \cong \;\; A \rightarrow \forall
X. \, B \qquad (X \not\in \mathrm{FV}(A)) . \]
This allows us to use the following normal form for types:
\[ \forall \vec{X}. \, T_1 \rightarrow \cdots \rightarrow T_k
\rightarrow X \qquad (k \geq 0) \]
where each $T_i$ is inductively of the same form.

\subsection{Notation}
We write $\omega$ for the set of natural numbers.

If $X$ is a set, $X^*$ is the set of finite sequences (words, strings)
over $X$.  We use $s$, $t$, $u$, $v$ to denote sequences, and $a$,
$b$, $c$, $d$, $m$, $n$ to denote elements of these sequences.
Concatenation of sequences is indicated by juxtaposition, and we don't
distinguish notationally between an element and the corresponding unit
sequence. Thus $as$ denotes the sequence with first element $a$ and
tail
$s$. However, we will sometimes write $a \cdot s$ or $s \cdot a$ to
give the name $a$ to the first or last element of a sequence.

If $f:X\longrightarrow Y$ then $f^*:X^*\longrightarrow Y^*$ is the
unique
monoid homomorphism extending $f$.  We write $|s|$ for the length of a
finite sequence, and $s_i$ for the $i$th element of $s$,
$1\leq i\leq |s|$. We write $\numoccs{a}{s}$ for the number of occurrences of $a$ in the sequence $s$.

We write $X+Y$ for the disjoint union of sets $X$, $Y$.

If $Y\subseteq X$ and $s\in X^*$, we write $s\restriction Y$ for the
sequence obtained by deleting all elements not in $Y$ from $s$. In
practice, we use this notation in the context where $X = Y + Z$, and
by abuse of notation we take $s \restriction Y \in Y^{*}$, {\ie}\ we
elide the use of injection functions.
We also use several variations on the notion of projection onto a sub-sequence, defining any which are not obvious from the context.

We
write $s\sqsubseteq t$ if $s$ is a prefix of $t$, {\em i.e.}\ $t=su$
for some $u$. We write $s\sqsubseteq^{\mbox{{\scriptsize even}}} t$ if $s$ is an even-length prefix of $t$.
$\Pref(S)$ is the set of prefixes of elements of $S\subseteq X^*$.
$S$ is {\em prefix-closed} if $S=\Pref(S)$.

\section{Variable Games and Substitution}
\subsection{A Universe of Moves}
We fix an algebraic signature consisting of the following set of
\emph{unary} operations:
\[ \al , \; \ar , \; \{ \fl{i} \mid i \in \omega \} , \; \fr . \]
We take $\MM$ to be the algebra over this signature freely generated
by $\omega$. Explicitly, $\MM$ has the following ``concrete syntax'':
\[ m \quad ::= \quad i \; (i \in \omega ) \;\; \mid \;\; \al (m) \;\;
\mid \;\; \ar (m) \;\; \mid \;\;
\fl{i} (m) \;\; (i \in \omega ) \;\; \mid \;\; \fr (m) . \]
For any algebra $(A, \al^A , \ar^A , \{ \fl{i}^A \mid i \in \omega \}
, \fr^A )$ and map $f : \omega \longrightarrow A$, there is a unique
homomorphism $ \hext{f} : \MM \longrightarrow A$ extending $f$, defined
by:
\[ \hext{f}(i) = f(i), \qquad \hext{f}(\phi (m)) = \phi^A (\hext{f}(m))
\quad (\phi \in \{ \al , \ar , \fr \} \cup \{ \fl{i} \mid i \in \omega
\} ). \]
We now define a number of maps on $\MM$ by this means.
\begin{itemize}
\item
The \emph{labelling map} $\lambda : \MM \longrightarrow \{ P, O \}$.
The polarity algebra on the carrier $\{ P, O \}$ interprets $\al$,
$\ar$, $\fr$ as the identity, and each $\fl{i}$ as the involution
$\bar{(\ )}$, where $\bar{P} = O$, $\bar{O} = P$. The map on the
generators is the constant map sending each $i$ to $O$.

\item
The map $\rho : \MM \longrightarrow \omega$ sends each move to the
unique generator occurring in it. All the unary operations are
interpreted as the identity, and the map on generators is the
identity.

\item
The substitution map. For each move $m' \in \MM$, there is a map
\[ h_{m'} : \MM \longrightarrow \MM \]
induced by the constant map on $\omega$ which sends each $i$ to
$m'$. We write $m[m']$ for $h_{m'}(m)$.

\item
An alternative form of substitution is written $m[m'/i]$. This is
induced by the map which sends $i$ to $m'$, and is the identity on
all $j \neq i$.
\end{itemize}

\begin{proposition}
Substitution is associative and left-cancellative:
\[ \begin{array}{ll}
(1) & m_1[m_2[m_3]] = (m_1[m_2])[m_3] \\
(2) & m[m_1] = m[m_2] \;\; \Longrightarrow \;\; m_1 = m_2
\end{array} \]
\end{proposition}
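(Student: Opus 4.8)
The plan is to lean on the fact that $\MM$ is the \emph{free} algebra on $\omega$, so that each element has a unique expression as a finite stack of the unary operations $\al, \ar, \fr, \fl{i}$ applied to a single generator $i \in \omega$ (this is exactly the content of the well-definedness of $\rho$). In particular every constructor is injective, and their images are pairwise disjoint and disjoint from $\omega \subseteq \MM$; this, together with the uniqueness clause in the universal property of $\MM$, is all the structure that is needed.

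For (1) I would argue purely from the universal property. The composite $h_{m_3} \circ h_{m_2} : \MM \longrightarrow \MM$ is a homomorphism, and it sends each generator $i$ to $h_{m_3}(m_2) = m_2[m_3]$. Since $h_{m_2[m_3]}$ is, by definition, \emph{the} unique homomorphism sending every generator to $m_2[m_3]$, we get $h_{m_3} \circ h_{m_2} = h_{m_2[m_3]}$; evaluating both sides at $m_1$ yields $(m_1[m_2])[m_3] = m_1[m_2[m_3]]$. A routine structural induction on $m_1$ is an equally good alternative, the step case using that $\phi \circ h_{m'} = h_{m'} \circ \phi$ for each constructor $\phi$, which holds by definition of the substitution homomorphism.

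For (2) I would proceed by structural induction on $m$. When $m = i$ is a generator, $m[m_j] = m_j$, so the hypothesis $m[m_1] = m[m_2]$ is literally $m_1 = m_2$. When $m = \phi(m')$ with $\phi \in \{\al,\ar,\fr\} \cup \{\fl{i} \mid i \in \omega\}$, we have $m[m_j] = \phi(m'[m_j])$ by definition of the substitution homomorphism, so injectivity of $\phi$ gives $m'[m_1] = m'[m_2]$, and the induction hypothesis applied to $m'$ finishes the job.

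I do not expect a genuine obstacle here: both parts are short inductions (or, for (1), a one-line universal-property argument). If anything needs a little care, it is only recording why left-cancellation is a sensible thing to ask for in the first place --- namely that every move contains a generator to be cancelled against, there being no ``constant'' moves --- which is immediate from the grammar for $\MM$.
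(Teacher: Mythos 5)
Your proof is correct; the paper in fact states this proposition without proof, treating it as immediate from the freeness of $\MM$, and your argument (the universal-property computation $h_{m_3} \circ h_{m_2} = h_{m_2[m_3]}$ for associativity, and structural induction using injectivity of the constructors for left-cancellation) is exactly the argument the authors are implicitly relying on. Nothing is missing.
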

Note that substitution is right-cancellative \emph{only up to
  permutation of generators}:
\[ m[i][m'] = m[m'] = m[j][m'] \qquad \mbox{for all $i, j \in
  \omega$}. \]

\begin{proposition}
\label{sublr}
Substitution interacts with $\lambda$ and $\rho$ as follows.
\[ \begin{array}{ll}
1. & \lambda (m[m']) = \left\{ \begin{array}{ll}
\overline{\lambda (m')} & \mbox{if $\lambda (m) = P$} \\
\lambda (m') & \mbox{if $\lambda (m) = O$}
\end{array} \right. \\
2. & \rho (m[m']) = \rho (m') .
\end{array} \]
\end{proposition}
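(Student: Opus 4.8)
The plan is to prove both parts by structural induction on the move $m \in \MM$, using the defining data of the maps concerned. The facts I would use are: substitution $h_{m'}$ is a homomorphism of the free algebra $\MM$, so $\phi(n)[m'] = \phi(n[m'])$ for every unary operation $\phi$ and $i[m'] = m'$ for a generator $i$; the $\rho$-algebra interprets every unary operation as $\ident$; and the polarity algebra interprets $\al, \ar, \fr$ as $\ident$, each $\fl{i}$ as $\bar{(\ )}$, with every generator labelled $O$. Part 2 is then immediate: for a generator, $\rho(i[m']) = \rho(m')$, and in the inductive step $\rho(\phi(n)[m']) = \rho(\phi(n[m'])) = \rho(n[m']) = \rho(m')$, the middle step because $\phi$ acts as $\ident$ in the $\rho$-algebra and the last by the induction hypothesis.

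For part 1, the base case $m = i$ lies in the clause $\lambda(m) = O$, and $\lambda(i[m']) = \lambda(m')$ as required. In the inductive step with $m = \phi(n)$, if $\phi \in \{\al, \ar, \fr\}$ then $\lambda(m) = \lambda(n)$ and $\lambda(m[m']) = \lambda(n[m'])$, so the claim for $m$ is exactly the claim for $n$. If $\phi = \fl{i}$ then $\lambda(m) = \overline{\lambda(n)}$ and $\lambda(m[m']) = \overline{\lambda(n[m'])}$; pushing the induction hypothesis for $n$ through the outer complementation and simplifying with $\overline{\overline{x}} = x$ interchanges the two clauses, which is precisely what the change from $\lambda(n)$ to $\lambda(m) = \overline{\lambda(n)}$ demands. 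This $\fl{i}$ case is the only subtle point, and the subtlety is merely bookkeeping: the induction hypothesis must be carried in the conditional form of the statement so that the polarity flip is tracked correctly.

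Finally I would note a slicker, induction-free route. Both $\lambda$ and $\rho$ are, by definition, the homomorphisms out of the free algebra $\MM$ extending the maps $\mathrm{const}_O$ and $\ident_\omega$ respectively, and $h_{m'}$ extends $\mathrm{const}_{m'}$. Hence $\rho \circ h_{m'}$ is the homomorphism extending $\mathrm{const}_{\rho(m')}$, and since the $\rho$-algebra has only identity operations this homomorphism is itself the constant map $\rho(m')$, giving part 2 at once. For part 1, one checks that $\bar{(\ )} : \{P,O\} \longrightarrow \{P,O\}$ is an endomorphism of the polarity algebra (the identity $\overline{\phi^A(x)} = \phi^A(\overline{x})$ holds for each $\phi$), so $\bar{(\ )} \circ \lambda$ is again a homomorphism $\MM \longrightarrow \{P,O\}$ sending each generator to $P$; by uniqueness it equals $\lambda \circ h_{m'}$ when $\lambda(m') = P$, whereas $\lambda \circ h_{m'} = \lambda$ when $\lambda(m') = O$. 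In other words $\lambda(m[m']) = \lambda(m)$ if $\lambda(m') = O$ and $\overline{\lambda(m)}$ if $\lambda(m') = P$, which agrees with the proposition on all four polarity combinations. Either way there is no genuine obstacle; the entire content is the interaction of the involution with the case analysis in part 1.
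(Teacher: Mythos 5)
Your proof is correct. The paper states Proposition~\ref{sublr} without proof, treating it as a routine consequence of the freeness of $\MM$; your structural induction supplies exactly that verification (the only delicate point, the polarity flip under $\fl{i}$, is handled correctly), and your induction-free variant via uniqueness of homomorphisms out of the free algebra --- noting that $\bar{(\ )}$ is an endomorphism of the polarity algebra, and checking that your symmetric reformulation of part~1 agrees with the stated case split on all four polarity combinations --- is an equally valid, arguably cleaner, packaging of the same facts.
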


\noindent We extend the notions of substitution pointwise to sequences and sets
of sequences of moves in the evident fashion.

\noindent We say that $m_1 , m_2 \in \MM$ are \emph{unifiable} if for
some $m_3 , m_4
\in \MM$, $m_1 [m_3] = m_2 [m_4]$. A set $S \subseteq \MM$ is
\emph{unambiguous} if whenever $m_1 , m_2 \in S$ are unifiable, $m_1 =
m_2$.

\begin{proposition}
\label{unamb}
If $S$ is unambiguous,
and for each $m \in S$ the set $T_m$ is unambiguous, then so is the
following set:
\[ \{ m_1[m_2] \mid m_1 \in S \; \wedge \; m_2 \in T_{m_1} \} . \]
\end{proposition}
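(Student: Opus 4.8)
The plan is to prove directly that the set $U=\{\,m_1[m_2]\mid m_1\in S\ \wedge\ m_2\in T_{m_1}\,\}$ is unambiguous, by taking an arbitrary unifiability equation between two elements of $U$ and peeling it apart into a unifiability equation in $S$ and one in the relevant $T_m$. The intuition is that every element of $\MM$ is a ``context'' (a word of unary operations) with a single generator at its leaf, and substituting $m_2$ into $m_1$ glues $m_2$ at that leaf; so a coincidence $m_1[m_2[\cdots]]=m_1'[m_2'[\cdots]]$ ought to force the outer contexts $m_1,m_1'$ to agree and then the inner ones $m_2,m_2'$ to agree. The technical vehicle for this is exactly the associativity law $(1)$ and the left-cancellation law $(2)$ of the first Proposition.

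Concretely: suppose $n=m_1[m_2]$ and $n'=m_1'[m_2']$ lie in $U$, with $m_1,m_1'\in S$, $m_2\in T_{m_1}$, $m_2'\in T_{m_1'}$, and suppose $n$ and $n'$ are unifiable, say $n[p]=n'[p']$. First I would rewrite both sides using associativity, $n[p]=m_1[m_2[p]]$ and $n'[p']=m_1'[m_2'[p']]$, so the hypothesis becomes $m_1[m_2[p]]=m_1'[m_2'[p']]$. This equation is precisely a witness that $m_1$ and $m_1'$ are unifiable (with witnesses $m_2[p]$ and $m_2'[p']$), so unambiguity of $S$ gives $m_1=m_1'$. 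Substituting this back, the equation reads $m_1[m_2[p]]=m_1[m_2'[p']]$, and cancelling the left factor $m_1$ via law $(2)$ yields $m_2[p]=m_2'[p']$, a witness that $m_2$ and $m_2'$ are unifiable. Since $m_1=m_1'$, both $m_2$ and $m_2'$ belong to the one set $T_{m_1}$, which is unambiguous, so $m_2=m_2'$. Therefore $n=m_1[m_2]=m_1'[m_2']=n'$, establishing unambiguity of $U$.

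I do not expect a serious obstacle: the whole argument is a two-step application of the preceding algebraic laws. The one point that needs a moment's care is the bookkeeping around the parameter $m$ in the family $T_m$ --- one must observe that the equality $m_1=m_1'$ extracted from unambiguity of $S$ is exactly what is needed to view $m_2$ and $m_2'$ as members of the \emph{same} indexed set $T_{m_1}=T_{m_1'}$, so that its unambiguity can legitimately be invoked. Everything else follows immediately by unwinding the definitions of ``unifiable'' and ``unambiguous''.
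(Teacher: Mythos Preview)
Your proof is correct and follows essentially the same approach as the paper: use associativity to turn the unifiability witness into $m_1[m_2[p]]=m_1'[m_2'[p']]$, invoke unambiguity of $S$ to get $m_1=m_1'$, apply left-cancellation to get $m_2[p]=m_2'[p']$, and then invoke unambiguity of $T_{m_1}$ to conclude $m_2=m_2'$. Your explicit remark that $m_1=m_1'$ is needed to place $m_2,m_2'$ in the same indexed set $T_{m_1}$ is a point the paper leaves implicit, but otherwise the arguments coincide.
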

\begin{proof}
Suppose that $(m_1[m_2])[m_3] = (m'_1[m'_2])[m'_3]$. We must show
that $m_1[m_2] = m'_1[m'_2]$. By associativity, $m_1[m_2[m_3]] =
m'_1[m'_2[m'_3]]$. Since $S$ is unambiguous, this implies that
$m_1 = m'_1$. By left cancellativity, this implies that $m_2[m_3]
= m'_2[m'_3]$. Since $T_{m_1}$ is unambiguous, this implies that
$m_2 = m'_2$.
\end{proof}

\noindent Given a subset $S \subseteq \MM$ and $i \in \omega$, we write
\[ S^i = \{ m \in S \mid \rho (m) = i \} . \]

\noindent We define a notion of projection of a sequence of moves $s$
onto a
move $m$ inductively as follows:
\[ \begin{array}{lcll}
\varepsilon \, \restrict \, m & = & \varepsilon & \\
m[m'] \cdot s \, \restrict \, m & = & m' \cdot (s \restrict m) &  \\
m' \cdot s \, \restrict \, m & = & s \restrict m, & \forall m''. \, m' \neq
m[m''].
\end{array} \]

\noindent Dually, given an unambiguous set of moves $S$, and a
sequence of moves $s$ in which every move has the form $m[m']$ for some $m \in
S$ (necessarily unique since $S$ is unambiguous), we define a
projection $s \restrict S$ inductively as follows:
\[ \begin{array}{lcll}
\varepsilon \, \restrict \, S & = & \varepsilon & \\
m[m'] \cdot s \, \restrict \, S & = & m \cdot (s \restrict S) &  (m
\in S \; \wedge \; \rho (m) >0) \\
m[m'] \cdot s \, \restrict \, S & = & m[m'] \cdot (s \restrict S) &  (m
\in S^0 )
\end{array} \]

\subsection{Variable Games}
A \emph{variable game} is a structure
\[ A = (\Occ_A,  P_A , \gequiv_A ) \]
where:
\begin{itemize}
\item
$\Occ_A \subseteq \MM$ is an unambiguous  set of moves: the
\emph{occurrences} of $A$. We then define:
\begin{itemize}
\item $\lambda_A = \lambda \restrict \Occ_A$.
\item $\rho_A = \rho \restrict \Occ_A$.
\item $M_A = \{ m[m'] \mid m \in \Occ_A^0 \; \wedge \; m' \in \MM \}
  \;\; \cup \;\; \bigcup_{j > 0} \Occ_A^j$.
\end{itemize}

\item $P_A$ is a non-empty prefix-closed subset of $M_A^{\ast}$
  satisfying the following form of \emph{alternation condition}: the
  odd-numbered moves in a play are \emph{moves by $O$}, while the
  even-numbered moves are \emph{by $P$}. Here we regard the first, third,
  fifth, \ldots occurrences of a move $m$ in a sequence as being by
  $\lambda_A (m)$, while the second, fourth, sixth \ldots occurrences
  are by the other player.

\item $\gequiv_A$ is an equivalence relation on $P_A$ such that:
\[ \begin{array}{ll}
\mathbf{(e1)} & s \gequiv_A t \;\; \Longrightarrow \;\; s
\longleftrightarrow t \\
\mathbf{(e2)} & ss' \gequiv_A tt' \; \wedge \; | s | = | t | \;\;
\Longrightarrow \;\; s \gequiv_A t \\
\mathbf{(e3)} & s \gequiv_A t \; \wedge \; sa \in P_A \;\; \Longrightarrow
\;\; \exists b. \, sa \gequiv_A tb .
\end{array} \]
Here $s \longleftrightarrow t$ holds if
\[ s = \langle m_1 , \ldots , m_k \rangle, \qquad t = \langle m'_1 ,
\ldots , m'_k \rangle \]
and the correspondence $m_i \longleftrightarrow m'_i$ is bijective and
preserves $\lambda_A$ and $\rho_A$. We
write
\[ \pi : s \longleftrightarrow t \]
to give the name $\pi$ to the bijective correspondence $m_i
\longleftrightarrow m'_i$.
\end{itemize}

\noindent
A move $m \in \Occ_A^i$, $i > 0$, is an \emph{occurrence} of the type
variable $X_i$, while $m \in \Occ_A^0$ is a \emph{bound occurrence}.

The set of variable games is denoted by $\GG{\omega}$. The set of
those games $A$ for which the range of $\rho_A$ is included in $\{ 0,
\ldots , k \}$ is denoted by $\GG{k}$. Note that if $k \leq l$, then
\[ \GG{k} \subseteq \GG{l} \subseteq \GG{\omega} . \]
$\GG{0}$ is the set of \emph{closed games}.

\paragraph{Comparison with AJM games}
The above definition of game differs from that in \cite{AJM00} in
several respects.
\begin{enumerate}
\item The notion of bracketing condition, requiring a classification
  of moves as \emph{questions} or \emph{answers}, has been
  omitted. This is because we are dealing here with pure type
  theories, with no notion of ``ground data types''.
\item The alternation condition has been modified: we still have
  strict $OP$-alternation of moves, but now successive occurrences of
  moves within a sequence are regarded as themselves having
  alternating polarities. Since in the PCF games in \cite{AJM00} moves
  in fact only occur once in any play, they do fall within the present
  formulation. The reason for the revised formulation is that moves
  in variable games are to be seen as \emph{occurrences} of type variables,
  which can be expanded into plays at an instance. Another motivation
  comes from considering copy-cat strategies, in which (essentially)
  the same moves are
  played alternately by $O$ and $P$.

Technically, modifying the alternation condition in this way simplifies
the definition of substitution (see Section 3.4) and of the games
$X_i$ corresponding to type variables (see Section 3.5).

\item We have replaced the condition $\mathbf{(e1)}$ from \cite{AJM00}
  with a stronger condition, which is in fact satisfied by the games in
  \cite{AJM00}.
\end{enumerate}

\subsection{Constructions on games}
Since variable games are essentially just AJM games with some
additional structure on moves, the cartesian closed structure on AJM games
can be lifted straighforwardly to variable games.

\subsubsection*{Unit type}
The unit type $\mathbf{1}$ is the empty game.
\[ \mathbf{1} \;\; = \;\; ( \varnothing ,  \{ \varepsilon  \}
, \{ (\varepsilon , \varepsilon ) \} ) . \]

\subsubsection*{Product}
The product $A \& B$ is the disjoint union of games.
\[ \Occ_{A \& B} = \{ \al (m) \mid m \in \Occ_A \} \cup \{ \ar (m)
\mid m \in \Occ_B \} \]
\[ P_{A \& B} =  \{ \al^{\ast} (s) \mid s \in P_A \} \cup \{ \ar^{\ast} (t) \mid
t \in P_B \} \]
\[ \al^{\ast}(s) \gequiv_{A \& B} \al^{\ast}(t) \; \equiv \; s
\gequiv_A t \qquad \ar^{\ast}(s) \gequiv_{A \& B} \ar^{\ast}(t) \; \equiv \; s
\gequiv_B t . \]

\subsubsection*{Function Space}
The function space $A \Rightarrow B$ is defined as follows.

\[ \Occ_{A \Rightarrow B} \;\; = \;\; \{ \fl{i}(m) \mid i \in \omega
\; \wedge \; m \in \Occ_A \} \;\; \cup \;\; \{ \fr (m) \mid m \in
\Occ_B \} .  \]

\noindent $P_{A \Rightarrow B}$ is defined to be the set of all sequences in
$M_{A \Rightarrow B}^{\ast}$ satisfying the alternation condition, and
such that:
\begin{itemize}
\item $\forall i \in \omega . \, s \restrict \fl{i}(1) \in P_A$.
\item $s \restrict \fr (1) \in P_B$.
\end{itemize}
Let $S = \{ \fl{i}(1) \mid i \in \omega \} \cup \{ \fr (1) \}$. Note that
$S$ is unambiguous. Given a permutation $\alpha$ on $\omega$, we
define
\[ \breve{\alpha}(\fl{i}(1)) = \fl{\alpha (i)}(1), \qquad \breve{\alpha}(\fr
(1)) = \fr (1) . \]
The equivalence relation $s \gequiv_{A \Rightarrow B} t$ is defined by
the condition
\[ \exists \alpha \in S(\omega ) . \,
\breve{\alpha}^{\ast}(s \restrict S) = t \restrict S \; \wedge \;
s \restrict \fr (1) \gequiv_B t \restrict \fr (1)
\; \wedge \; \forall i \in \omega . \, s \restrict \fl{i}(1) \gequiv_A t
\restrict \fl{\alpha (i)}(1) ) .
\]
This is essentially identical to the definition in \cite{AJM00}. The
only difference is that we use the revised version of the alternation
condition in defining the positions, and that we define $A \Rightarrow
B$ directly, rather than via the linear connectives $\multimap$ and
$!$.

\subsection{Substitution}

Given $A \in \GG{k}$, and $B_1 , \ldots , B_k \in \GG{l}$, we define
$A[\vec{B}] \in \GG{l}$ as follows.

\[  \Occ_{A[\vec{B}]} \;\; = \;\; \Occ_A^0 \;\; \cup \;\;
\bigcup_{i=1}^k \, \{ m[m'] \mid m \in \Occ_A^i \; \wedge \; m'
\in \Occ_{B_i} \}. \]
\[
P_{A[\vec{B}]} = \{ s \in M_{A[\vec{B}]}^{\ast} \mid s \restrict A
  \in P_A
\; \wedge \;
\forall i: 1 \leq i \leq k. \, \forall m \in \Occ_A^i . \, s \restrict
m \in P_{B_i} \}
\]
\[
s \gequiv_{A[\vec{B}]} t \;\; \equiv \;\; s \restrict A \gequiv_A t
\restrict A
\; \wedge \; \pi : s \restrict A \longleftrightarrow t \restrict A \;\;
\Longrightarrow \;\;
\forall i: 1 \leq i \leq k. \, \forall m \in \Occ_A^i . \, s
\restrict m \gequiv_{B_i} t \restrict \pi (m) .
\]
Here by convenient abuse of notation we write $s \restrict A$ for $s
\restrict \Occ_A$.

\begin{proposition} $A[\vec{B}]$ is a well-defined game. In particular:
\begin{enumerate}
\item $\Occ_{A[\vec{B}]}$ is unambiguous.
\item $P_{A[\vec{B}]}$ satisfies the alternation condition.
\item $\gequiv_{A[\vec{B}]}$ satisfies \textbf{(e1)}--\textbf{(e3)}.
\end{enumerate}
\end{proposition}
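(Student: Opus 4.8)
The plan is to verify the three listed conditions separately, each reducing to a property proved earlier in the excerpt together with a routine induction on play length.

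For item (1), unambiguity of $\Occ_{A[\vec{B}]}$, I would apply Proposition~\ref{unamb} directly. The set $\Occ_{A[\vec{B}]}$ is built as $\Occ_A^0 \cup \bigcup_{i=1}^k \{ m[m'] \mid m \in \Occ_A^i, m' \in \Occ_{B_i} \}$, which fits the template $\{ m_1[m_2] \mid m_1 \in S, m_2 \in T_{m_1} \}$ once we take $S = \Occ_A$ (unambiguous by hypothesis), $T_m = \{ i \mid i \in \omega \}$-style trivial singleton-ish family for $m \in \Occ_A^0$, and $T_m = \Occ_{B_i}$ for $m \in \Occ_A^i$ with $i>0$ (unambiguous by hypothesis on the $B_i$). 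Strictly one must handle the $\Occ_A^0$ part: a bound occurrence $m$ contributes $m$ itself (i.e.\ $m[j]$ collapsed appropriately), so I would note $m = m[i]$ is absorbed by the convention and that bound occurrences cannot be unified with instance occurrences because $\rho$ distinguishes them ($\rho = 0$ versus $\rho > 0$), using Proposition~\ref{sublr}(2).

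For item (2), the alternation condition, the key observation is that $s \restrict A$, $s \restrict m$ are the only data constraining membership in $P_{A[\vec{B}]}$, and polarity of the $j$th occurrence of a move $m[m']$ in $s$ is governed by $\lambda_{A[\vec{B}]}(m[m']) = \lambda(m[m'])$, which by Proposition~\ref{sublr}(1) is determined by $\lambda(m)$ and $\lambda(m')$. I would argue by induction on $|s|$ that $OP$-alternation of $s$ (with the revised convention counting repeated occurrences as alternating) follows from $OP$-alternation of $s \restrict A \in P_A$ and of each $s \restrict m \in P_{B_i}$: the parity of a move in $s$ is the sum (mod $2$) of its parity in the $A$-projection and in the relevant $B_i$-projection, matching the case split in Proposition~\ref{sublr}(1). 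Prefix-closedness and non-emptiness are immediate since $\varepsilon$ projects to $\varepsilon$ everywhere.

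For item (3), the conditions \textbf{(e1)}--\textbf{(e3)} on $\gequiv_{A[\vec{B}]}$, I would unfold the definition: $s \gequiv_{A[\vec{B}]} t$ asks for $s \restrict A \gequiv_A t \restrict A$ via some bijection $\pi$, together with $s \restrict m \gequiv_{B_i} t \restrict \pi(m)$ for every instance occurrence $m$. Condition \textbf{(e1)} ($s \gequiv t \Rightarrow s \longleftrightarrow t$) follows by assembling the bijection $\pi$ on the generic part (given by \textbf{(e1)} for $A$) with the bijections on each instance part (given by \textbf{(e1)} for the $B_i$), checking that $\lambda_{A[\vec{B}]}$ and $\rho_{A[\vec{B}]}$ are preserved --- again via Proposition~\ref{sublr}, since these are determined componentwise. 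Condition \textbf{(e2)} is a restriction-and-truncation argument: if $ss' \gequiv_{A[\vec{B}]} tt'$ with $|s| = |t|$, then projecting, $(ss')\restrict A = (s\restrict A)(s'\restrict A)$ splits correspondingly, and \textbf{(e2)} for $A$ and each $B_i$ gives the truncated equivalences; the main subtlety is that $|s| = |t|$ does \emph{not} immediately give $|s \restrict A| = |t \restrict A|$, but it does because $s \longleftrightarrow t$ (which holds since $ss' \longleftrightarrow tt'$ restricts) makes the projection lengths agree. Condition \textbf{(e3)} (extension) is the main obstacle: given $s \gequiv_{A[\vec{B}]} t$ and $sa \in P_{A[\vec{B}]}$ with $a = m[m']$, I must produce $b$ with $sa \gequiv_{A[\vec{B}]} tb$. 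If $m \in \Occ_A^0$ is a bound move, then $a \restrict A = m$ extends $s \restrict A$, so \textbf{(e3)} for $A$ supplies an extension $m''$ with $(s\restrict A)m \gequiv_A (t \restrict A)m''$, and one sets $b = m''$ (a bound move has no instance content). If $m \in \Occ_A^i$, then $a$ extends both $s \restrict A$ (by $m$) and $s \restrict m$ (by $m'$); I first apply \textbf{(e3)} for $A$ to extend $s \restrict A$ by $m$, obtaining a matching move $n$ on the $t$-side with $n = \pi'(m)$ for the extended bijection $\pi'$, then apply \textbf{(e3)} for $B_i$ to $s \restrict m \gequiv_{B_i} t \restrict \pi(m)$ to extend by $m'$, obtaining $n'$, and set $b = n[n']$. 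The care needed is to confirm $tb \in P_{A[\vec{B}]}$ (its $A$-projection and all $m$-projections lie in the right position sets --- immediate from the construction and the fact that $b$ affects only the $A$-projection and the $\pi(m)$-projection) and that the resulting data genuinely witnesses $sa \gequiv_{A[\vec{B}]} tb$ for the updated bijection $\pi'$, which amounts to checking nothing was disturbed on occurrences other than $m$. I expect this bookkeeping around the interaction of the bijection $\pi$ with the extension to be where essentially all the work lies; the rest is routine.
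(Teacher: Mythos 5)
Your proposal is correct and follows essentially the same route as the paper's own proof: item (1) via Proposition~\ref{unamb} (with the same observation that $\rho$ separates bound from instance occurrences), item (2) by decomposing the polarity and occurrence-count of $m_1[m_2]$ through Proposition~\ref{sublr} so that alternation in $A[\vec{B}]$ is the composite of alternation in $A$ and in the relevant $B_i$, and item (3) for \textbf{(e3)} by first extending the $A$-projection and then the relevant $B_i$-projection with $b = m'_1[m'_2]$. If anything you cover slightly more ground than the paper, which spells out only \textbf{(e3)} of item (3) and records your parity decomposition as an explicit chain of identities.
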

\begin{proof}
\begin{enumerate}
\item This follows directly from Proposition~\ref{unamb}, since by assumption $\Occ_A$ and each $\Occ_{B_i}$ are unambiguous.

\item We begin by  formulating the alternation condition more precisely. We define the parity function
\[ \parity : \omega \longrightarrow \{ -1, +1 \}  \qquad \qquad \parity (k) = (-1)^k . \]
Also, for the purposes of this argument we shall interpret $P$ as $-1$ and $O$
 as $+1$. We can now define the alternation condition on a sequence $s$ as follows:
\[ \forall t \cdot m \sqsubseteq s. \, \parity (|t|) = \parity (\numoccs{m}{t})\lambda (m) . \]
We now consider a play $t \cdot m_1[m_2] \in P_{A[\vec{B}]}$.
Note firstly that if $\rho (m_1 ) = 0$, there is nothing more to prove, since in that case $t \cdot m_1[m_2] \restrict A = (t \restrict A) \cdot m_1[m_2]$ satisfies the alternation condition by assumption, and hence, since $|t| = |t \restrict A|$, so does $t \cdot m_1[m_2]$.

Otherwise, $\rho (m_1 )>0$. We shall use the following identities to verify the alternation condition for this play.
\[ \begin{array}{llcl}
(1) & |t| & = & |t \restrict A| \\
(2) & |t \restrict m_1 | & = & \numoccs{m_1}{t \restrict A} \\
(3) & \numoccs{m_1[m_2]}{t} & = & \numoccs{m_2}{t \restrict m_1} \\
(4) & \lambda(m_1[m_2]) & = & \lambda (m_1 ) \lambda (m_2 ) \\
(5) & \parity (|t \restrict A|) & = & \parity (\numoccs{m_1}{t \restrict A}) \lambda (m_1 ) \\
(6) & \parity (|t \restrict m_1 |) & = & \parity (\numoccs{m_2}{t \restrict m_1}) \lambda (m_2 ) .
\end{array} \]
Of these, (1)--(3) are easily verified; (4) follows from Proposition~\ref{sublr}; and (5) and (6) hold by assumption for plays in $A$ and each $B_i$ respectively. Now
\[ \begin{array}{lclr}
\parity (|t|) & = & \parity (|t \restrict A|) & (1) \\
& = & \parity (\numoccs{m_1}{t \restrict A}) \lambda (m_1 ) & (5) \\
& = & \parity (|t \restrict m_1 |) \lambda (m_1 ) & (2) \\
& = & \parity (\numoccs{m_2}{t \restrict m_1}) \lambda (m_1 ) \lambda (m_2 ) & (6) \\
& = & \parity (\numoccs{m_1[m_2]}{t}) \lambda (m_1[m_2]) & (3),(4)
\end{array} \]

\item We verify \textbf{(e3)}. Suppose that $s \gequiv_{A[\vec{B}]} t$ and $s\cdot m_1[m_2] \in P_{A[\vec{B}]}$.
This implies that $s \restrict A \gequiv_A t \restrict A$ and $(s
\restrict A)  \cdot m_1 \in P_A$. By \textbf{(e3)} for $A$, for
some $m'_1$, $(s \restrict A)\cdot m_1 \gequiv_A (t \restrict
A)\cdot m'_1$, and clearly if $\pi : (s \restrict A)\cdot m_1
\longleftrightarrow (t \restrict A)\cdot m'_1$, then $\pi (m_1 ) =
m'_1$. If $\rho (m_1) =0$, there is nothing more to prove.
Otherwise, if $m_1 \in \Occ_A^i$, $1 \leq i \leq k$, then $s
\restrict m_1 \gequiv_{B_i} t \restrict m'_1$, and $(s \restrict
m_1 )\cdot m_2 \in P_{B_i}$. By \textbf{(e3)} for $B_i$, for some
$m'_2$, $(s \restrict m_1 )\cdot m_2 \gequiv_{B_i} (t \restrict
m'_1 )\cdot m'_2$. Clearly $s \cdot m_1[m_2] \gequiv_{A[\vec{B}]}
t \cdot m'_1[m'_2]$, as required.
\end{enumerate}
\end{proof}

\subsubsection{Variants of substitution}
Firstly, note that the above definitions would still make sense if we
took $k = \omega$ and/or $l = \omega$, so that, for example, there is a
well-defined operation
\[ \GG{\omega} \times \GG{\omega}^{\omega} \longrightarrow \GG{\omega}
. \]
In practice, the finitary versions will be more useful for our purposes
here, as they correspond to the finitary syntax of System F.

More importantly, it is useful to define an operation of substitution
for one type variable only. We write this as
\[ A[B/X_i ] \]
where $B$ is being substituted for the $i$'th type variable $X_i$, $i
> 0$.

The definition is a simple variation on that of $A[\vec{B}]$ given
  above. Nevertheless, we give it explicitly, as we will make
  significant use of this version of substitution.

\[ \Occ_{A[B/X_{i}]} \;\; = \;\; \bigcup_{j \neq i} \Occ^j_{A} \; \cup \;  \{ m[m'] \mid m \in \Occ_A^i \; \wedge \; m' \in
\Occ_B \}.\]
\[
P_{A[B/X_{i}]} = \{ s \in M_{A[B/X_{i}]}^{\ast} \mid s \restrict A
  \in P_A
\; \wedge \;
\forall m \in \Occ_A^i . \, s \restrict
m \in P_{B} \}
\]
\[
s \gequiv_{A[B/X_{i}]} t \;\; \equiv
\;\; s \restrict A \gequiv_A t
\restrict A
\; \wedge \; \pi : s \restrict A \longleftrightarrow t \restrict A \;\;
\Longrightarrow \;\;
\forall m \in \Occ_A^i . \, s
\restrict m \gequiv_{B} t \restrict \pi (m) .
\]

\subsection{Properties of substitution}

\begin{proposition}
\label{sub00}
If $A \in \GG{k}$, $B_1 , \ldots , B_k \in \GG{l}$, and $C_1 , \ldots
, C_l \in \GG{m}$, then:
\[ A[B_1 [\vec{C}], \ldots , B_k [\vec{C}]] \; = \; (A[B_1 ,
\ldots , B_k ])[\vec{C}] . \]
\end{proposition}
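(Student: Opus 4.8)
The plan is to prove equality of the two games component by component: the occurrence sets $\Occ$, the position sets $P$, and the equivalence relations $\gequiv$. The underlying engine is Proposition~1 (associativity and left-cancellativity of substitution on $\MM$), together with Proposition~\ref{sublr} (the interaction of substitution with $\lambda$ and $\rho$), and the elementary behaviour of the projection operators $s \restrict m$ and $s \restrict A$.

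First I would handle the occurrence sets. A move of $A[B_1[\vec C], \ldots, B_k[\vec C]]$ is either a bound occurrence of $A$ (when $\rho(m)=0$), or has the form $m[n]$ with $m \in \Occ_A^i$ ($i>0$) and $n \in \Occ_{B_i[\vec C]}$; and $n$ is in turn either a bound occurrence of $B_i$ or of the form $n'[p]$ with $n' \in \Occ_{B_i}^j$ and $p \in \Occ_{C_j}$. Using associativity on $\MM$, $m[n'[p]] = (m[n'])[p]$, and a move of $(A[\vec B])[\vec C]$ is exactly a move $q[p]$ with $q \in \Occ_{A[\vec B]}^j$ and $p \in \Occ_{C_j}$ (or a bound occurrence of $A[\vec B]$, which is either a bound occurrence of $A$ or an $m[n']$ with $n'$ bound in $B_i$). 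Matching up these cases — and using Proposition~\ref{sublr}(2) to track that $\rho(m[n']) = \rho(n')$, so the $\rho$-value bookkeeping is consistent — gives $\Occ_{A[B_1[\vec C],\ldots]} = \Occ_{(A[\vec B])[\vec C]}$.

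Next, for the position sets I would establish the key projection identities: for any sequence $s$ over the common move set, $s \restrict A$ computed in the big game equals $(s \restrict (A[\vec B])) \restrict A$, and for $m \in \Occ_A^i$, $s \restrict m = \bigcup$-style decomposition shows $(s \restrict m)$ in $B_i[\vec C]$ agrees with projecting first onto $A[\vec B]$ along the appropriate sub-move and then onto $m$; and for $q = m[n'] \in \Occ_{A[\vec B]}^j$, $(s \restrict q) \restrict C_j$-type identities hold by associativity of substitution at the level of moves. These are straightforward inductions on the length of $s$ using the inductive clauses defining $\restrict m$ and $\restrict S$; once they are in place, the defining conditions "$s \restrict A \in P_A$ and $\forall m \in \Occ_A^i.\ s \restrict m \in P_{B_i[\vec C]}$" and "$s \restrict (A[\vec B]) \in P_{A[\vec B]}$ and $\forall q.\ s \restrict q \in P_{C_j}$" unfold to literally the same conjunction of conditions on the $P_A$, $P_{B_i}$, $P_{C_j}$ projections. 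The equivalence relation case is handled the same way: unfolding $\gequiv_{A[B_1[\vec C],\ldots]}$ and $\gequiv_{(A[\vec B])[\vec C]}$ via the same projection identities, noting that a bijection $\pi : s \restrict A \leftrightarrow t \restrict A$ extends canonically to $\pi' : s \restrict (A[\vec B]) \leftrightarrow t \restrict (A[\vec B])$ and conversely, both reduce to "$s \restrict A \gequiv_A t \restrict A$, and for all $i$ and all $m \in \Occ_A^i$, and all $n' \in \Occ_{B_i}^j$, the appropriate $C_j$-projections are $\gequiv_{C_j}$-related along the induced bijection."

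The main obstacle I anticipate is purely bookkeeping: getting the projection identities exactly right, in particular tracking how a single move of the triply-substituted game decomposes, and being careful about the $\rho(m)=0$ versus $\rho(m)>0$ case split in the definition of $s \restrict S$ (since bound occurrences of $A$ are treated differently from variable occurrences, and a bound occurrence of $B_i$ inside $A[\vec B]$ is a variable occurrence of $A$ but a bound occurrence of $B_i$). Left-cancellativity (Proposition~1(2)) is what guarantees the decompositions $m[n] = m[n']$-style matchings are unambiguous, so it will be invoked to show the correspondence between moves of the two games is genuinely a bijection and not merely a surjection. None of the steps is deep; the proposition is essentially a coherence/associativity diagram whose commutativity is forced once the $\MM$-level associativity of Proposition~1(1) is lifted through the three defining components.
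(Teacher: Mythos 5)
Your proposal is correct and follows essentially the same route as the paper: equate the occurrence sets via the $\MM$-level identities $m_1[m_2[m_3]] = (m_1[m_2])[m_3]$ and $\rho(m[m'])=\rho(m')$, then reduce the conditions on plays (and on $\gequiv$) to the same three projection identities $s \restrict A[\vec{B}] \restrict A = s \restrict A$, $s \restrict A[\vec{B}] \restrict m = s \restrict m \restrict B_i$, and $s \restrict m_1 \restrict m_2 = s \restrict m_1[m_2]$, verified by the same case split on whether the inner occurrence is bound or free. The bookkeeping concerns you flag (the $\rho=0$ cases and the use of unambiguity/left-cancellativity to make the move correspondence bijective) are exactly the points the paper's proof also attends to.
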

\begin{proof}
We show firstly that
\[ \Occ_{A[B_1 [\vec{C}], \ldots , B_k [\vec{C}]]} \;\; = \;\; \Occ_{(A[B_1 ,
\ldots , B_k ])[\vec{C}]} . \]
Expanding the definitions, we can write the occurrence set of the LHS
of the equation as follows:
\[ \Occ_A^0 \; \cup \; \bigcup_i \Occ_A^i [\Occ_{B_i}^0] \; \cup \;
\bigcup_{i,j} \Occ_A^i [\Occ_{B_i}^j [\Occ_{C_j}]]  \] using the
notation $S[T] = \{ m_1[m_2] \mid m_1 \in S \; \wedge \; m_2 \in T
\}$.

Similarly, the occurrence set of the RHS can be expanded to
\[ \Occ_A^0 \; \cup \; (\bigcup_{i} \Occ_A^i [\Occ_{B_i}])^0 \; \cup
\; \bigcup_{j}((\bigcup_i \Occ_A^i [\Occ_{B_i}^j])[\Occ_{C_j}] .
\] Equating terms, the equality of these two sets follows from the
fact that $\rho (m[m']) = \rho(m')$, and hence $S[T]^i = S[T^i]$,
and that $m_1[m_2[m_3]] = (m_1[m_2])[m_3]$, and hence $S[T[U]] =
(S[T])[U]$..

Next we show that the conditions on plays on the two sides of the
equation are equivalent. Expanding the condition on plays on the LHS of the equation we see that $s \in P_{A[B_1 [\vec{C}], \ldots , B_k [\vec{C}]]}$ if:
\begin{enumerate}
\item $s \restrict A \in P_A$
\item $\forall i. \, \forall m \in \Occ_A^i . \, s \restrict m \restrict B_i \in P_{B_i}$
\item $\forall i. \, \forall j. \, \forall m \in \Occ_A^i . \,\forall m' \in \Occ_{B_i}^j . \, s \restrict m \restrict m' \in P_{C_j}$
\end{enumerate}
Similarly, expanding the condition on plays on the RHS yields:
\begin{enumerate}
\item $s \restrict A[\vec{B}] \restrict A \in P_A$
\item $\forall i. \, \forall m \in \Occ_A^i . \, s \restrict A[\vec{B}] \restrict m \in P_{B_i}$
\item $\forall j. \, \forall m \in \Occ_{A[\vec{B}]}^j . \, s \restrict m\in P_{C_j}$.
\end{enumerate}
Note firstly that for any $m \in \Occ_{A[\vec{B}]}^j$, for some $i$, $m = m_1[m_2]$ for $m_1 \in \Occ_A^i$, $m_2 \in \Occ_{B_i}^j$.
Now equating terms, we see that the equivalence of the two conditions is implied by the following equations:
\[ \begin{array}{ll}
1. &  s \restrict A[\vec{B}] \restrict A = s \restrict A \\
2. &  s \restrict A[\vec{B}] \restrict m = s \restrict m \restrict B_i \quad (m \in \Occ_A^i ) \\
3. & s \restrict m_1 \restrict m_2 = s \restrict m_1[m_2]
\end{array} \]
These equations are easily verified from the definitions of the
projection operations. Firstly, note that every move in these
games has the form (1) $m_1[m_2[m_3]]$, where for some $i$, $j$: $m_1 \in
\Occ_A^i$, $m_2 \in \Occ_{B_i}^j$, and $m_3 \in \Occ_{C_j}$; or the
form (2) $m_1[m_2]$, where $m_1 \in \Occ_A^0$; or (3) $m_1[m_2[m_3]]$, where $m_1 \in
\Occ_A^i$, $m_2 \in \Occ_{B_i}^0$. The LHS
of equation (1) projects  a move (1) firstly onto $m_1[m_2]$, then
onto $m_1$, whereas the RHS projects it directly onto $m_1$. Moves of
the form (2) are left unchanged in both cases; while moves of the
form (3) are projected onto $m_1$ in both cases.
In
equation (2), the effect of  the projection operations on both sides
of the equation is to
restrict the sequence to moves of the form $m[m_2[m_3]]$, and to
project each such move onto $m_2$. Finally, the effect of both sides
of equation (3) is to project $m_1[m_2[m_3]]$ onto $m_3$.

The argument for the coincidence of the equivalence relations is
similar.
\end{proof}

\noindent For each $i > 0$ we define the variable game $X_i$ as follows.
\[ \begin{array}{lcl}
\Occ_{X_i} & = & \{ i \} \\
P_{X_i} & = & M_{X_i}^{\ast} \\
s \gequiv_{X_i} t & \equiv & | s | = | t |
\end{array}
\]

\begin{proposition}
\label{sub0}
\begin{enumerate}
\item For all $B_1 , \ldots B_k \in \GG{\omega}$, $i \leq k$: $X_i [B_1 , \ldots B_k] \; = \; B_i$.
\item For all $A \in \GG{k}$: $A[X_1 , \ldots , X_k ] \; = \; A$.
\end{enumerate}
\end{proposition}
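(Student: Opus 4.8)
The plan is to prove each identity by a direct unwinding of the definitions of $\Occ_{A[\vec{B}]}$, $P_{A[\vec{B}]}$ and $\gequiv_{A[\vec{B}]}$, specialised to the degenerate games $X_i$. Two elementary facts about substitution in $\MM$ do all the real work: (i) for the bare generator $i$ and any $m' \in \MM$ we have $i[m'] = m'$, since $h_{m'}$ sends every generator (in particular $i$) to $m'$; and (ii) if $\rho(m) = i$ then $m[i] = m$, since $m$ is obtained from the generator $i$ by a stack of unary operations and $h_i$ fixes $i$ and commutes with those operations (this genuinely uses $\rho(m)=i$). Fact (i) underlies part 1, fact (ii) underlies part 2.

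For part 1: as $\Occ_{X_i} = \{i\}$ we have $\Occ_{X_i}^i = \{i\}$ and all other graded components empty, so $\Occ_{X_i[\vec{B}]} = \{\, i[m'] \mid m' \in \Occ_{B_i} \,\}$, which by (i) is literally $\Occ_{B_i}$; since $\rho(i[m']) = \rho(m')$ (Proposition~\ref{sublr}) the graded components, and hence $M_{X_i[\vec{B}]}$, agree with those of $B_i$ as well. For plays, every move $n$ satisfies $n = i[n]$, so the single-move projection $s \restrict i$ equals $s$, while $s \restrict X_i$ rewrites each move to $i$ (as $\rho(i) = i > 0$) and hence lies in $M_{X_i}^{\ast} = P_{X_i}$ for free; the defining condition on $P_{X_i[\vec{B}]}$ thus reduces to $s \in P_{B_i}$. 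The only side condition coming from $X_i$ in $\gequiv_{X_i[\vec{B}]}$ is $s \restrict X_i \gequiv_{X_i} t \restrict X_i$, i.e.\ $|s| = |t|$, which is already entailed by $s \restrict i \gequiv_{B_i} t \restrict i$ via \textbf{(e1)} for $B_i$; so $\gequiv_{X_i[\vec{B}]}$ collapses to $\gequiv_{B_i}$.

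For part 2: by (ii), $\{\, m[i] \mid m \in \Occ_A^i \,\} = \Occ_A^i$, so $\Occ_{A[X_1,\ldots,X_k]} = \Occ_A^0 \cup \bigcup_{i=1}^k \Occ_A^i = \Occ_A$, the last equality because $A \in \GG{k}$ forces the range of $\rho_A$ into $\{0,\ldots,k\}$; the graded components, hence $M_{A[X_1,\ldots,X_k]}$, likewise coincide with those of $A$. For plays and equivalences one checks, over $s \in M_A^{\ast}$: that $s \restrict A = s$ (projection onto the unambiguous set $\Occ_A$ acts as the identity, unambiguity pinning down the decomposition of each move), and that for $m \in \Occ_A^i$ the only move of $M_A$ which is an instance $m[m']$ of $m$ is $m$ itself — any other would be unifiable with, hence equal to, a distinct element of $\Occ_A$ — so $s \restrict m = i^{\,\numoccs{m}{s}}$, which always lies in $M_{X_i}^{\ast} = P_{X_i}$ and whose length is matched by $t \restrict \pi(m)$ under any occurrence-correspondence $\pi$ preserving $\rho_A$. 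Hence the $X_i$-indexed side conditions in the definitions of $P_{A[X_1,\ldots,X_k]}$ and $\gequiv_{A[X_1,\ldots,X_k]}$ are vacuous, and both reduce to $P_A$ and $\gequiv_A$ respectively.

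The step I expect to be the main obstacle is precisely the projection bookkeeping in part 2: making it rigorous that $s \restrict m$ picks up no move of a play of $A$ other than $m$ itself (this is where unambiguity of $\Occ_A$ is essential), and that the correspondence $\pi$ witnessing $s \restrict A \gequiv_A t \restrict A$ sends all occurrences of a move $m$ to occurrences of a single move $\pi(m)$ of the same $\rho_A$-value, so that $s \restrict m \gequiv_{X_i} t \restrict \pi(m)$ holds automatically. Everything else is routine unfolding of the definitions.
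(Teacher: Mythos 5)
Your proof is correct. The paper states this proposition without proof, and your argument — unwinding the definitions of substitution using the two identities $i[m'] = m'$ and $m[\rho(m)] = m$, together with unambiguity of $\Occ_A$ to pin down the projections $s \restrict A$ and $s \restrict m$ — is precisely the routine verification the authors leave implicit, including the correct observation that the $X_i$-side conditions on plays and equivalences are vacuous because $P_{X_i} = M_{X_i}^{\ast}$ and $\gequiv_{X_i}$ is just equality of lengths.
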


\begin{proposition}
\label{sub1}
The cartesian closed structure commutes with substitution:
\begin{enumerate}
\item $(A \Rightarrow B)[\vec{C}] \;\; = \;\; A[\vec{C}]
  \Rightarrow B[\vec{C}]$.
\item $(A \llwith B)[\vec{C}] \;\; = \;\; A[\vec{C}]
  \llwith B[\vec{C}]$.
\end{enumerate}
\end{proposition}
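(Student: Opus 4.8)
\section*{Proof proposal for Proposition~\ref{sub1}}

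The plan is to prove both equations exactly as Proposition~\ref{sub00} was proved: expand each side into its three constituents --- the occurrence set $\Occ$, the position set $P$, and the equivalence $\gequiv$ --- and check that these agree, reducing everything to a few identities between projection operations. The conceptual reason the equations hold is that the substitution maps $h_{m'}$ (and $h_{m'/i}$) are algebra homomorphisms on $\MM$, so they commute with each unary operation, $\phi(m)[m'] = \phi(m[m'])$ for $\phi \in \{\al,\ar,\fr\}\cup\{\fl{i} \mid i\in\omega\}$, while $\rho$ is invariant under all these operations and $\lambda$ behaves as recorded in Proposition~\ref{sublr}; thus the ``outer'' structure added by the cartesian closed constructions and the ``inner'' rewriting of generators performed by substitution are independent and slide past one another. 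I will treat the function-space case; the product case is strictly easier, being the same bookkeeping with the tags $\al,\ar$ in place of $\fl{i},\fr$, and with positions that are merely tagged copies of those of $A$ and $B$ and no index permutation to keep track of.

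For the occurrence sets I would just compute. Since $\rho(\fl{i}(m)) = \rho(\fr(m)) = \rho(m)$, the sets $\Occ_{A\Rightarrow B}^{j}$ ($j\ge 0$) are precisely the $\fl{i}$-tagged and $\fr$-tagged copies of $\Occ_A^{j}$ and $\Occ_B^{j}$. Forming $[\vec C]$ and pushing the substitution through the tags via $\phi(m)[m'] = \phi(m[m'])$ turns, for $j>0$, the set $\{\fl{i}(m)[c]\mid m\in\Occ_A^{j},\ c\in\Occ_{C_j}\}$ into $\{\fl{i}(n)\mid n\in\Occ_{A[\vec C]}^{j}\}$, and similarly for $\fr$; collecting all the pieces gives precisely $\{\fl{i}(n)\mid i\in\omega,\ n\in\Occ_{A[\vec C]}\}\cup\{\fr(n)\mid n\in\Occ_{B[\vec C]}\} = \Occ_{A[\vec C]\Rightarrow B[\vec C]}$. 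The same computation identifies the two $M$-sets, since each $M$ is determined by its $\Occ$.

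For the positions, since both sides are already known to be games, each $P$ consists of the sequences over the (now common) $M$-set satisfying the alternation condition together with the relevant projection conditions; the alternation condition is literally the same condition on literally the same moves, so --- unlike in Proposition~\ref{sub00} --- it needs no separate treatment, and it suffices to show the projection conditions coincide. These fall out of the identities
\[ (s\restrict(A\Rightarrow B))\restrict\fl{i}(1) = (s\restrict\fl{i}(1))\restrict A, \qquad (s\restrict(A\Rightarrow B))\restrict\fr(1) = (s\restrict\fr(1))\restrict B, \]
\[ s\restrict\fl{i}(m) = (s\restrict\fl{i}(1))\restrict m \quad (m\in\Occ_A^{j},\ j>0), \qquad s\restrict\fr(m) = (s\restrict\fr(1))\restrict m \quad (m\in\Occ_B^{j},\ j>0), \]
each verified by unwinding the inductive definitions of projection onto a move and onto a set, using the unambiguity of $\Occ_A$, $\Occ_B$ and of $S=\{\fl{i}(1)\mid i\in\omega\}\cup\{\fr(1)\}$ together with left-cancellativity. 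The first pair turns ``$s\restrict(A\Rightarrow B)\in P_{A\Rightarrow B}$'' into the $A$- and $B$-parts of ``$s\restrict\fl{i}(1)\in P_{A[\vec C]}$ and $s\restrict\fr(1)\in P_{B[\vec C]}$'', and the second pair shows that the remaining clauses ``$s\restrict m\in P_{C_j}$ for $m\in\Occ_{A\Rightarrow B}^{j}$'' supply exactly the missing $C_j$-parts; so the two position sets coincide.

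The equivalence relations are where I expect the only genuine work. After the reductions above, both $\gequiv_{(A\Rightarrow B)[\vec C]}$ and $\gequiv_{A[\vec C]\Rightarrow B[\vec C]}$ come down to: there is a permutation $\alpha\in S(\omega)$ with $\breve\alpha^{\ast}(s\restrict S) = t\restrict S$, together with component-wise equivalences of the $\fl{i}$- and $\fr$-restrictions; but on the left the component conditions are threaded through the single bijection $\pi:s\restrict(A\Rightarrow B)\longleftrightarrow t\restrict(A\Rightarrow B)$ coming from the outer substitution, whereas on the right they are threaded through the per-component bijections $\sigma_i$ on the $A$-parts and $\tau$ on the $B$-part. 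The key lemma I must establish is that $\pi$ decomposes along the game structure: a $\lambda$- and $\rho$-preserving positional bijection between plays of $A\Rightarrow B$ necessarily carries $\fl{i}$-occurrences to $\fl{\alpha(i)}$-occurrences, with the same $\alpha$ witnessed by $\breve\alpha$, and $\fr$-occurrences to $\fr$-occurrences, and the bijections it induces on the $A$- and $B$-parts are exactly the $\sigma_i$ and $\tau$ --- and conversely such data reassemble into a legitimate $\pi$. Granted this decomposition, combining it with the projection identities above makes the two phrasings of the component-wise equivalence clauses interchangeable, which closes the $\Rightarrow$ case; re-running the whole argument with $\al,\ar$ in place of $\fl{i},\fr$ then disposes of the product case.
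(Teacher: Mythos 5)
The paper states Proposition~\ref{sub1} without proof, treating it as routine; your proposal fills in exactly the verification the paper has in mind, in the same style as its proof of Proposition~\ref{sub00}: match the occurrence sets via the homomorphism identity $\phi(m)[m'] = \phi(m[m'])$, match the position sets via projection identities, and decompose the positional bijection $\pi$ to match the equivalence relations. This is correct, and your identification of the $\pi$-decomposition as the only point of substance in the $\gequiv$ clause is exactly right.

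One small imprecision: the claim that the alternation condition ``is literally the same condition on literally the same moves'' and so needs no separate treatment is not quite accurate. On the right-hand side alternation is imposed directly on $s$; on the left-hand side the definition of $P_{(A\Rightarrow B)[\vec{C}]}$ imposes it on the \emph{projected} sequences $s \restrict (A \Rightarrow B)$ and $s \restrict m$ (via membership in $P_{A \Rightarrow B}$ and $P_{C_j}$), which live over different move-sets. That alternation of $s$ follows from the left-hand conditions is Proposition 3.4(2); for the converse inclusion you also need that alternation of $s$ together with the $C_j$-conditions forces alternation of $s \restrict (A \Rightarrow B)$, which is obtained by running the parity-identity chain (1)--(6) from that proof in the other direction (deriving (5) from the conclusion and (6)). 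This is immediate from material already in the paper, so it is a one-line repair rather than a gap, but it should be said rather than dismissed.
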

Combining Propositions~\ref{sub0} and \ref{sub1}, we obtain:

\begin{proposition}
The cartesian closed constructions can be obtained by substitution
from their generic forms:
\[ \begin{array}{lccc}
1. & A \Rightarrow B & = & (X_1 \Rightarrow X_2)[A, B] \\
2. & A \llwith B &  = &  (X_1 \llwith X_2) [A, B] .
\end{array} \]
\end{proposition}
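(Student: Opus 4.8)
The plan is to treat this statement purely as a corollary of Propositions~\ref{sub0} and~\ref{sub1}, composing the two rather than re-examining any of the game data. First I would pin down the setting: assume $A, B \in \GG{l}$ (if $A$ and $B$ are a priori in different $\GG{l_i}$, replace $l$ by the maximum and use the inclusions $\GG{k} \subseteq \GG{l}$ noted after the definition of variable games), and observe that $X_1 \Rightarrow X_2$ and $X_1 \llwith X_2$ both lie in $\GG{2}$: indeed their occurrences are of the forms $\fl{i}(1)$, $\fr(2)$ (respectively $\al(1)$, $\ar(2)$), so the only generators appearing are $1$ and $2$, and hence $\rho$ takes values in $\{1,2\}$. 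Thus the binary substitution operation $\GG{2} \times \GG{l}^{2} \to \GG{l}$ of Section~3.4 applies, and $(X_1 \Rightarrow X_2)[A,B]$ and $(X_1 \llwith X_2)[A,B]$ are well-defined games.

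Next I would apply Proposition~\ref{sub1} (the cartesian closed structure commutes with substitution), instantiated with the generic games $X_1, X_2 \in \GG{2}$ in place of $A, B$ and with the vector $(A, B)$ in place of $\vec{C}$, to obtain
\[ (X_1 \Rightarrow X_2)[A,B] \;=\; X_1[A,B] \Rightarrow X_2[A,B], \qquad (X_1 \llwith X_2)[A,B] \;=\; X_1[A,B] \llwith X_2[A,B] . \]
Then I would invoke Proposition~\ref{sub0}(1) with $k = 2$, $B_1 = A$, $B_2 = B$, which gives $X_1[A,B] = A$ and $X_2[A,B] = B$. Substituting these into the right-hand sides above yields $A \Rightarrow B$ and $A \llwith B$ respectively, which is exactly what was to be shown.

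The only point that needs any attention --- and it is entirely bookkeeping --- is the arity indexing: one must check that the instances of substitution used here really are instances of the operation for which Propositions~\ref{sub0} and~\ref{sub1} were proved, in particular that $A$ and $B$ are taken in a common $\GG{l}$ and that $X_1, X_2$ are regarded as elements of $\GG{2}$ (taking them in a larger $\GG{m}$ would leave everything unchanged but should be mentioned). No new claim about occurrence sets, position sets, or the equivalence relations has to be verified, since all of those were already discharged in the proofs of the two cited propositions; so I expect the proof to be two or three lines once the setup sentence is in place, with no real obstacle.
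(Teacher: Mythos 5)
Your proposal is correct and matches the paper exactly: the paper derives this proposition by "combining Propositions~\ref{sub0} and~\ref{sub1}", i.e.\ applying commutation of the cartesian closed structure with substitution and then the identity $X_i[\vec{B}] = B_i$, just as you do. The bookkeeping remarks about arities are fine but not needed beyond what the cited propositions already cover.
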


\section{Constructing a Universe for Polymorphism}
\label{order}
\subsection{Two Orders on Games}
We will make use of two partial orders on games.

\begin{itemize}
\item The \emph{approximation order} $A \domapprox B$. This will be
  used in constructing games as solutions of recursive equations.

\item The \emph{inclusion order} $A \ginc B$. This will be used to
  define a notion of ``subgame'' within a suitable ``universal game''
  in our construction of a model of System F.
\end{itemize}

\subsubsection{The Approximation Order}
We define $A \domapprox B$ if:
\begin{itemize}
\item $\Occ_A \subseteq \Occ_B$
\item $P_A = P_B \cap M_A^{\ast}$
\item $s \gequiv_A t \;\; \Longleftrightarrow \;\; s \in P_A \; \wedge \;
  s \gequiv_B t$
\end{itemize}
Thus if we are given $B$ and $\Occ_A \subseteq \Occ_B$, then $A$ is
completely determined by the requirement that $A \domapprox B$. Note
that if $A \domapprox B$ and $\Occ_A = \Occ_B$, then $A = B$.

This order was studied in the context of AJM games in \cite{AM95}, and the
theory of recursively defined games was developed there and shown to
work very smoothly, in direct analogy with the treatment of recursion
on Scott information systems \cite{Win93}. All of this theory carries over
to the present setting essentially unchanged. The main facts which we
will need can be summarized as follows.
\begin{proposition}
\label{domapprox}
\begin{enumerate}
\item $(\GG{\omega}, {\domapprox})$ is a (large) cpo, with least upper
  bounds of directed sets being given by componentwise unions.
\item All the standard constructions on games, in particular
  product and function space, are monotonic and continuous with
  respect to the approximation order.
\item If a function $\GG{\omega} \longrightarrow \GG{\omega}$ is
  $\domapprox$-monotonic, and continuous on move-sets, then it is
  $\domapprox$-continuous.
\end{enumerate}
\end{proposition}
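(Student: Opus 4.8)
The plan is to follow the analogy with recursively defined AJM games and Scott information systems flagged in the text (cf.\ \cite{AM95}): establish part~(1) first, derive part~(3) as a short consequence of it, and then obtain part~(2) from~(3) together with routine computations on move-sets. Throughout, the one technical observation doing all the work for~(1) is that any finite sequence, any finite set of moves, and any pair of $\gequiv$-related sequences involves only finitely much data, so by directedness already lives inside a single member of the directed set.

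For part~(1), I would first check that $\domapprox$ is a partial order. Reflexivity is immediate; antisymmetry is precisely the remark recorded just before the proposition (if $A \domapprox B$ and $\Occ_A = \Occ_B$ then $A = B$); transitivity uses that $\Occ_A \subseteq \Occ_B$ forces $M_A \subseteq M_B$ (since $\rho$ is preserved, $\Occ_A^j \subseteq \Occ_B^j$ for every $j$), so from $A \domapprox B \domapprox C$ one gets $P_A = P_B \cap M_A^{\ast} = (P_C \cap M_B^{\ast}) \cap M_A^{\ast} = P_C \cap M_A^{\ast}$, and similarly for the equivalence. The empty game $\mathbf{1}$ is the least element, since $M_{\mathbf 1}^{\ast} = \{ \varepsilon \}$ and every $P_A$ is nonempty and prefix-closed, hence contains $\varepsilon$. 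For directed least upper bounds, given a directed $D = \{ A_\delta \}$ I would take $A$ defined componentwise by $\Occ_A = \bigcup_\delta \Occ_{A_\delta}$, $P_A = \bigcup_\delta P_{A_\delta}$, $\gequiv_A = \bigcup_\delta \gequiv_{A_\delta}$. That $A$ is a well-defined variable game follows because unambiguity of $\Occ_A$, nonemptiness and prefix-closure of $P_A$, the alternation condition, and \textbf{(e1)}--\textbf{(e3)} are all conditions that can be checked on finitely much data, so by the localization observation they reduce to the corresponding conditions on a single $A_\delta$ (using also that $A_\delta \domapprox A_{\delta'}$ gives $P_{A_\delta} \subseteq P_{A_{\delta'}}$ and $\gequiv_{A_\delta} \subseteq \gequiv_{A_{\delta'}}$ to transport hypotheses into a common $A_{\delta'}$). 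That $A$ is an upper bound amounts to the equation $P_{A_\delta} = P_A \cap M_{A_\delta}^{\ast}$, again immediate from localization; and if $B$ is any upper bound, then $\Occ_A \subseteq \Occ_B$ and, for $s \in P_B \cap M_A^{\ast}$, finiteness places $s$ inside some $M_{A_\delta}^{\ast}$, whence $s \in P_B \cap M_{A_\delta}^{\ast} = P_{A_\delta} \subseteq P_A$, so $A \domapprox B$. Verifying \textbf{(e3)} for $A$ and this last equation are the fiddliest points, but neither is deep.

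Part~(3) is then short. Let $F : \GG{\omega} \longrightarrow \GG{\omega}$ be $\domapprox$-monotonic and continuous on move-sets, and let $D$ be directed. By monotonicity $\{ F(A) \mid A \in D \}$ is directed and $F(A) \domapprox F(\bigsqcup D)$ for every $A \in D$, so $\bigsqcup_{A \in D} F(A) \domapprox F(\bigsqcup D)$. By part~(1) the left-hand lub is computed componentwise, so $\Occ_{\bigsqcup_{A \in D} F(A)} = \bigcup_{A \in D} \Occ_{F(A)}$, and by continuity on move-sets this equals $\Occ_{F(\bigsqcup D)}$. Two games related by $\domapprox$ with the same occurrence set coincide, so $\bigsqcup_{A \in D} F(A) = F(\bigsqcup D)$.

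For part~(2), monotonicity of $\llwith$ and $\Rightarrow$ is a direct unwinding of the definitions: if $A \domapprox A'$ and $B \domapprox B'$, then $\Occ_{A \llwith B} \subseteq \Occ_{A' \llwith B'}$ is obvious, and $P_{A' \llwith B'} \cap M_{A \llwith B}^{\ast} = P_{A \llwith B}$ because a position of $A' \llwith B'$ lies entirely in the left or in the right component while the componentwise restrictions $P_{A'} \cap M_A^{\ast} = P_A$ and $P_{B'} \cap M_B^{\ast} = P_B$ hold by hypothesis; the function-space case is the same once one notes that for $s \in M_{A \Rightarrow B}^{\ast}$ the projections $s \restrict \fl{i}(1)$ and $s \restrict \fr(1)$ land in $M_A^{\ast}$ and $M_B^{\ast}$ respectively, so the defining conditions of $P_{A' \Rightarrow B'}$ collapse to those of $P_{A \Rightarrow B}$ (and similarly for the equivalences). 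Continuity on move-sets is immediate, since $\Occ_{A \llwith B}$ and $\Occ_{A \Rightarrow B}$ are obtained from $\Occ_A$ and $\Occ_B$ by images under $\al, \ar, \fl{i}, \fr$ and unions, all of which commute with directed unions. Hence by part~(3), applied in each argument separately and using the standard fact that a monotone function of two arguments continuous in each separately is continuous on the product cpo, both constructions are $\domapprox$-continuous. The only real labour in the whole statement is the bookkeeping in~(1); there is no conceptual obstacle, the development being parallel to that of \cite{AM95}.
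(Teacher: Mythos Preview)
Your proposal is correct and aligns with the paper's approach: the paper does not give a proof of this proposition at all, but simply remarks that the approximation order was studied in \cite{AM95} in the AJM setting and that the theory ``carries over to the present setting essentially unchanged.'' Your sketch is exactly a faithful unpacking of that deferred argument---the directed-lub construction via componentwise unions and the finitary-localization observation for part~(1), the ``equal occurrence sets plus $\domapprox$ implies equality'' trick for part~(3), and the reduction of~(2) to~(3) via monotonicity and move-set continuity---so there is nothing to compare beyond noting that you have supplied the details the paper omits.
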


\noindent Thus if
\[ F : (\GG{\omega}, {\domapprox}) \longrightarrow (\GG{\omega},
{\domapprox}) \]
is continuous, we can solve the recursive equation
\[ X \;\; = \;\; F(X) \]
using the least fixed point theorem in the standard fashion to
construct a least solution in $\GG{\omega}$.

\subsubsection{The Inclusion Order}
We define $A \ginc B$ by:
\begin{itemize}
\item $\Occ_A \subseteq \Occ_B$
\item $P_A \subseteq P_B $
\item $s \gequiv_A t \;\; \Longleftrightarrow \;\; s \in P_A \; \wedge \;
  s \gequiv_B t$
\end{itemize}
Thus the only difference between the two orders is the condition
on plays. Note that
\[ A \domapprox B \;\; \Longrightarrow \;\; A \ginc B . \]
The inclusion order is useful in the following context. Suppose we fix
a ``big game'' $\UU$ to serve as a ``universe''. Define a
\emph{sub-game} of $\UU$ to be a game of the form
\[ A = (\Occ_{\UU}, P_A , \gequiv_{\UU} \cap P_A^2 ) , \]
where $P_A \subseteq P_{\UU}$, and
\[ s \in P_A \; \wedge \; s \gequiv_{\UU} t \;\; \Longrightarrow \;\; t \in P_A . \]
Thus sub-games of $\UU$ are completely determined by their sets of
positions. We write $\SubU$ for the set of sub-games of
$\UU$. Note that, for $A, B \in \SubU$:
\[ A \ginc B \;\; \Longleftrightarrow \;\; P_A \subseteq P_B . \]
\begin{proposition}
\label{SubU}
\begin{enumerate}
\item $\SubU$ is a complete lattice, with meets and joins given by
  intersections and unions respectively.
\item If $S \subseteq P_{\UU}$, then the least sub-game $A \in \SubU$
  such that $S \subseteq P_A$ is defined by
\[ P_A = \{ u \mid \exists s \in S. \, \exists t. \, t \sqsubseteq s
\; \wedge \; u \gequiv_{\UU} t \} . \]
\end{enumerate}
\end{proposition}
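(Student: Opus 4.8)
The plan is to reduce both parts to a single characterization of which subsets of $P_{\UU}$ occur as position-sets of sub-games, and then to argue everything at the level of those subsets, using the remark recorded just before the statement that $A \ginc B \Longleftrightarrow P_A \subseteq P_B$ for $A, B \in \SubU$. Concretely, I would first prove: a subset $P \subseteq P_{\UU}$ is the position-set of some sub-game of $\UU$ if and only if $P$ is non-empty, prefix-closed, and closed under $\gequiv_{\UU}$ (\ie $s \in P$ and $s \gequiv_{\UU} t$ imply $t \in P$). Necessity is immediate from the definitions of game and of sub-game. For sufficiency one must check that, for such a $P$, the relation $\gequiv_{\UU} \cap P^2$ still satisfies $\mathbf{(e1)}$--$\mathbf{(e3)}$: $\mathbf{(e1)}$ is inherited directly; $\mathbf{(e2)}$ follows from $\mathbf{(e2)}$ for $\UU$ once one notes that prefix-closure returns the shorter sequences to $P$; and $\mathbf{(e3)}$ follows from $\mathbf{(e3)}$ for $\UU$ once one notes that the witness it produces is $\gequiv_{\UU}$-equivalent to an element of $P$, hence lies in $P$ by $\gequiv_{\UU}$-closure.

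For part (1), I would check that each of the three properties ``non-empty'', ``prefix-closed'', ``$\gequiv_{\UU}$-closed'' is preserved under arbitrary intersections of families of such subsets (non-emptiness because $\varepsilon$ belongs to every non-empty prefix-closed set, and the empty intersection being all of $P_{\UU}$) and under arbitrary non-empty unions. By the characterization it follows that arbitrary meets in $\SubU$ are computed by intersection, with the empty meet being $\UU$ itself, so $(\SubU, \ginc)$ is a complete lattice; and that non-empty joins are computed by union, the empty join being the sub-game with position-set $\{ \varepsilon \}$. This gives part (1) with meets and joins as claimed.

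For part (2), fix $S \subseteq P_{\UU}$ and let $P_A$ denote the set displayed in the statement. Instantiating the defining condition with $t := u$ shows $\Pref(S) \subseteq P_A$ (using reflexivity of $\gequiv_{\UU}$ and prefix-closure of $P_{\UU}$), so $S \subseteq P_A$; and $P_A \subseteq P_{\UU}$ since every $u \in P_A$ is $\gequiv_{\UU}$-related to a prefix of some $s \in S \subseteq P_{\UU}$. The set $P_A$ is $\gequiv_{\UU}$-closed by symmetry and transitivity of $\gequiv_{\UU}$, and it is prefix-closed: if $u \gequiv_{\UU} t$ with $t \sqsubseteq s \in S$ and $u' \sqsubseteq u$, then $\mathbf{(e1)}$ forces $|u| = |t|$, so $t$ factors as $t = t' t''$ with $|t'| = |u'|$, and then $\mathbf{(e2)}$ gives $u' \gequiv_{\UU} t'$ with $t' \sqsubseteq s$, so $u' \in P_A$. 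By the characterization, $P_A$ is the position-set of a genuine sub-game $A$ with $S \subseteq P_A$. For minimality, let $B \in \SubU$ with $S \subseteq P_B$: since $P_B$ is prefix-closed and $\gequiv_{\UU}$-closed it contains every $t \sqsubseteq s \in S$ and hence every $u$ with $u \gequiv_{\UU} t$, so $P_A \subseteq P_B$, \ie $A \ginc B$.

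Essentially all of this is routine unwinding of definitions; the single point where a game axiom does genuine work is the use of $\mathbf{(e2)}$ (with $\mathbf{(e1)}$ supplying the length bookkeeping) to see that the set $P_A$ in part (2) is prefix-closed. The only thing requiring mild care is the treatment of empty families in part (1): the empty union, equivalently the least sub-game containing $\varnothing$, must be read as the bottom sub-game with position-set $\{ \varepsilon \}$ rather than as $\varnothing$, since position-sets of games are by definition non-empty.
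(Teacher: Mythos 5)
Your proof is correct; the paper states this proposition without proof, and your argument is precisely the routine verification it implicitly relies on, with the characterization of position-sets of sub-games as the non-empty, prefix-closed, $\gequiv_{\UU}$-closed subsets of $P_{\UU}$ doing all the work. Your observations about where $\mathbf{(e2)}$ is genuinely needed (prefix-closure of the set in part (2)) and about the degenerate empty-family cases are accurate and worth making explicit.
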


\noindent It is straightforward to verify that function space and product are
monotonic with respect to the inclusion order. This leads to the
following point, which will be important for our model construction.
\begin{proposition}
\label{Uclos}
Suppose that $\UU$ is such that
\[ \UU \Rightarrow \UU \domapprox \UU, \qquad \UU \llwith \UU \domapprox
\UU , \qquad \mathbf{1} \domapprox \UU . \]
Then $\SubU$ is closed under these constructions.
\end{proposition}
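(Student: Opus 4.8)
The plan is to obtain all three closure properties from a single observation: since $\UU \Rightarrow \UU$, $\UU \llwith \UU$ and $\mathbf{1}$ all lie below $\UU$ in the approximation order $\domapprox$, they lie below $\UU$ in the coarser inclusion order $\ginc$ as well (because $A \domapprox B \Rightarrow A \ginc B$), and $\Rightarrow$, $\llwith$ are $\ginc$-monotone, as noted just before the statement. So for $A, B \in \SubU$ I would chase the inclusion $A \Rightarrow B \ginc \UU$ (and similarly $A \llwith B \ginc \UU$ and $\mathbf 1 \ginc \UU$) and then read off that these games induce sub-games of $\UU$.

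Concretely: first record that every $C \in \SubU$ satisfies $C \ginc \UU$ --- immediate from the definition of a sub-game, as $\Occ_C = \Occ_\UU$, $P_C \subseteq P_\UU$ and $\gequiv_C \,=\, \gequiv_\UU \cap P_C^2$. Then, for $A, B \in \SubU$, $\ginc$-monotonicity of function space gives $A \Rightarrow B \ginc \UU \Rightarrow \UU$; the hypothesis $\UU \Rightarrow \UU \domapprox \UU$ gives $\UU \Rightarrow \UU \ginc \UU$; and transitivity of $\ginc$ yields $A \Rightarrow B \ginc \UU$. The identical three-step chain with $\llwith$ in place of $\Rightarrow$ gives $A \llwith B \ginc \UU$, while $\mathbf{1} \ginc \UU$ is just the third hypothesis.

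It then remains to turn $C \ginc \UU$ into membership in $\SubU$. From $C \ginc \UU$ we get $P_C \subseteq P_\UU$ and $s \gequiv_C t \iff s \in P_C \wedge s \gequiv_\UU t$; in particular $P_C$ is closed under $\gequiv_\UU$ (if $s \in P_C$ and $s \gequiv_\UU t$ then $s \gequiv_C t$, so $t \in P_C$ since $\gequiv_C$ is an equivalence on $P_C$). Hence $(\Occ_\UU, P_C, \gequiv_\UU \cap P_C^2)$ is, by definition, a sub-game of $\UU$ --- equivalently, it is the least sub-game whose plays include $P_C$, via Proposition~\ref{SubU}(2), since $P_C$ is already prefix-closed and $\gequiv_\UU$-closed --- and it agrees with $C$ on plays and on the equivalence relation. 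The only point requiring any attention is that this sub-game is not literally equal to $C$: its occurrence set has been enlarged from $\Occ_C$ to $\Occ_\UU$ (and genuinely so, since $\Occ_{\UU \Rightarrow \UU} \subsetneq \Occ_\UU$ --- otherwise $\UU \Rightarrow \UU = \UU$ by the remark that $\domapprox$ with equal occurrence sets forces equality). So ``$\SubU$ is closed under these constructions'' is to be read as: applying $\Rightarrow$, $\llwith$, $\mathbf{1}$ to elements of $\SubU$ and then passing to the induced sub-game keeps one inside $\SubU$ --- a passage invisible to plays, to $\gequiv$, and hence to the cartesian closed structure. This small identification is the main (and only) subtlety; everything else is elementary order theory, so I anticipate no real obstacle.
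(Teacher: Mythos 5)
Your proof is correct and follows essentially the same route as the paper's: $\ginc$-monotonicity of $\Rightarrow$ and $\llwith$, the hypothesis $\UU \Rightarrow \UU \domapprox \UU$ (resp.\ $\UU \llwith \UU \domapprox \UU$, $\mathbf{1} \domapprox \UU$), and ${\domapprox} \subseteq {\ginc}$ combine to give $A \Rightarrow B \ginc \UU$ and its analogues. The paper simply concludes ``$A \Rightarrow B \ginc \UU$, \ie $A \Rightarrow B \in \SubU$,'' eliding the identification of $A \Rightarrow B$ with the induced sub-game on the full occurrence set $\Occ_{\UU}$; your explicit treatment of that identification (including the $\gequiv_{\UU}$-closure of the play set) is a welcome elaboration rather than a different argument.
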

\begin{proof} Firstly,
\[ A, B \in \SubU \;\; \mbox{implies} \;\; A \Rightarrow B
  \ginc \UU \Rightarrow \UU , \]
by $\ginc$-monotonicity of $\Rightarrow$. But $\UU \Rightarrow \UU
\domapprox \UU$ by assumption, and since ${\domapprox} \subseteq
{\ginc}$, $A \Rightarrow B \ginc \UU$, \ie $A \Rightarrow B
\in \SubU$.
Similarly, $\SubU$ is closed under products.
\end{proof}

We also note the following for future reference.
\begin{proposition}
\label{submon}
Substitution $A[B_1 , \ldots , B_k ]$ is both $\ginc$-monotonic and $\domapprox$-monotonic in $A$ and each $B_i$, $1 \leq i \leq k$.
\end{proposition}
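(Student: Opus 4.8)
The plan is to treat the $\ginc$ and $\domapprox$ cases uniformly and verify, in each case, the three defining clauses (on occurrence sets, on positions, on the equivalence relation) directly from the definition of $A[\vec{B}]$. Since both orders can only enlarge a game and are reflexive, it suffices to prove the joint statement: if $A \ginc A'$ and $B_i \ginc B_i'$ for $1 \leq i \leq k$ then $A[\vec{B}] \ginc A'[\vec{B'}]$, and likewise for $\domapprox$; monotonicity in each individual argument is then the special case in which the remaining games coincide. The occurrence-set clause is routine: from $\Occ_A \subseteq \Occ_{A'}$, $\Occ_{B_i} \subseteq \Occ_{B_i'}$ and the fact (Proposition~\ref{sublr}) that $\rho(m[m']) = \rho(m')$, hence $\Occ_A^j \subseteq \Occ_{A'}^j$, one reads off $\Occ_{A[\vec{B}]} \subseteq \Occ_{A'[\vec{B'}]}$, and therefore $M_{A[\vec{B}]} \subseteq M_{A'[\vec{B'}]}$ and $M_{A[\vec{B}]}^{\ast} \subseteq M_{A'[\vec{B'}]}^{\ast}$.

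The observation that drives the rest is that the projection operations are \emph{absolute}: because occurrence sets are unambiguous, the decomposition of a move $n = m[m']$ with $m$ in a given unambiguous set is unique (by left-cancellativity and unification uniqueness), so for any $s \in M_{A[\vec{B}]}^{\ast}$ the projections $s \restrict A$ and $s \restrict A'$ are literally the same sequence, and $s \restrict m$ does not depend on the ambient game. Given this, the position clause for $\ginc$ is immediate: if $s \in P_{A[\vec{B}]}$ then $s \restrict A = s \restrict A' \in P_A \subseteq P_{A'}$ and, for each $m \in \Occ_A^i$, $s \restrict m \in P_{B_i} \subseteq P_{B_i'}$, so $s \in P_{A'[\vec{B'}]}$. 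For $\domapprox$ one must additionally prove the reverse inclusion $P_{A'[\vec{B'}]} \cap M_{A[\vec{B}]}^{\ast} \subseteq P_{A[\vec{B}]}$; for $s$ in the left-hand side, absoluteness of projection shows $s \restrict A \in M_A^{\ast}$ and $s \restrict m \in M_{B_i}^{\ast}$, whence $s \restrict A \in P_{A'} \cap M_A^{\ast} = P_A$ and $s \restrict m \in P_{B_i'} \cap M_{B_i}^{\ast} = P_{B_i}$ by $A \domapprox A'$ and $B_i \domapprox B_i'$, so $s \in P_{A[\vec{B}]}$.

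The equivalence clause, which is common to both orders (they impose the same condition on $\gequiv$), is the step I expect to be the main obstacle. One must show $s \gequiv_{A[\vec{B}]} t \iff s \in P_{A[\vec{B}]} \wedge s \gequiv_{A'[\vec{B'}]} t$. The forward direction falls straight out of the definition together with $A \ginc A'$ and $B_i \ginc B_i'$. For the backward direction, given $s \in P_{A[\vec{B}]}$ and $s \gequiv_{A'[\vec{B'}]} t$, the key move is to use the equivalence clause of $A \ginc A'$ to pull $s \restrict A = s \restrict A' \gequiv_{A'} t \restrict A'$ back to $s \restrict A \gequiv_A t \restrict A'$ — legitimate precisely because $s \restrict A \in P_A$ — which forces $t \restrict A' \in P_A \subseteq M_A^{\ast}$; this in turn pins the $A$-component of every move of $t$ to lie in $\Occ_A$, so $t \restrict A = t \restrict A'$ and the bijection $\pi : s \restrict A \longleftrightarrow t \restrict A$ is the one already supplied. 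Then for each $m \in \Occ_A^i$, since $s \restrict m \in P_{B_i}$ and $s \restrict m \gequiv_{B_i'} t \restrict \pi(m)$, the equivalence clause of $B_i \ginc B_i'$ gives $s \restrict m \gequiv_{B_i} t \restrict \pi(m)$, and in particular $t \restrict \pi(m) \in P_{B_i}$. The remaining work — checking that, ranging over all occurrences of $\Occ_A^i$-moves in $s \restrict A$ and using that $\pi$ is a $\rho$-preserving bijection onto the occurrences of $\Occ_A^i$-moves in $t \restrict A$, these facts establish both $t \in M_{A[\vec{B}]}^{\ast}$ (each move of $t$ decomposes as $m[m']$ with $m \in \Occ_A$ and $m'$ in the appropriate $M_{B_i}$) and $t \restrict m \in P_{B_i}$ for every $m \in \Occ_A^i$, hence $t \in P_{A[\vec{B}]}$, and finally that $s \gequiv_{A[\vec{B}]} t$ — is then routine bookkeeping of the kind already carried out in the proof that $A[\vec{B}]$ is a well-defined game. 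The $\domapprox$-case of the equivalence clause is identical.
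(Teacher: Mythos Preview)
Your argument is correct and proceeds along the same lines as the paper's proof, but is considerably more thorough. The paper only sketches the $\domapprox$ case for plays: it records the key identity $s \restrict A = s \restrict A'$ for $s \in M_{A[\vec{B}]}^{\ast}$ and the consequence that membership of $s \restrict m$ in $P_{B_j}$ and $P_{B'_j}$ coincide, and stops there. You articulate and justify that identity (your ``absoluteness of projections'' observation, which is exactly the point the paper is using), carry through the occurrence-set and $\ginc$ clauses explicitly, and --- most usefully --- supply the argument for the equivalence-relation clause, which the paper omits entirely. One very minor quibble: the inclusion $\Occ_A^j \subseteq \Occ_{A'}^j$ does not need Proposition~\ref{sublr}; it follows directly from $\Occ_A \subseteq \Occ_{A'}$ since $\rho$ is a global function on $\MM$.
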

\begin{proof}
We show $\domapprox$-monotonicity for plays. Suppose $A \domapprox A'$ and $\vec{B} \domapprox \vec{B'}$. If $s \in M^{\ast}_{A[\vec{B}]}$, then $s \restrict A = s \restrict A'$, and for $m \in \Occ_A^j$, $1 \leq j \leq k$, $s \restrict m \in M_{B_j}^{\ast}$, and hence, since $B_j \domapprox B'_j$,
\[ s \restrict m \in P_{B_j} \;\; \Longleftrightarrow \;\; s \restrict m \in P_{B'_j} . \]
\end{proof}

\paragraph{Adjoints of substitution}
Let $A$ be a variable game, and $s \in P_{A[\UU/X_{i}]}$.  We can use the
substitution structure to compute the \emph{least} instance $B$ (with
respect to $\ginc$)  such
that $s \in P_{A[B/X_{i}]}$. We define
\[ A_{i}^{\ast}(s) = \{ t \mid \exists u. \, \exists  m \in \Occ_A^i . \; t \gequiv u
\; \wedge \; u \sqsubseteq s \restrict m \} \]

\begin{proposition}
\label{inv}
With notation as in the preceding paragraph, let $B = A_{i}^{\ast}(s)$.
\begin{enumerate}
\item $s \in P_{A[B/X_{i}]}$.
\item $s \in P_{A[C/X_{i}]} \;\; \Longrightarrow \;\; B \ginc C$.
\end{enumerate}
\end{proposition}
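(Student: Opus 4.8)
The plan is to recognise $B = A_{i}^{\ast}(s)$ as the least sub-game of $\UU$ whose position set contains all the projections $s \restrict m$, $m \in \Occ_A^i$, and then to read off both clauses from Proposition~\ref{SubU}(2). To see that $B$ is well defined, observe that since $s \in P_{A[\UU/X_i]}$ the play-condition defining $P_{A[\UU/X_i]}$ gives $s \restrict m \in P_{\UU}$ for every $m \in \Occ_A^i$; hence $S := \{\, s \restrict m \mid m \in \Occ_A^i \,\} \subseteq P_{\UU}$, and the formula defining $A_{i}^{\ast}(s)$ is exactly (up to a renaming of bound variables) the one in Proposition~\ref{SubU}(2) for the least sub-game $B \in \SubU$ with $S \subseteq P_B$. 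Thus $B \in \SubU$, $S \subseteq P_B$, and for every $C \in \SubU$, $S \subseteq P_C$ implies $B \ginc C$.

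For clause (1): since $B$ is a sub-game of $\UU$ we have $\Occ_B = \Occ_{\UU}$, hence $\Occ_{A[B/X_i]} = \Occ_{A[\UU/X_i]}$ and $M_{A[B/X_i]} = M_{A[\UU/X_i]}$, so $s$ is a legal sequence of moves of $A[B/X_i]$. The remaining two requirements for $s \in P_{A[B/X_i]}$ now hold: $s \restrict A \in P_A$ is the very same assertion as for $A[\UU/X_i]$, since projection onto $\Occ_A$ is insensitive to the instance, and it holds by hypothesis; and $s \restrict m \in P_B$ for each $m \in \Occ_A^i$ because $s \restrict m \in S \subseteq P_B$.

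For clause (2): assume $s \in P_{A[C/X_i]}$, with $C$ a sub-game of $\UU$ (the setting in which $\ginc$ is intended). The play-condition for $A[C/X_i]$ forces $s \restrict m \in P_C$ for all $m \in \Occ_A^i$, i.e. $S \subseteq P_C$; then $B \ginc C$ follows at once from the universal property of $B$ recorded above. Concretely, $P_C$ is prefix-closed and closed under $\gequiv_{\UU}$ because $C \in \SubU$, so $P_C$ contains every $t$ with $t \gequiv_{\UU} u \sqsubseteq s'$ for some $s' \in S$, which is precisely $P_B$; and $B \ginc C$ is equivalent to $P_B \subseteq P_C$ for sub-games.

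I expect no genuine obstacle: the substance is entirely carried by Proposition~\ref{SubU}(2) together with the shape of the play-condition for the single-variable substitution $A[B/X_i]$. The only point that needs a moment's care is the bookkeeping that $A[B/X_i]$ and $A[\UU/X_i]$ have literally the same occurrence and move sets, so that ``$s$'' continues to denote a legal sequence when the instance is changed from $\UU$ to $B$.
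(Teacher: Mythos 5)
Your proof is correct and follows essentially the same route as the paper's: both identify $A_{i}^{\ast}(s)$ as the least sub-game $B \in \SubU$ with $\{\, s \restrict m \mid m \in \Occ_A^i \,\} \subseteq P_B$ via Proposition~\ref{SubU}(2), and observe that for $C \in \SubU$ the condition $s \in P_{A[C/X_i]}$ is equivalent to that set being contained in $P_C$. Your additional bookkeeping about the occurrence and move sets of $A[B/X_i]$ versus $A[\UU/X_i]$ is a sound elaboration of details the paper leaves implicit.
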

\begin{proof}
Fix $s \in P_{A[\UU/X_{i}]}$. For $C \in \SubU$,,
\[ s \in P_{A[C/X_{i}]} \;\; \Longleftrightarrow \;\; \{ s \restrict m
\mid m \in \Occ_A^i \} \subseteq P_C . \]
By Proposition~\ref{SubU}(2), $A_{i}^{\ast}(s)$ is the least $B \in
\SubU$ containing this set.
\end{proof}

\subsection{The Relative Polymorphic Product}
Given $A, B \in \GG{\omega}$ and $i > 0$, we define the relative
polymorphic product $\Pi_{i} (A, B)$ (the ``second-order quantification over $X_i$ in the
variable type $A$ relative to the universe $B$'') as follows.
\[ \Occ_{\Pi_{i}(A, B)} = \Occ_{A}[0/i] = \{ m[0/i] \mid m \in \Occ_{A} \} . \]
\[
P_{\Pi_{i}(A, B)} = \{ s \in P_{A[B/X_{i}]}  \mid  \forall t \cdot a
\sqsubseteq^{\mbox{{\scriptsize even}}} s.  \;
A_{i}^{\ast}(t \cdot a) = A_{i}^{\ast}(t) \}
\]
\[ s \gequiv_{\Pi_{i}(A, B)} t \;\; \Longleftrightarrow \;\; s
\gequiv_{A[B/X_{i}]} t . \]
To understand the definition of $P_{\Pi_{i}(A, B)}$, it is helpful to
consider the following alternative, inductive definition (\textit{cf.}
\cite{Abr96}):
\[ \begin{array}{llll}
P_{\Pi_{i}(A, B)} &=& & \{\epsilon\} \\
 & & \cup & \{ sa \mid s\in P_{\Pi_{i}(A, B)}^{\mbox{{\scriptsize even}}} \; \wedge\; \exists C \in \SubB . \,
sa\in P_{A[C]} \}\\
        & & \cup & \{ sab \mid sa\in  P_{\Pi_{i}(A, B)}^{\mbox{{\scriptsize
       odd}}} \;
\wedge \; \forall C \in \SubB .\, sa\in P_{A[C]}\ \Rightarrow \ sab\in P_{A[C]}
 \}
\end{array}
\]
The first clause in the definition of $P_{\Pi(F)}$
is the basis of the induction.
The second clause refers to positions in which it is Opponent's turn to move.
It says that Opponent may play in any way which is valid in {\em some}
instance. The final clause refers
to positions in which it is Player's turn to move.
It says that Player can only move in a fashion which is valid in {\em every}
possible instance. The equivalence of this definition to the one given
above follows easily from Proposition~\ref{inv}.

Intuitively, this definition says that initially, nothing is known about which instance we are playing in.
Opponent progressively reveals the ``game board'' ;
at each stage, Player is constrained to play within the instance
{\em thus far revealed} by Opponent.

The advantage of the definition we have given above is that it avoids
quantification over subgames of $B$ in favour of purely local
conditions on the plays.

\begin{proposition}
\label{sub2}
The relative polymorphic product commutes with substitution.
\begin{enumerate}
\item $\Pi_{i}(A, B)[C/X_{i}] = \Pi_{i}(A, B)$.
\item If $A \in \GG{k+1}$ and $C_1 , \ldots , C_k \in \GG{n}$,  then:
\[ \Pi_{k+1}(A, B)[\vec{C}] \;\; = \;\; \Pi_{n+1}(A[\vec{C}, X_{n+1}], B) . \]
\end{enumerate}
\end{proposition}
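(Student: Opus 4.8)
The plan is to prove each part by unwinding the definitions of the relative polymorphic product and of substitution, and reducing the claimed equalities to facts already available — chiefly Propositions~\ref{sub00}, \ref{sub0}, \ref{sub1}, \ref{submon}, and \ref{inv}, together with the basic identities on occurrences ($\rho(m[m']) = \rho(m')$, associativity and left-cancellativity of substitution). Since a game in $\SubU$-style form is determined by its occurrence set, its set of plays, and its equivalence relation, in each case I would check these three components separately, and as usual the equivalence-relation check will be a routine variant of the play check via the correspondence $\pi$.

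For part~1, the key observation is that $\Occ_{\Pi_i(A,B)} = \Occ_A[0/i]$ contains no occurrence of $X_i$ (every move has been forced to have $\rho = 0$ in that coordinate), so substituting $C$ for $X_i$ does nothing: $\Occ_{\Pi_i(A,B)}[C/X_i] = \Occ_{\Pi_i(A,B)}$ because $\Occ^i_{\Pi_i(A,B)} = \varnothing$. The condition on plays for $A[0/i]$-shaped games then becomes vacuous in the $i$-th coordinate, so $P_{\Pi_i(A,B)[C/X_i]} = P_{\Pi_i(A,B)}$, and similarly for the equivalence relations. This part should be essentially immediate once the occurrence-set identity is noted.

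Part~2 is the substantive one. First I would handle occurrences: expanding $\Occ_{\Pi_{k+1}(A,B)[\vec{C}]}$ using the definition of substitution and the definition of $\Pi_{k+1}$, and expanding $\Occ_{\Pi_{n+1}(A[\vec{C}, X_{n+1}], B)}$ the same way, both reduce — via $\rho(m[m']) = \rho(m')$, $S[T]^i = S[T^i]$, associativity, and the bookkeeping that $X_{n+1}$ is left untouched by $\vec{C}$ (Proposition~\ref{sub0})— to the same set; this is the sort of ``equating terms'' computation done in the proof of Proposition~\ref{sub00}, just with the extra $0/i$ and $0/(n+1)$ reindexing bolted on. Then for plays, the crux is to show
\[ s \in P_{A[B/X_{k+1}]}[\vec{C}] \;\; \Longleftrightarrow \;\; s \in P_{A[\vec{C}, X_{n+1}][B/X_{n+1}]} , \]
which is exactly an instance of Proposition~\ref{sub00} (associativity of substitution), once one checks that $B$ and the $C_j$ live in disjoint variable coordinates so the order of substitution can be interchanged; and then to show that the \emph{local} side-condition $A_{k+1}^{\ast}(t\cdot a) = A_{k+1}^{\ast}(t)$ for even-prefixes, which cuts $P_{A[B/X_{k+1}]}$ down to $P_{\Pi_{k+1}(A,B)}$, is preserved and reflected under the $\vec{C}$-substitution. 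For this last point the natural move is to use the inductive characterisation of $P_{\Pi_i(A,B)}$ given after the definition: Opponent-moves are those valid in \emph{some} instance $C' \in \SubB$ and Player-moves those valid in \emph{every} such instance, and I would argue that the map $A \mapsto A[\vec{C}]$ sets up a bijection between the relevant plays of $A[C'/X_{k+1}]$ and of $A[\vec{C},X_{n+1}][C'/X_{n+1}]$ preserving ``valid in instance $C'$'', using Propositions~\ref{sub00} and \ref{submon}; the quantifiers ``$\exists C'$'' and ``$\forall C'$'' then transport directly because the universe $B$ is the same on both sides.

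The main obstacle I expect is precisely the interaction between the \emph{global} local-condition defining $\Pi$ and the substitution $[\vec{C}]$: one has to be careful that computing the least instance $A_{k+1}^{\ast}(s)$ on the left and $(A[\vec{C},X_{n+1}])_{n+1}^{\ast}(s)$ on the right genuinely coincide, i.e. that substituting closed games $\vec{C}$ into the \emph{non-quantified} variables does not change which $X_{k+1}$/$X_{n+1}$-moves have appeared nor the equivalence classes they generate. This should follow because the $\vec{C}$-substitution acts only on coordinates $1, \ldots, k$ and leaves the $(k{+}1)$-st coordinate's moves (and hence the projection $s \restrict m$ for $m$ an occurrence of the quantified variable) untouched up to the relabelling $X_{k+1} \rightsquigarrow X_{n+1}$ — but making this precise, by tracking how $\restrict m$ commutes with $\restrict A[\vec{C}]$ as in the verification of equations (1)--(3) in the proof of Proposition~\ref{sub00}, is the one place where care is needed rather than mere unwinding.
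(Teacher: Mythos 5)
Your proposal is correct and follows essentially the same route as the paper: part (1) via the observation that $\Occ_{\Pi_i(A,B)}$ has no occurrences of $X_i$, and part (2) by comparing occurrence sets, reducing the play conditions to associativity of substitution (Proposition~\ref{sub00}), and then verifying that the two least-instance operators $A^{\ast}_{k+1}$ and $A[\vec{C},X_{n+1}]^{\ast}_{n+1}$ coincide via a projection computation — which is exactly the "main point" the paper's proof isolates and checks. The one cosmetic difference is that you also sketch transporting the $\exists/\forall$-instance quantifiers through the inductive characterisation of $P_{\Pi}$, whereas the paper works directly with the local $A^{\ast}$ condition; both come to the same thing.
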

\begin{proof}
We prove (2). Firstly, we compare the occurrence sets.
Expanding the definitions on the LHS of the equation, we obtain
\[ \Occ_A^0 \; \cup \; \Occ_A^{k+1}[0] \; \cup \; \bigcup_{i=1}^k \{ m_1[m_2] \mid m_1 \in \Occ_A^i \; \wedge \; m_2 \in \Occ_{C_i} \} \]
Similarly, on the RHS we obtain
\[ \Occ_A^0 \; \cup \; \bigcup_{i=1}^k \{ m_1[m_2] \mid m_1 \in \Occ_A^i \; \wedge \; m_2 \in \Occ_{C_i} \} \; \cup \; \Occ_{A[\vec{C},X_{n+1}]}^{n+1}[0] \]
Since $\Occ_{A[\vec{C},X_{n+1}]}^{n+1}[0] = \Occ_A^{k+1}[n+1][0] = \Occ_A^{k+1}[0]$, we conclude that these two sets are equal.

We now show the equivalence of the conditions on plays. In similar fashion to the proof of associativity of substitution (Proposition~\ref{sub00}), this is a straightforward matter of expanding the definitions. The main point is to show the equivalence of the conditions restricting plays in the polymorphic products. This reduces to showing that
\[ A^{\ast}_{k+1}(s \restrict \Pi_{k+1}(A, B)) \; = \; A[\vec{C}, X_{n+1}]^{\ast}_{n+1}(s) , \]
which in turn reduces to showing that
\[ \{ s \restrict \Pi_{k+1}(A, B)  \restrict m \mid m \in \Occ_A^{k+1} \} \;\; = \;\; \{ s \restrict m[n+1] \mid m \in \Occ_A^{k+1} \} , \]
and finally to showing that for $m \in \Occ_A^{k+1}$,
\[ s \restrict  \Pi_{k+1}(A, B) \restrict m \; = \; s \restrict m[n+1] . \]
This holds because the projection $s \restrict \Pi_{k+1}(A, B)$ projects moves of the form  $m'[m'']$ with $m' \in \Occ_A^i$,  $1 \leq \rho_{A}(m') \leq k$, onto $m'$, and leaves the sub-sequence of elements of the form $m[m'']$ unchanged.
Finally, we note that projecting  with $m$ or $m[n+1]$ yields identical results.
\end{proof}

\begin{proposition}
\label{prodcont}
The relative polymorphic product $\Pi_{i}$ is
$\ginc$-monotonic and $\domapprox$-continuous as a function
\[ \GG{\omega} \times \GG{\omega} \; \longrightarrow \; \GG{\omega} . \]
\end{proposition}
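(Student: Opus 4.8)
The plan is to split the statement into two essentially independent verifications: $\ginc$-monotonicity, done by a direct check against the definitions; and $\domapprox$-continuity, which I would reduce, via Proposition~\ref{domapprox}(3), to $\domapprox$-monotonicity together with continuity on move-sets.

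\emph{$\ginc$-monotonicity.} Suppose $A \ginc A'$ and $B \ginc B'$. The occurrence-set inclusion $\Occ_{\Pi_i(A,B)} = \Occ_A[0/i] \subseteq \Occ_{A'}[0/i] = \Occ_{\Pi_i(A',B')}$ is immediate. For the positions it is cleanest to use the inductive characterization and induct on the length of a play $s \in P_{\Pi_i(A,B)}$: an Opponent extension $sa$ requires $sa \in P_{A[C]}$ for \emph{some} $C \in \SubB$, a condition that only becomes easier as $A$ and $B$ grow (using $\ginc$-monotonicity of substitution, Proposition~\ref{submon}, together with the fact that $sa \in P_{A[C]}$ depends on $C$ only through its positions, so that a sub-game of $B$ induces one of $B'$ with the same positions); and a Player extension $sab$ requires $sab \in P_{A[C]}$ for \emph{every} admissible $C$, but by Proposition~\ref{inv} the binding constraint is the single least instance $A_i^{\ast}(sa)$, and since for sequences over $M_{A[B/X_i]}$ the value $A_i^{\ast}(\cdot)$ is insensitive to replacing $A$ by $A'$ — every such sequence projects trivially onto the occurrences in $\Occ_{A'}^i \setminus \Occ_A^i$ — this transfers as well. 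The equivalence-relation condition for $\ginc$ then follows because $\gequiv_{\Pi_i(A,B)}$ is by definition the restriction of $\gequiv_{A[B/X_i]}$, so $\ginc$-monotonicity of substitution applies once more; here one also uses that $P_{\Pi_i(A,B)}$ is closed under $\gequiv_{A[B/X_i]}$, which holds because $A_i^{\ast}$ is invariant under $\gequiv$.

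\emph{$\domapprox$-continuity.} By Proposition~\ref{domapprox}(3) it suffices to check $\domapprox$-monotonicity and continuity on move-sets. Continuity on move-sets is transparent: $\Occ_{\Pi_i(A,B)} = \Occ_A[0/i]$ does not depend on $B$ at all, and $\Occ_A \mapsto \Occ_A[0/i]$ is the direct image under the fixed map $m \mapsto m[0/i]$, which preserves arbitrary unions, in particular directed ones. For $\domapprox$-monotonicity the argument runs parallel to the one above, but now invoking the $\domapprox$ clauses of Proposition~\ref{submon}, so that $P_{A[B/X_i]} = P_{A'[B'/X_i]} \cap M_{A[B/X_i]}^{\ast}$; one then argues that the $A_i^{\ast}$-side-condition removes precisely the same positions from each side, using that the positions it forbids are determined locally.

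\emph{The main obstacle.} The genuinely non-routine point is the Player clause in the definition of $P_{\Pi_i(A,B)}$: read inductively it says ``$sab$ is legal iff $sab \in P_{A[C]}$ for \emph{every} instance $C \in \SubB$ in which $sa$ is legal'', and enlarging $B$ enlarges the pool $\SubB$, so at first sight this clause looks \emph{anti}tone in $B$. The resolution, which I would establish first and then use throughout, is Proposition~\ref{inv}: the universal quantification collapses to the single least instance $A_i^{\ast}(sa)$, so the Player constraint is really a purely local condition on moves, independent of the ambient universe. The remaining delicate step — and the one I would want to pin down carefully — is the matching of forbidden positions on the two sides in the $\domapprox$ case, where growth of the universe can in principle introduce new opening moves and alter $\gequiv$; here one must check that the local, $A_i^{\ast}$-based description of $P_{\Pi_i(A,B)}$ behaves correctly with respect to the exact identity $P_{A[B/X_i]} = P_{A'[B'/X_i]} \cap M_{A[B/X_i]}^{\ast}$. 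Once these observations are in place, both monotonicity statements reduce to the same species of definition-chase already carried out in the proofs of Propositions~\ref{sub00} and~\ref{sub2}, and $\domapprox$-continuity follows.
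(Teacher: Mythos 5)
Your proposal is correct and follows essentially the same route as the paper's proof: reduce $\domapprox$-continuity via Proposition~\ref{domapprox}(3) to continuity on occurrence sets (where $\Pi_i$ acts pointwise as substitution), invoke Proposition~\ref{submon} for the substitution component, and observe that the side-condition $A_i^{\ast}(t\cdot a)=A_i^{\ast}(t)$ is a purely local condition on the play, insensitive to enlarging $A$ or $B$. You supply rather more detail than the paper (which leaves the $\ginc$ case implicit), and your identification of the apparent antitonicity of the Player clause and its resolution via Proposition~\ref{inv} is exactly the point the paper makes when motivating its non-inductive definition of $P_{\Pi_i(A,B)}$.
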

\begin{proof}
For $\domapprox$-monotonicity, suppose $A \domapprox A'$ and $B \domapprox B'$. By Proposition~\ref{submon}, $A[B/X_{i}] \domapprox A'[B'/X_{i}]$. For $t \cdot a \in M_{A[B/X_{i}]}^{\ast}$, the further conditions on plays $C^{\ast}_i (t \cdot a) = C^{\ast}_i (t)$, for $C = A \; \mbox{or} \; A'$, depend only on the sets
\[ \{ u \restrict m \mid m \in \Occ_A^i \} , \quad u = t \; \mbox{or} \; t \cdot a \]
which depend only on $u$ and not on $C$.

For $\domapprox$-continuity, we use Proposition~\ref{domapprox}(3), by
which it suffices to show continuity on occurrence sets. The action of
$\Pi_i$ on occurrence sets is just that of substitution, which is defined
pointwise and hence preserves unions.
\end{proof}

\subsection{A Domain Equation for System F}

We define a variable game $\UU \in \GG{\omega}$ of System F types by the following
recursive equation:
\[ \UU \quad = \quad \&_{i>0} X_i \;\; \& \;\; \mathbf{1} \;\; \& \;\; (\UU \llwith \UU ) \;\; \& \;\; (\UU \Rightarrow \UU ) \;\; \& \;\; \&_{i>0}
\Pi_{i} (\UU , \UU ) . \]
Explicitly, $\UU$ is being defined as the least fixed point of a function $F : \GG{\omega} \longrightarrow \GG{\omega}$. This function is continuous by Propositions~\ref{domapprox} and~\ref{prodcont}.

We can then define second-order quantification by:
\[ \forall X_i . \, A \; \eqdef \; \Pi_{i} (A, \UU ) . \]
Although it is not literally the case that
\[ X_i \domapprox \UU, \qquad \UU \Rightarrow \UU \domapprox
\UU , \qquad \mbox{etc.} \]
for trivial reasons of how disjoint union is defined, with a
little adjustment of definitions we can arrange things so that we
indeed have
\[ \begin{array}{llcc}
\bullet & & & X_i \domapprox \UU   \\
\bullet & & & \mathbf{1} \domapprox \UU   \\
\bullet & A, B \domapprox \UU &
\Longrightarrow & A \llwith B \domapprox \UU \llwith \UU
\domapprox \UU \\
\bullet & A, B \domapprox \UU &
\Longrightarrow & A \Rightarrow B \domapprox \UU \Rightarrow \UU
\domapprox \UU \\
\bullet & A \domapprox \UU & \Longrightarrow & \forall X_i . \, A = \Pi_{i}(A,
\UU ) \domapprox \Pi_{i}(\UU ,
\UU ) \domapprox \UU .
\end{array} \]
Thus we get a direct inductive definition of the types of System F as
sub-games of $\UU$.

Moreover, if $A$ and $B$ are (the variable games corresponding to)
System F types, then a simple induction on the structure of $A$ using
Propositions~\ref{sub0}, \ref{sub1} and~\ref{sub2} shows
that
\[ A[B/X_{i}] \domapprox \UU , \]
and similarly for simultaneous substitution.

\section{Strategies}

Fix a variable game $A$. Let
\[ g : \Occ_A \pfnarrow \Occ_A \]
be a partial
function. We can extend $g$ to a partial function
\[ \hat{g} : M_{A[\vec{\UU}]} \pfnarrow M_{A[\vec{\UU}]} \]
by
\[
\hat{g}(m[m']) \; = \;  \left\{ \begin{array}{ll}
g(m)[m'], & g(m) \; \mbox{defined} \\
\mbox{undefined} & \mbox{otherwise}
\end{array} \right.
\]
Now we can define a set of plays $\sigma_g \subseteq M_{A[\vec{\UU}]}^{\ast}$
inductively as follows:
\[ \sigma_g \;\; = \;\; \{ \varepsilon \} \; \cup \; \{ sab \mid s \in \sigma_g
\; \wedge \; sa \in P_{A[\vec{\UU}]} \; \wedge \; \hat{g}(a) = b \}
  . \]
For all $\vec{B} \ginc \vec{\UU}$, we can define the restriction of
$\sigma_g$ to $\vec{B}$ by:
\[ \sigma_{\vec{B}} = \{ \varepsilon \} \; \cup \; \{ sab \in \sigma_g \mid sa
\in P_{A[\vec{B}]} \} . \]
(Note that
$\sigma_g = \sigma_{\vec{\UU}}$ in this notation.)
We say that $\sigma_g$ is a \emph{generic strategy} for $A$, and write
$\sigma_g : A$, if the following \emph{restriction condition} is satisfied:
\begin{itemize}
\item $\sigma_{\vec{B}} \subseteq P_{A[\vec{B}]}$ for all $\vec{B}
  \ginc \vec{\UU}$, so that the restrictions are well-defined.
\end{itemize}
Note that $\sigma = \sigma_g$ has the following properties.
\begin{itemize}
\item $\sigma$ is a non-empty set of even-length sequences, closed
  under even-length prefixes.
\item $\sigma$ is \emph{deterministic}, meaning that
\[ sab \in \sigma \; \wedge \; sac \in \sigma \;\; \Rightarrow \;\; b
= c. \]
\item $\sigma$ is \emph{history-free}, meaning that
\[ sab \in \sigma \; \wedge \; t \in \sigma \; \wedge \; ta \in P_{A[\vec{\UU}]}
\;\; \Rightarrow \;\; tab \in \sigma .\]
\item $\sigma$ is \emph{generic}:
\[ s \cdot m_1[m'_1] \cdot m_2[m'_2] \in \sigma \; \wedge \; t
\in \sigma \; \wedge \; t \cdot m_1[m''_1] \in P_{A[\vec{\UU}]}
\;\; \Rightarrow \;\; t \cdot m_1[m''_1] \cdot m_2[m''_1] \in
\sigma . \]
\end{itemize}
These conditions imply that
\[ s \cdot m_1[m'_1] \cdot m_2[m'_2] \in \sigma \; \Rightarrow
\; m'_1 = m'_2 ). \]
Moreover, for any set $\sigma \subseteq P_{A[\vec{\UU}]}$ satisfying
the above conditions, there is a
least partial function $g : \Occ_A \pfnarrow \Occ_A$ such that
$\sigma = \sigma_g$. This function can be defined explicitly by
\[ g(m_{1}) = m_2 \;\; \Longleftrightarrow \;\; \exists s. \, s \cdot m_1[a]
 \cdot m_2[a] \in \sigma . \]

The equivalence $\gequiv_A$ on plays can be lifted to a partial equivalence (\ie a symmetric and transitive relation) on strategies on $A$, which we also write as $\gequiv$. This is defined most conveniently in terms of a partial pre-order (transitive relation) $\preord$, which is defined as follows.
\[ \sigma \preord \tau \;\; \equiv \;\; sab \in \sigma \; \wedge \; t \in \tau \; \wedge \; sa \gequiv_A ta' \;\; \Longrightarrow \;\; \exists b' . \, ta'b' \in \tau \; \wedge \; sab \gequiv_A ta'b' . \]
We can then define
\[ \sigma \gequiv \tau \;\; \equiv \;\; \sigma \preord \tau \; \wedge \; \tau \preord \sigma . \]
A basic well-formedness condition on strategies $\sigma$ is that they
satisfy this relation, meaning $\sigma \gequiv \sigma$. Note that for
a generic strategy $\sigma = \sigma_{\vec{\UU}}$, using the
equivalence on plays in $A[\vec{\UU}]$:
\[ \sigma \gequiv \sigma \;\; \Longrightarrow \;\; \sigma_{\vec{B}} \gequiv
  \sigma_{\vec{B}} \quad \mbox{for all $\vec{B}
  \ginc \vec{\UU}$.} \]
A cartesian closed category of games is constructed by taking \emph{partial equivalence classes} of strategies, \ie strategies modulo $\gequiv$, as morphisms. See \cite{AJM00} for details.

\subsection{Copy-Cat Strategies}
One additional property of strategies will be important for
our purposes. A partial function $f : X \pfnarrow X$ is said to be a
\emph{partial involution} if it is symmetric, \ie if
\[ f(x) = y \;\; \Longleftrightarrow \;\; f(y) = x . \]
It is \emph{fixed-point free} if we never have $f(x) = x$.
Note that fixed-point free partial involutions on a set $X$ are in
bijective correspondence with pairwise disjoint families $\{ x_i ,
y_i \}_{i \in I}$ of two-element subsets of $X$ (\ie the set of
pairs $\{ x, y \}$ such that $f(x) = y$, and hence also $f(y) =
x$). Thus they can thought of as ``abstract systems of axiom links''.
See \cite{AL00,AL01} where a combinatory algebra of partial
involutions is introduced, and an extensive study is made of
realizability over this combinatory algebra.

For us, the important correspondence is with \emph{copy-cat strategies},
first  identified in \cite{AJ94a} as central to the game-semantical
analysis of proofs (and so-named there).
We say that $\sigma$ is a \emph{copy-cat strategy} if $\sigma =
\sigma_g$ where $g$ is a fixed-point free partial involution.

\begin{lemma}[The Copy-Cat Lemma]
Let $\sigma_g : A$ be a generic copy-cat strategy. If $g(m) = m'$, then for all
$s \in \sigma$:
\[ s \restrict m \;\; = \;\; s \restrict m' . \]
\end{lemma}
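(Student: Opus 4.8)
The plan is to prove the equation $s \restrict m = s \restrict m'$ by induction on the length of $s \in \sigma$, exploiting the fact that $\sigma = \sigma_g$ is built up two moves at a time, together with the genericity property and the fact that $g$ is a fixed-point free partial involution with $g(m) = m' \neq m$.

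First I would set up the induction. The base case $s = \varepsilon$ is immediate, since $\varepsilon \restrict m = \varepsilon = \varepsilon \restrict m'$. For the inductive step, consider $s \cdot a \cdot b \in \sigma$ with $s \in \sigma$, $s \cdot a \in P_{A[\vec{\UU}]}$, and $\hat{g}(a) = b$. By the inductive hypothesis, $s \restrict m = s \restrict m'$. I would then write $a = m_1[a']$ and $b = g(m_1)[a']$ for the unique $m_1 \in \Occ_A$ with $a \in M_{A[\vec{\UU}]}$ projecting appropriately (using that $\Occ_A$ is unambiguous), and analyse how appending $a \cdot b$ changes the projections onto $m$ and onto $m'$. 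The key observation is that $(s \cdot a \cdot b) \restrict m$ differs from $s \restrict m$ exactly by those of $a, b$ whose underlying occurrence equals $m$, and similarly for $m'$; since $g$ is a fixed-point free partial involution, $g(m_1) = m_1$ is impossible, so $a$ and $b$ cannot have the \emph{same} underlying occurrence. Hence the relevant cases are: (i) neither $m_1$ nor $g(m_1)$ is $m$ or $m'$ — then both projections are unchanged and we are done by the IH; (ii) $m_1 = m$, forcing $g(m_1) = m'$ since $g(m) = m'$ and $g$ is a partial function (a partial involution, in fact) — then $a = m[a']$ contributes $a'$ to the $m$-projection and $b = m'[a']$ contributes the \emph{same} $a'$ to the $m'$-projection, so both projections grow by $a'$ and the equation is preserved; (iii) $m_1 = m'$, symmetric to (ii) using $g(m') = m$.

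The main obstacle — or rather the point requiring care — is handling the case where $m_1 \notin \{m, m'\}$ but $g(m_1)$ \emph{is} $m$ or $m'$; however this cannot occur, since $g$ being a partial involution means $g(m_1) = m$ would force $g(m) = m_1$, contradicting $g(m) = m'$ unless $m_1 = m'$ (and $g(m_1) = m'$ forces $m_1 = m$ symmetrically). So in fact the case analysis collapses neatly to (i)–(iii), and the only genuinely delicate bookkeeping is verifying the inductive clause for the projection operation $\restrict m$: namely that for a move $c$ in the sequence, $c$ contributes to $s \restrict m$ precisely when $c = m[c']$ for some $c'$, contributing $c'$, which is exactly the definition given in the excerpt. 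I would also note in passing that genericity of $\sigma$ (together with determinism) guarantees that whenever $a = m_1[a']$ the response is $b = g(m_1)[a']$ with the \emph{same} second component $a'$ — this is precisely the consequence $s \cdot m_1[m_1'] \cdot m_2[m_2'] \in \sigma \Rightarrow m_1' = m_2'$ recorded just before the lemma statement — which is what makes cases (ii) and (iii) go through. No appeal to the equivalence relations $\gequiv$ is needed; the argument is purely about the raw plays in $\sigma$.
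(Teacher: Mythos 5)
Your proof is correct and follows essentially the same route as the paper's: induction on $|s|$, decomposing the new pair of moves as $m_1[a']\cdot g(m_1)[a']$, and using the partial involution property of $g$ to reduce to the dichotomy ``the new pair is exactly $\{m,m'\}$'' versus ``the new pair is disjoint from $\{m,m'\}$''. Your explicit handling of the would-be mixed case ($m_1\notin\{m,m'\}$ but $g(m_1)\in\{m,m'\}$) is just an unpacking of the same involution argument the paper uses to justify that dichotomy.
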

\begin{proof}
By induction on $|s|$. The base case is immediate.
Suppose that $s = t \cdot m_1[a] \cdot m_2[a]$ and that $g(m_3 ) =
m_4$. By the partial involution property of $g$,
\[ \{ m_1 , m_2 \} = \{ m_3 , m_4 \} \;\; \mbox{or} \;\;  \{ m_1 , m_2 \}
\cap \{ m_3 , m_4 \} = \varnothing . \]
In the first case,
\[ s \restrict m_1 = (t \restrict m_1 )\cdot a  = (t \restrict m_2 )
\cdot a  = s \restrict
m_2 , \]
where the middle equation follows from the induction hypothesis.

\noindent In the second case,
\[ s \restrict m_3 = t \restrict m_3 = t \restrict m_4 = s \restrict
m_4 , \]
where the middle equation again follows from the induction hypothesis.
\end{proof}

\subsection{Cartesian Closed Structure}
The required operations on morphisms to give the structure of a
cartesian closed category can be defined exactly as for AJM games
\cite{AJM00}. We give the basic definitions, referring to
\cite{AJM00} for motivation and technical details.

We write $\PInv{X}$ for the set of partial involutions on a set $X$.

\begin{proposition}
\begin{enumerate}
\item If $f \in \PInv{X}$ and $g \in \PInv{Y}$, then $f + g \in
  \PInv{X+Y}$.
\item If $f \in \PInv{Y}$, then $\ident_X \times f \in \PInv{X \times
    Y}$.
\item Partial involutions are closed under conjugation by
  isomorphisms:
\[ f \in \PInv{X} \; \wedge \; \alpha : X
  \stackrel{\cong}{\longrightarrow} Y \;\; \Longrightarrow \;\;
\alpha \circ f \circ \alpha^{-1} \in \PInv{Y} . \]
\end{enumerate}
\end{proposition}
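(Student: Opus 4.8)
The three statements are all elementary closure properties of partial involutions, and each should be proved by a direct verification of the defining symmetry condition $f(x) = y \Leftrightarrow f(y) = x$ for the relevant candidate function. I would treat them in order, in each case unfolding the definition of the constructed function on an arbitrary argument and using the symmetry of the given partial involution(s).

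For (1), I would recall that $f + g : X + Y \pfnarrow X + Y$ acts as $f$ on the left summand and as $g$ on the right summand (sending $\mathrm{inl}(x) \mapsto \mathrm{inl}(f(x))$ when $f(x)$ is defined, and similarly on the right, and undefined on mixed arguments). Since the disjoint-union tagging is preserved, $(f+g)(z) = w$ forces $z, w$ to lie in the same summand, and then the required biconditional is just the biconditional for $f$ (respectively $g$) transported across the injection. For (2), $\ident_X \times f$ acts on $(x, y)$ by $(x, y) \mapsto (x, f(y))$ when $f(y)$ is defined and is undefined otherwise; so $(\ident_X \times f)(x,y) = (x', y')$ implies $x = x'$ and $f(y) = y'$, whence by symmetry of $f$ we get $f(y') = y$ and $x' = x$, giving $(\ident_X \times f)(x', y') = (x, y)$. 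For (3), given $f \in \PInv{X}$ and an isomorphism $\alpha : X \stackrel{\cong}{\longrightarrow} Y$, I would observe that $(\alpha \circ f \circ \alpha^{-1})(y) = y'$ unfolds to $f(\alpha^{-1}(y)) = \alpha^{-1}(y')$; applying symmetry of $f$ gives $f(\alpha^{-1}(y')) = \alpha^{-1}(y)$, which re-folds to $(\alpha \circ f \circ \alpha^{-1})(y') = y$. In all three cases one should also note in passing that well-definedness (single-valuedness as a partial function) is immediate from that of the constituents.

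There is essentially no obstacle here: the only point requiring mild care is the bookkeeping of \emph{partiality} — one must make sure the candidate function is declared undefined exactly where a constituent is, so that the biconditional $f(x)=y \Leftrightarrow f(y)=x$ (which implicitly asserts that one side is defined iff the other is) is preserved rather than merely the equation on the common domain. For $f+g$ this means checking the mixed-summand arguments are undefined on both sides, and for $\ident_X \times f$ that undefinedness is governed solely by the $f$-component. Since these are exactly the standard constructions from \cite{AJM00}, I would keep the write-up to a few lines, verifying the symmetry condition in each case and remarking that single-valuedness and the partiality bookkeeping are routine.
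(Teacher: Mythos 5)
Your verification is correct and is exactly the routine check the paper has in mind; indeed the paper states this proposition without proof, treating it as immediate. Your attention to the partiality bookkeeping (definedness of one side of the biconditional forcing definedness of the other) is the only point of substance, and you handle it properly.
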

Our basic examples of partial involutions will be
``twist maps'' (\ie symmetries) on
disjoint unions:
\[ \twist_X = [ \mathsf{in}_2 , \mathsf{in}_1 ] : X + X \longrightarrow X + X . \]
More generally, to get a partial involution on $\Occ_A$ we will
specify $\Occ'_A \subseteq \Occ_A$ and $\Occ_1, \ldots , \Occ_k$ such
that:
\[ \Occ'_A \;\; \cong \;\; (\Occ_1 + \Occ_1 ) +
\cdots + (\Occ_k + \Occ_k ) . \]
We then define a partial involution by conjugation by the indicated
isomorphism  of the evident disjoint union of $k$ twist maps. The partial involution is undefined on
$\Occ_A \setminus \Occ'_A$.

\paragraph{Identity}
For identity morphisms $\ident_A : A \Rightarrow A$,
\[ \Occ_{A \Rightarrow A} = \omega \times \Occ_A + \Occ_A . \]
Define $\Occ'_A = \{ 0 \} \times \Occ_A + \Occ_A \subseteq \Occ_{A
  \Rightarrow A}$.
Then
\[ \Occ'_A \cong \Occ_A + \Occ_A , \]
so we obtain the required partial involution as a twist map. This is
the basic example of a copy-cat strategy.

\paragraph{Projections}
Take for example $\pi_1 : A \llwith B \Rightarrow A$.
\[ \Occ_{A \llwith B \Rightarrow A} = \omega \times (\Occ_A + \Occ_B )
+ \Occ_A . \]
Define
\[ \Occ'_{A \llwith B \Rightarrow A} = \{ 0 \} \times (\Occ_A + \varnothing )
+ \Occ_A \subseteq \Occ_{A \llwith B \Rightarrow A}  . \]
Then $\Occ'_A \cong \Occ_A + \Occ_A$, and we obtain the required
partial involution by conjugating the twist map by the evident
isomorphism.

\paragraph{Pairing}
Suppose we are given partial involutions
\[ f \in \PInv{\Occ_{C \Rightarrow A}}, \qquad g \in \PInv{\Occ_{C
      \Rightarrow B}} . \]
\[ \Occ_{C \Rightarrow A \llwith B} = \omega \times \Occ_{C} +
\Occ_{A} + \Occ_{B} . \]
Using some bijection $\omega \cong \omega + \omega$,
\[ \begin{array}{rcl}
\Occ_{C \Rightarrow A \llwith B} & \cong & (\omega + \omega) \times \Occ_{C} +
\Occ_{A} + \Occ_{B} \\
& \cong & \omega \times \Occ_C + \omega \times \Occ_C  +
\Occ_{A} + \Occ_{B} \\
& \cong & (\omega \times \Occ_C  +
\Occ_{A}) + (\omega \times \Occ_C  +
\Occ_{B}) \\
& = & \Occ_{C \Rightarrow A} +  \Occ_{C \Rightarrow B} .
\end{array} \]
Then $f + g \in \PInv{\Occ_{C \Rightarrow A} +  \Occ_{C \Rightarrow
    B}}$, and conjugating by the indicated isomorphism yields the
required partial involution.

\paragraph{Application}
For application
\[ \mathsf{Ap}_{A, B} : (A \Rightarrow B) \llwith A \Rightarrow B , \]
\[ \begin{array}{rcl}
\Occ_A & = & \omega \times ((\omega \times \Occ_A + \Occ_B ) + \Occ_A )
+ \Occ_B \\
& \cong & \omega \times (\omega \times \Occ_A + \Occ_B ) + \omega
\times \Occ_A + \Occ_B \\
& \supseteq & \{ 0 \} \times (\omega \times \Occ_A + \Occ_B ) + \omega
\times \Occ_A + \Occ_B \\
& \cong & (\omega \times \Occ_A + \omega \times \Occ_A ) + (\Occ_B +
\Occ_B ) ,
\end{array} \]
yielding the required partial involution.

\paragraph{Currying}
Suppose that $f \in \PInv{\Occ_{A \llwith B \Rightarrow C}}$.
\[ \begin{array}{rcl}
\Occ_{A \llwith B \Rightarrow C} & = & \omega \times (\Occ_A + \Occ_B
) + \Occ_C \\
& \cong & \omega \times \Occ_A + (\omega \times \Occ_B + \Occ_C ) \\
& = & \Occ_{A \Rightarrow (B \Rightarrow C)} .
\end{array} \]
Conjugating $f$  by the indicated isomorphism yields the required
partial involution.

\paragraph{Composition}
Finally, we consider composition. We begin with some preliminaries on
partial involutions.
We write $\Rel{X}$ for the set of relations on a set $X$, \ie
$\Rel{X} = \pow{X \times X}$. Note that $\PInv{X} \subseteq
\Rel{X}$. We assume the usual regular algebra operations on relations:
composition $R \cdot S$, union $R \cup S$, and reflexive transitive
closure: $R^{\ast} = \bigcup_{k \in \omega} R^{k}$.

Any $R \in \Rel{X + Y}$ can be written as a disjoint union
\[ R = R_{XX} \cup R_{XY} \cup R_{YX} \cup R_{YY} , \]
where
\[ R_{ST} = \{ (a, b) \in R \mid a \in S \; \wedge \; b \in T \} . \]
Now given $R \in \Rel{X + Y}$, $S \in \Rel{Y + Z}$, we define $R
\invcomp S \in \Rel{X + Z}$ as follows:
\[ \begin{array}{rcl}
(R \invcomp S)_{XX} & = & R_{XX} \; \cup \;\; R_{XY} \cdot S_{YY} \cdot
(R_{YY} \cdot S_{YY})^{\ast} \cdot R_{YX} \\
(R \invcomp S)_{XZ} & = & R_{XY} \cdot (S_{YY} \cdot
R_{YY})^{\ast} \cdot S_{YZ} \\
(R \invcomp S)_{ZX} & = & S_{ZY} \cdot (R_{YY} \cdot
S_{YY})^{\ast} \cdot R_{YX} \\
(R \invcomp S)_{ZZ} & = & S_{ZZ} \; \cup \;\; S_{ZY} \cdot R_{YY} \cdot
(S_{YY} \cdot R_{YY})^{\ast} \cdot S_{YZ} .
\end{array} \]

\begin{proposition}
\label{invcomp}
\begin{enumerate}
\item $\invcomp$ is associative, with identity given by the twist map.
\item If $f \in \PInv{X + Y}$ and $g \in \PInv{Y+Z}$, then $f \invcomp
  g \in \PInv{X+Z}$.
\end{enumerate}
\end{proposition}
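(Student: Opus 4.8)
The plan is to replace the regular-algebra formulas defining $\invcomp$ by an equivalent combinatorial description in terms of \emph{alternating paths}, from which both statements drop out almost immediately; the four clauses in the definition are then recognised as the solved form of this path relation, obtained via the usual Kleene identities $R(SR)^{\ast} = (RS)^{\ast}R$ and $(R \cup S)^{\ast} = R^{\ast}(SR^{\ast})^{\ast}$. In detail: given $R \in \Rel{X+Y}$ and $S \in \Rel{Y+Z}$, regard the pairs in $R$ as $R$-edges and the pairs in $S$ as $S$-edges of a directed (coloured) graph on $X+Y+Z$. For $a,b \in X+Z$, write $a \rightsquigarrow b$ if there are vertices $v_0 = a, v_1, \ldots, v_n = b$ with $n \geq 1$ such that each step $(v_j , v_{j+1})$ is an $R$-edge or an $S$-edge and consecutive steps are of different kinds. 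Since $X, Y, Z$ are pairwise disjoint and $R, S$ touch only $X+Y$ and $Y+Z$ respectively, every interior vertex of such a path is automatically forced to lie in $Y$. The first task would be to prove $R \invcomp S = {\rightsquigarrow}$, by splitting an alternating path on its first edge and on the side where it ends, and matching the four cases against the four clauses of the definition; the Kleene stars enter precisely because such a path may oscillate between $R_{YY}$- and $S_{YY}$-edges any number of times before returning to $X + Z$.

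Associativity then falls out of the path picture. For $R$ on $X+Y$, $S$ on $Y+Z$, $T$ on $Z+W$, I would form the three-coloured graph on $X+Y+Z+W$ and define $a \Rrightarrow b$ for $a,b \in X+W$ exactly as above, but now with steps of three kinds and the requirement that consecutive steps use distinct relations. The key observation is that, since $X, Y, Z, W$ are pairwise disjoint, an $R$-step and a $T$-step can never be consecutive (they share no vertex), so the alternation condition already forces the pattern around every vertex to be $R$-then-$S$ or $S$-then-$T$. By the previous paragraph, an $(R \invcomp S)$-edge is a maximal $\{R,S\}$-alternating segment between two vertices of $X+Z$; any such segment abutting a $T$-edge meets it at a vertex of $Z$, hence ends with an $S$-step. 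Splicing these segments with $T$-edges therefore reconstructs exactly the $a \Rrightarrow b$ paths, so $(R \invcomp S) \invcomp T = {\Rrightarrow}$; by the mirror-image argument (splicing $(S \invcomp T)$-segments with $R$-edges) $R \invcomp (S \invcomp T) = {\Rrightarrow}$ as well. I would handle the identity law separately: writing $\twist_X$ as the relation $\{ (\mathsf{in}_1 x, \mathsf{in}_2 x), (\mathsf{in}_2 x, \mathsf{in}_1 x) : x \in X \}$, the only $\twist$-edges switch between the two copies of $X$, so alternating paths through $\twist \invcomp R$ (resp.\ $R \invcomp \twist$) collapse to single $R$-edges with at most a copy-switch at one end that the composition undoes — a one-line direct calculation from the four clauses.

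For part~2 one reads the conclusion off the same graph. If $f \in \PInv{X+Y}$ and $g \in \PInv{Y+Z}$, then in the associated graph every vertex is incident to at most one $f$-edge and at most one $g$-edge, since a partial involution pairs each point with at most one other; moreover an $f$-edge never meets $Z$ and a $g$-edge never meets $X$. Hence, starting from any $a \in X+Z$, the alternating walk is \emph{forced}: the first step is uniquely determined (if it exists), the required kind of each later step is fixed by alternation, and the partner along that kind is unique. So there is at most one alternating path issuing from $a$, whence $f \invcomp g$ is single-valued, i.e.\ a partial function. It is also symmetric as a relation: $f$ and $g$ being symmetric, the reversal of an alternating path is again an alternating path, so $(b,a) \in f \invcomp g$ whenever $(a,b) \in f \invcomp g$. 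A symmetric partial function is precisely a partial involution, which completes part~2.

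The main difficulty I foresee is bookkeeping rather than ideas: rigorously matching $R \invcomp S$ with ${\rightsquigarrow}$, and carrying through the splice/expand step for associativity, means handling all the degenerate cases — length-one paths such as those witnessing $R_{XX}$, segments lying entirely inside $R_{YY}$ or $S_{YY}$, and paths that start or end on the ``wrong'' side of the shared type — and checking that the Kleene-algebra rewrites (sliding, star-unfolding, $(R \cup S)^{\ast} = R^{\ast}(SR^{\ast})^{\ast}$) line up term by term with the four defining clauses of $\invcomp$. This is exactly the kind of computation carried out in the Geometry-of-Interaction / execution-formula literature on which these constructions are modelled (see e.g.\ \cite{AL00,AL01}), so an acceptable shortcut is simply to cite it; but I would prefer to give the self-contained path argument above.
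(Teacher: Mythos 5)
Your proposal is correct, but it proves the proposition by a genuinely different route from the paper. The paper handles part~1 by citation to the multiplicative full-completeness paper \cite{AJ94a}, and proves part~2 purely equationally: it observes that a partial involution is exactly a relation equal to its own converse $\rconv{R}$, computes $\rconv{(R \invcomp S)_{XX}}$ and the other three components using the laws for converse of compositions, unions and stars together with the sliding identity $U \cdot (V \cdot U)^{\ast} = (U \cdot V)^{\ast} \cdot U$, and concludes that $f \invcomp g$ is again symmetric. You instead re-read the four regular-algebra clauses as the alternating-path (execution-formula) relation on the three-coloured graph over $X + Y + Z$, and derive associativity by splicing/expanding maximal two-colour segments, the identity law by inspecting twist-edges, symmetry by reversing paths, and --- notably --- single-valuedness by the forced-walk observation that each vertex meets at most one $f$-edge and at most one $g$-edge. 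This last point is a genuine strength of your version: the paper's converse computation establishes only symmetry and leaves functionality implicit (presumably absorbed into the appeal to \cite{AJ94a} and the standard Geometry-of-Interaction facts), whereas your argument covers it explicitly. What the paper's approach buys is brevity and a purely algebraic calculation that never leaves the regular-algebra presentation; what yours buys is a self-contained proof of both parts at the price of the case analysis you already flag (length-one paths, segments confined to $R_{YY}$ or $S_{YY}$, and verifying that $R \invcomp S$ really coincides with the path relation before anything else is deduced from it). Carrying out that matching carefully is the only outstanding work; the ideas as stated are sound.
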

\begin{proof}
For (1), see \cite{AJ94a}. For (2), we write $\rconv{R}$ for relational
converse. Note that
\[ \rconv{(R \cdot S)} = \rconv{S} \cdot \rconv{R}, \quad \rconv{(R
  \cup S)} = \rconv{R} \cup \rconv{S}, \quad \rconv{(R^{\ast})} =
(\rconv{R})^{\ast}, \quad \rdconv{R} = R . \]
If $R \in \Rel{X+Y}$, then
\[ R = \rconv{R} \;\; \Longleftrightarrow \;\; \rconv{R_{XX}} = R_{XX}
\; \wedge \; \rconv{R_{XY}} = R_{YX} \; \wedge \; \rconv{R_{YX}} =
R_{XY} \; \wedge \; \rconv{R_{YY}} = R_{YY} . \]
Now if $R = \rconv{R}, S = \rconv{S}$:
\[ \begin{array}{rcl}
\rconv{(R \invcomp S)_{XX}} & = &  \rconv{(R_{XX} \; \cup \;\; R_{XY} \cdot S_{YY} \cdot
(R_{YY} \cdot S_{YY})^{\ast} \cdot R_{YX})} \\
& = & \rconv{R_{XX}} \; \cup \;\; \rconv{R_{YX}} \cdot (\rconv{S_{YY}} \cdot
\rconv{R_{YY}})^{\ast} \cdot
\rconv{S_{YY}} \cdot \rconv{R_{XY}} \\
& = & R_{XX} \; \cup \;\; R_{XY}  \cdot
(S_{YY} \cdot R_{YY})^{\ast} \cdot S_{YY} \cdot R_{YX} \\
& = & R_{XX} \; \cup \;\; R_{XY} \cdot S_{YY} \cdot
(R_{YY} \cdot S_{YY})^{\ast} \cdot R_{YX} \\
& = & (R \invcomp S)_{XX} ,
\end{array} \]
using the regular algebra identity $U \cdot (V \cdot U)^{\ast} = (U
\cdot V)^{\ast} \cdot U$.
The other cases are handled similarly.
\end{proof}

\noindent Now suppose we are given
\[ f \in \PInv{\Occ_{A \Rightarrow B}}, \qquad g \in \PInv{\Occ_{B
    \Rightarrow C}} . \]
\[ \Occ_{A \Rightarrow B} = \omega \times \Occ_A + \Occ_B \qquad
\Occ_{B \Rightarrow C} = \omega \times \Occ_B + \Occ_C . \]
Now $\ident_{\omega} \times f \in  \PInv{\omega \times (\omega \times \Occ_A
  + \Occ_B )}$, but using some bijection $\omega \cong \omega \times \omega$,
\[
\omega \times (\omega \times \Occ_A
  + \Occ_B )
\;\; \cong \;\; (\omega \times \omega ) \times \Occ_A + \omega \times \Occ_B
\\
\;\; \cong \;\; \omega  \times \Occ_A + \omega \times \Occ_B .
\]
Let $!f$ be the conjugation of $\ident_{\omega}{\times} f$ by the
indicated isomorphism. Then
\[ !f \invcomp g \in \PInv{\omega \times
  \Occ_A + \Occ_C} = \PInv{\Occ_{A \Rightarrow C}} \]
as required.

We show that composition is compatible with genericity at the level of
partial involutions. Recall that $\hat{g} = g \times
\ident_{M_{\UU}}$. Note that if $g \in \PInv{X}$, $\hat{g} \in \PInv{X
  \times M_{\UU}}$.
\begin{proposition}
If $f \in \PInv{X+Y}$ and $g \in \PInv{Y+Z}$, then
\[ (f \times \ident_U) \invcomp (g \times \ident_U) = (f \invcomp g)
\times \ident_U \in \PInv{(X + Z) \times U}. \]
\end{proposition}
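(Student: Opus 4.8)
The plan is to reduce the identity to two elementary facts about the operation $R \mapsto R \times \ident_U$ on relations, and then read off both sides blockwise from the defining formulas for $\invcomp$.

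First I would record the compatibility of $- \times \ident_U$ with the block decomposition. Under the canonical isomorphism $(X + Y) \times U \cong (X \times U) + (Y \times U)$, for any $R \in \Rel{X + Y}$ and $S, T \in \{ X, Y \}$ one has $(R \times \ident_U)_{(S \times U)(T \times U)} = R_{ST} \times \ident_U$, since $R \times \ident_U$ relates $(a, u)$ to $(b, v)$ exactly when $R(a,b)$ holds and $u = v$. In particular $R \times \ident_U$ is entirely determined by the four relations $R_{ST} \times \ident_U$, so to prove the equation it suffices to check that each of the four blocks of $(f \times \ident_U) \invcomp (g \times \ident_U)$ coincides with $(f \invcomp g)_{ST} \times \ident_U$.

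Second, I would observe that $- \times \ident_U : \Rel{V} \longrightarrow \Rel{V \times U}$ is a homomorphism of regular algebras: it preserves unions trivially; it preserves relational composition, i.e.\ $(R \times \ident_U) \cdot (S \times \ident_U) = (R \cdot S) \times \ident_U$, because a composite witness for $(a,u) \to (b,v)$ forces the $U$-components to agree and projects to a composite witness in $V$; and hence, by induction on $k$ together with preservation of unions, it preserves reflexive transitive closure, $(R \times \ident_U)^{\ast} = R^{\ast} \times \ident_U$. This is the only computation of any substance, and it is routine. With these in hand the result is immediate: for the $XX$-block, the block-compatibility fact rewrites $\big((f \times \ident_U) \invcomp (g \times \ident_U)\big)_{(X \times U)(X \times U)}$, via the defining formula for $\invcomp$, as the regular-algebra expression
\[ (f_{XX} \times \ident_U) \; \cup \; (f_{XY} \times \ident_U) \cdot (g_{YY} \times \ident_U) \cdot \big( (f_{YY} \times \ident_U) \cdot (g_{YY} \times \ident_U) \big)^{\ast} \cdot (f_{YX} \times \ident_U) , \]
which by the homomorphism property equals $\big( f_{XX} \cup f_{XY} \cdot g_{YY} \cdot (f_{YY} \cdot g_{YY})^{\ast} \cdot f_{YX} \big) \times \ident_U = (f \invcomp g)_{XX} \times \ident_U$, and this is precisely the $XX$-block of $(f \invcomp g) \times \ident_U$. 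The $XZ$-, $ZX$- and $ZZ$-blocks are handled identically. Since the two relations agree on all four blocks, they are equal.

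Finally, membership in $\PInv{(X+Z) \times U}$ follows from Proposition~\ref{invcomp}(2), which gives $f \invcomp g \in \PInv{X + Z}$, together with the closure of $\PInv{}$ under $- \times \ident_U$: by the earlier proposition, $\ident_U \times (f \invcomp g) \in \PInv{U \times (X+Z)}$, and conjugating by the swap isomorphism $U \times (X+Z) \cong (X+Z) \times U$ yields $(f \invcomp g) \times \ident_U \in \PInv{(X+Z) \times U}$. I expect the only mildly delicate point to be the bookkeeping around the isomorphism $(X+Y) \times U \cong (X \times U) + (Y \times U)$ in the first step; there is no genuine obstacle.
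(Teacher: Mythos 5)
Your proof is correct and follows essentially the same route as the paper's: the paper's argument is precisely that $- \times \ident_U$ distributes over relational composition and union (hence over reflexive transitive closure, since $R^{\ast} = \bigcup_k R^k$), so that the identity can be read off blockwise from the defining formulas for $\invcomp$. You have simply written out in full the block bookkeeping and the closure-under-$\PInv{}$ step that the paper leaves implicit.
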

\begin{proof}
This is immediate from the definition of $\invcomp$, since $-{\times}
\ident_U$ distributes over composition and union:
\[ (h \circ k) \times \ident_U = (h \times \ident_U ) \circ (k \times
\ident_U), \qquad (h \cup k) \times \ident_U = (h \times \ident_U ) \cup (k \times
\ident_U) . \]
\end{proof}

Next, we give a direct definition of composition on strategies as sets
of plays.
If $\sigma_g : A \Rightarrow B$ and $\sigma_h : B \Rightarrow C$, we
define
\[ \sigma_g ; \sigma_h = \{ s \restrict A,C \mid s \in (\omega \times
M_{A[\vec{\UU}]} + \omega \times
M_{B[\vec{\UU}]} + M_{C[\vec{\UU}]})^{\ast} \; \wedge \; s
\restrict A,B \in \sigma_{!g} \; \wedge \; s \restrict B,C \in
\sigma_h \} . \]

\begin{proposition}
$\sigma_g ; \sigma_h = \sigma_{!g \invcomp h}$.
\end{proposition}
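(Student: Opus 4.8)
The plan is to establish the equation as an equality of sets of plays, by showing that both sides satisfy the same inductive generation scheme --- the one by which a history-free strategy is built from its response function. Write $k \eqdef {!g} \invcomp h$. By Proposition~\ref{invcomp}(2), applied with $!g \in \PInv{\omega \times \Occ_A + \omega \times \Occ_B}$ and $h \in \PInv{\omega \times \Occ_B + \Occ_C}$, we have $k \in \PInv{\omega \times \Occ_A + \Occ_C} = \PInv{\Occ_{A \Rightarrow C}}$, so $\sigma_k$ is a well-defined strategy satisfying $\sigma_k = \{\varepsilon\} \cup \{\, sab \mid s \in \sigma_k \;\wedge\; sa \in P_{(A \Rightarrow C)[\vec{\UU}]} \;\wedge\; \hat{k}(a) = b \,\}$. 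It therefore suffices to prove by induction on $|t|$ that $t \in \sigma_g ; \sigma_h$ if and only if $t \in \sigma_k$, and the whole weight of this rests on a single structural analysis of the interaction sequences that define $\sigma_g ; \sigma_h$.

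The key lemma is the ``zig-zag'' analysis of an interaction sequence $s \in (\omega \times M_{A[\vec{\UU}]} + \omega \times M_{B[\vec{\UU}]} + M_{C[\vec{\UU}]})^{\ast}$ with $s \restrict A,B \in \sigma_{!g}$ and $s \restrict B,C \in \sigma_h$. Using the alternation conditions of $A \Rightarrow B$ and $B \Rightarrow C$ together with determinism and history-freeness of $\sigma_{!g}$ and $\sigma_h$, one shows by induction that $s$ splits into \emph{rounds}: each round begins with an Opponent move $a$ of $A \Rightarrow C$ (in the $\omega \times \Occ_A$- or the $\Occ_C$-component), continues with a \emph{uniquely determined} maximal alternating run of moves in the middle component $\Occ_B$ --- alternately the response prescribed by $\sigma_{!g}$ and that prescribed by $\sigma_h$ --- and, if that internal dialogue terminates, ends with a Player move $b$ of $A \Rightarrow C$. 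Matching one round against the four-part decomposition $R = R_{XX} \cup R_{XY} \cup R_{YX} \cup R_{YY}$ of $R = {!g}$ (with $X = \omega \times \Occ_A$, $Y = \omega \times \Occ_B$) and of $S = h$ (with that same $Y$ and $Z = \Occ_C$), one reads off that a round from $X$ to $X$, from $X$ to $Z$, from $Z$ to $X$, and from $Z$ to $Z$ traces a path described \emph{exactly} by the regular expression defining $(R \invcomp S)_{XX}$, $(R \invcomp S)_{XZ}$, $(R \invcomp S)_{ZX}$ and $(R \invcomp S)_{ZZ}$ respectively --- the Kleene stars corresponding to internal dialogues of arbitrary finite length, and a non-terminating internal dialogue contributing no move-pair, consistently with $R^{\ast} = \bigcup_{n} R^{n}$ and the finiteness of plays. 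Conversely, each pair in a component of $R \invcomp S$ is realised by exactly one such round. Finally, since $\hat{g}(m[m']) = g(m)[m']$ and likewise for $h$, the instance decoration $m'$ of the move that triggers a round is carried unchanged through the whole round; this is precisely the content of the preceding Proposition (that $-{\times}\ident_U$ distributes over $\invcomp$), and it shows that the net effect of a round on moves is application of $\hat{k}$, not merely of $k$.

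Granting the zig-zag lemma, both inclusions follow routinely. If $t \in \sigma_g ; \sigma_h$, witnessed by an interaction sequence $\bar{s}$, then by the round structure $\bar{s}$ ends with a complete round; deleting it yields an interaction sequence for a shorter play $t' \in \sigma_g ; \sigma_h$ with $t = t' a b$, $t' a \in P_{(A \Rightarrow C)[\vec{\UU}]}$ and $\hat{k}(a) = b$; by induction $t' \in \sigma_k$, hence $t \in \sigma_k$. Conversely, if $sab \in \sigma_k$ then $s \in \sigma_k$, so by induction $s \in \sigma_g ; \sigma_h$ with some witness $\bar{s}$; since $\hat{k}(a) = b$, the lemma supplies the unique round realising $a \mapsto b$, which, appended to $\bar{s}$ (history-freeness and determinism guaranteeing the appended internal dialogue is compatible), witnesses $sab \in \sigma_g ; \sigma_h$. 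This gives $\sigma_g ; \sigma_h = \sigma_k$. The further facts needed to make the composition a morphism --- that $\sigma_g ; \sigma_h$ is a deterministic, history-free, generic strategy for $A \Rightarrow C$ satisfying the restriction condition --- fall out of the same analysis: genericity from the instance-threading just noted, and the restriction condition from the fact that hiding commutes with restriction along any $\vec{B} \ginc \vec{\UU}$, so that it reduces to the restriction conditions already assumed for $\sigma_g$ and $\sigma_h$ (compare \cite{AJM00}).

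I expect the main obstacle to be the zig-zag lemma itself: showing that the two alternation and turn-switching conditions force precisely the ``Opponent move, then maximal internal $B$-dialogue, then Player move'' shape of a round, and then aligning the parity of that internal dialogue with the loops $(R_{YY} S_{YY})^{\ast}$ and $(S_{YY} R_{YY})^{\ast}$ so that the four cases match the four defining clauses of $\invcomp$ on the nose. Potential unbounded internal chatter is not really an obstacle, since plays are finite and the reflexive transitive closure is a countable union of finite powers; and the instance decorations are dispatched uniformly by the preceding Proposition. Apart from these points, this is the standard ``composition $=$ interaction plus hiding, at the level of history-free strategies'' argument of \cite{AJ94a,AJM00}, transported to the present setting.
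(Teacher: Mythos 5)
Your proposal is correct and is essentially the argument the paper intends: the paper's own ``proof'' of this proposition is simply a citation to \cite{AJ94a}, and your zig-zag/round decomposition of interaction sequences, matched case by case against the four components $(R \invcomp S)_{XX}, (R\invcomp S)_{XZ}, (R\invcomp S)_{ZX}, (R\invcomp S)_{ZZ}$, is exactly the standard history-free composition argument from that reference. The one ingredient genuinely new to the present setting --- that the instance decoration $m'$ is threaded unchanged through each round, so that the net effect of a round is $\hat{k}$ rather than merely $k$ --- you correctly dispatch via the preceding proposition on $-{\times}\,\ident_U$ distributing over $\invcomp$.
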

\begin{proof}
See \cite{AJ94a}.
\end{proof}

\noindent Finally, we show the compatibility of composition with
restrictions to instances.
\begin{proposition}
$(\sigma ; \tau)_{\vec{D}} = \sigma_{\vec{D}} ; \tau_{\vec{D}}$.
\end{proposition}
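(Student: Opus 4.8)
The plan is to reason directly with the two concrete descriptions already introduced: composition of strategies as ``interaction plus hiding'', via the formula $\sigma_g;\sigma_h = \{ s\restrict A,C \mid s\restrict A,B \in \sigma_{!g} \;\wedge\; s\restrict B,C \in \sigma_h\}$, and restriction to an instance, $\sigma_{\vec{D}} = \{\varepsilon\}\cup\{sab\in\sigma \mid sa\in P_{(A\Rightarrow B)[\vec{D}]}\}$ (here $\sigma = \sigma_g : A\Rightarrow B$ and $\tau = \sigma_h : B\Rightarrow C$). I will also use that $(A\Rightarrow C)[\vec{D}] = A[\vec{D}]\Rightarrow C[\vec{D}]$ by Proposition~\ref{sub1}, that $\vec{D}\ginc\vec{\UU}$ forces $P_{A[\vec{D}]}\subseteq P_{A[\vec{\UU}]}$ and likewise for $B$, $C$ (Proposition~\ref{submon}), and that the exponential $!(-)$ on strategies is monotone and commutes with restriction to an instance. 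The inclusion $\sigma_{\vec{D}};\tau_{\vec{D}}\subseteq(\sigma;\tau)_{\vec{D}}$ is the routine half: an interaction sequence $s$ witnessing a play $w = s\restrict A,C$ of $\sigma_{\vec{D}};\tau_{\vec{D}}$ has $s\restrict A,B$ in the $!$-lift of $\sigma_{\vec{D}}$, hence in $\sigma_{!g}$, and $s\restrict B,C\in\tau_{\vec{D}}\subseteq\sigma_h$, and all its moves lie in the $\vec{D}$-instances and a fortiori in the $\vec{\UU}$-instances; so $s$ witnesses $w\in\sigma;\tau$ as well. Moreover each $A$-thread of $w$ is an $A$-thread of the $\vec{D}$-valid play $s\restrict A,B$, the $C$-projection of $w$ is the $C$-projection of the $\vec{D}$-valid play $s\restrict B,C$, and the alternation and switching conditions do not depend on the instance; hence $w\in P_{(A\Rightarrow C)[\vec{D}]}$, and since $P_{(A\Rightarrow C)[\vec{D}]}$ is prefix-closed this gives $w\in(\sigma;\tau)_{\vec{D}}$.

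For the converse inclusion, take $w\in(\sigma;\tau)_{\vec{D}}$ and, using $\sigma;\tau = \sigma_{!g \invcomp h}$, choose an interaction sequence $s$ over the $\vec{\UU}$-instances with $s\restrict A,C = w$, $s\restrict A,B\in\sigma_{!g}$ and $s\restrict B,C\in\sigma_h$. It then suffices to show that the hidden $B$-moves of $s$ are consistent with the instance $\vec{D}$ --- precisely, that $s\restrict A,B$ is a play of the $!$-form of $A[\vec{D}]\Rightarrow B[\vec{D}]$ and $s\restrict B,C$ a play of the $!$-form of $B[\vec{D}]\Rightarrow C[\vec{D}]$ --- for then the very same $s$ witnesses $w\in\sigma_{\vec{D}};\tau_{\vec{D}}$. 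I would prove this by induction on the prefixes $s'$ of $s$, carrying this invariant for the two halves simultaneously. If the move appended to $s'$ is an external Opponent move it lies in $A$ or in $C$ and so is recorded in $w = s\restrict A,C$, which is $\vec{D}$-valid by hypothesis; the half to which this move belongs therefore stays $\vec{D}$-valid (the relevant thread is literally a thread of $w$), and the other half is untouched. If the appended move is played by $\sigma$ (respectively $\tau$) then, by genericity and history-freeness, it is $\hat{g}$ (respectively $\hat{h}$) applied to the immediately preceding move of $s'\restrict A,B$ (respectively $s'\restrict B,C$), so it copies that move's instance part; by the induction hypothesis the relevant half, lying in $\sigma_{!g}$ (resp.\ $\sigma_h$) and being $\vec{D}$-valid, lies in the $!$-lift of $\sigma_{\vec{D}}$ (resp.\ in $\tau_{\vec{D}}$), and the \emph{restriction condition} in the definition of generic strategy --- $\sigma_{\vec{D}}\subseteq P_{(A\Rightarrow B)[\vec{D}]}$, resp.\ $\tau_{\vec{D}}\subseteq P_{(B\Rightarrow C)[\vec{D}]}$ --- guarantees that this response keeps that half $\vec{D}$-valid. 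Finally, if the new move is a $B$-move it extends \emph{both} halves; in the half where it is a Player move $\vec{D}$-validity of its thread is given by the previous case, and in the other half, where it is an Opponent move, its $B$-thread is exactly the one just shown to be $\vec{D}$-valid, so that half stays $\vec{D}$-valid as well.

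The main obstacle is precisely this last step: a $B$-move belongs to both halves of the interaction, so one has to update, via the restriction condition, whichever half sees it as a Player move, and then transfer the resulting $\vec{D}$-validity of its thread to the other half, where it is an Opponent move. A secondary, purely notational nuisance is the handling of the exponential copy-indices on the $A$- and $B$-components (the $\omega\times(-)$ factors appearing in $\sigma_{!g}$ and in the interaction space); these commute with instantiation at $\vec{D}$ and so play no essential role, and I would fix the corresponding notation at the outset and suppress it thereafter. No combinatorial facts about the universe of moves $\MM$ are needed beyond Propositions~\ref{sub1} and~\ref{submon} together with the defining properties of generic strategies.
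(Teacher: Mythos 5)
The paper states this proposition without giving any proof, so there is nothing to compare against line by line; judged on its own terms, your argument is essentially correct and, more importantly, it isolates the one genuinely non-trivial point. The inclusion $\sigma_{\vec{D}};\tau_{\vec{D}}\subseteq(\sigma;\tau)_{\vec{D}}$ is indeed routine monotonicity (Proposition~\ref{submon} plus Proposition~\ref{sub1}). The substance is the converse: an interaction witnessing a play of $(\sigma;\tau)_{\vec{D}}$ is a priori only known to have its hidden $B$-moves valid at the instance $\vec{\UU}$, and one must show they in fact stay within $\vec{D}$. Your induction on prefixes of the interaction sequence does this correctly: external $O$-moves land in $A$ or $C$ and are already threads of the $\vec{D}$-valid visible play; $P$-moves of $\sigma$ and $\tau$ copy the instance part of the preceding move by genericity, and the restriction condition keeps the corresponding half $\vec{D}$-valid; and a hidden $B$-move, being a $P$-move in one half and an $O$-move in the other, transfers its $\vec{D}$-validity across because both halves project onto literally the same occurrence-thread in $B$. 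This is exactly the right mechanism --- note that the proposition would be \emph{false} for non-generic history-free strategies, which could emit hidden $B$-moves with instance parts outside $\vec{D}$, so any proof must use genericity somewhere, and yours uses it in precisely the right place. Two points would need a little more care in a full write-up: the claim that $!(-)$ commutes with restriction to an instance (you flag this but it does need checking against the definition of $!g$ by conjugation with the $\omega\cong\omega\times\omega$ bijection), and the determinacy of the witnessing interaction sequence, which for deterministic history-free strategies is harmless but is being used implicitly when you ``choose'' $s$.
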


\paragraph{Remark} The arbitrariness involved in the choice of
bijections $\omega \cong \omega + \omega$ and $\omega \cong \omega
\times \omega$ and the use of $0$ as  a particular element of $\omega$
in the above definitions is factored out by the partial equivalence
$\gequiv$, as explained in \cite{AJM00}.
Note that all the other ingredients used in constructing the above
isomorphisms are canonical, arising from the symmetric monoidal
structures of cartesian product and disjoint union on the category of sets, and the
distributivity of cartesian product over disjoint union. For the
general axiomatics of the situation, see \cite{AHS02}.

\section{The Model}

We shall use the hyper-doctrine formulation of model of System F, as
originally proposed by Seely \cite{See87} based on Lawvere's notion
of hyperdoctrines \cite{Law70}, and simplified by Pitts \cite{Pit88};
a good textbook presentation can be found in \cite{Cro93}.

We begin with a key definition:
\[ \GU{k} \;\; = \;\; \SubU \, \cap \, \GG{k} , \]
where $\UU$ is the universe of System F types constructed in Section~6.

\subsection{The Base Category}
We firstly define a base category $\Base$. The objects are natural
numbers. A morphism $n \longrightarrow m$ is an $m$-tuple
\[ \langle A_1 , \ldots , A_m \rangle , \qquad A_i \in \GU{n}, \; 1
\leq i \leq m . \] Composition of $\langle A_1 , \ldots , A_m
\rangle : n \longrightarrow m$ with $\langle B_1 , \ldots , B_n
\rangle : k \longrightarrow n$ is by substitution:
\[ \langle A_1 , \ldots , A_m \rangle \circ \vec{B} \;\; = \;\;
\langle A_1 [\vec{B}], \ldots , A_m [\vec{B}] \rangle : k
\longrightarrow m . \]
The identities are given by:
\[ \ident_{n} \;\; = \;\;  \langle X_1 , \ldots , X_n \rangle . \]
Note that variables act as projections:
\[ X_i : n \longrightarrow 1 \]
and we can define pairing by
\[ \langle \vec{A}, \vec{B} \rangle \;\; = \;\; \langle A_1 , \ldots ,
A_n , B_1 , \ldots , B_m \rangle : k \longrightarrow n+m \]
where
\[ \langle A_1 , \ldots , A_n \rangle : k \longrightarrow n, \qquad \langle B_1 , \ldots , B_m \rangle : k \longrightarrow
m . \]
Thus this category has finite products, and is generated by the object
$1$, in the sense that all objects are finite powers of $1$.

\subsection{The Indexed CCC}
Next, we define a functor
\[ \ICC : \Base^{\mbox{{\scriptsize op}}} \longrightarrow \CCC \]
where $\CCC$ is the category of cartesian closed categories with
\emph{specified} products and exponentials, and functors preserving
this specified structure.

The cartesian closed category $\ICC (k)$ has as objects $\GU{k}$. Note
that the objects of $\ICC (k)$ are the morphisms $\Base (k, 1)$; this
is part of the Seely-Pitts definition.

The cartesian closed structure at the object level is given by the
constructions on variable games which we have already defined: $A
\Rightarrow B$, $A \llwith B$, $\mathbf{1}$. Note that $\GU{k}$ is closed under these constructions by Proposition~\ref{Uclos}.

A morphism $A \longrightarrow B$ in $\ICC (k)$ is a generic copy-cat strategy
$\sigma : A \Rightarrow B$. Recall that this is actually defined at
the ``global instance'' $\UU$:
\[ \sigma = \sigma_{\UU} : (A \Rightarrow B)[\vec{\UU}] \;\; = \;\;
A[\vec{\UU}] \Rightarrow B[\vec{\UU}] . \]
More precisely, morphisms are partial equivalence classes of strategies modulo $\gequiv$.

The cartesian closed structure at the level of morphisms was
described in Section~5.2.

\subsubsection*{Reindexing}
It remains to describe the functorial action of morphisms in $\Base$. For
each $\vec{C} : n \rightarrow m$, we must define a cartesian closed
functor
\[ \vec{C}^{\ast} : \ICC (m) \longrightarrow \ICC (n) . \]
We define:
\[ \vec{C}^{\ast}(A) \; = \; A[\vec{C}] . \]
If $\sigma : A \Rightarrow B$,
\[ \vec{C}^{\ast}(\sigma ) = \sigma_{\vec{C}} : (A \Rightarrow
B)[\vec{C}] \;\; = \;\; A[\vec{C}] \Rightarrow B[\vec{C}] . \]
For functoriality, note that
\[ \vec{C}^{\ast}(\sigma ) \circ \vec{C}^{\ast}(\tau ) =
\sigma_{\vec{C}} \circ \tau_{\vec{C}} = (\sigma \circ \tau )_{\vec{C}} =
  \vec{C}^{\ast} (\sigma \circ \tau ) . \]
By Proposition~\ref{sub1}, $\vec{C}^{\ast}$ preserves the cartesian
closed structure.

\subsection{Quantifiers as Adjoints}
The second-order quantifiers are interpreted as right adjoints to
projections.
For each $n$, we have the projection morphism
\[ \langle X_1 , \ldots , X_n \rangle : n+1 \longrightarrow n \]
in $\Base$. This yields a functor
\[ \vec{X}^{\ast} : \ICC (n) \longrightarrow \ICC (n+1) . \]
We must specify a right adjoint
\[ \Pi_n : \ICC (n+1) \longrightarrow \ICC (n) \]
to this functor.
For $A \in \GU{n+1}$, we define
\[ \Pi_n (A) \; = \; \forall X_{n+1}. \, A . \]
To verify the universal property, for each $C \in \GU{n}$ we must
establish a bijection
\[ \Lambda : \ICC (n) (C, \forall X_{n+1}. \, A) \;\; \stackrel{\cong}{\longrightarrow} \;\; \ICC
(n+1)(\vec{X}^{\ast}(C), A) . \]
Concretely, note firstly that
\[ \vec{X}^{\ast}(C) \; = \; C[\vec{X}] \; = \; C . \]
Next, note that in both hom-sets the strategies are subsets of $P_{C[\vec{\UU}]
  \Rightarrow A[\vec{\UU},\, \UU/X_{n+1}]}$. In the case of generic strategies
  $\sigma$ into
  $A$, these are subject to the constraint of the \emph{restriction
    condition}: that is, for each instance $\vec{B},B$,
\[ \sigma_{\vec{B},B} \subseteq P_{C[\vec{B}] \Rightarrow
  A[\vec{B},B]} . \]
In the case of
  strategies $\sigma$ into $\forall X_{n+1}. \, A$, these are subject to the
  constraint that for each instance $\vec{B}$,
\[ \sigma_{\vec{B}} \subseteq P_{C[\vec{B}] \Rightarrow \forall
  X_{n+1}. \, A[\vec{B}, X_{n+1}]} . \]
Thus if we show that these conditions are equivalent, the required
correspondence between these hom-sets is simply the identity (which
also disposes of the naturality requirements)!

Suppose firstly that $\sigma$ satisfies the restriction
condition. Assuming that $sab \in \sigma$, we must show that
$A^{\ast}_{n+1}(sab) = A^{\ast}_{n+1}(sa)$. But if we let $B =
A^{\ast}_{n+1}(sa)$, then by Proposition~\ref{inv}(1),
\[ sa
\in P_{C[\vec{B}] \Rightarrow
  A[\vec{B},B]} , \]
and the restriction condition implies that
\[ sab
\in P_{C[\vec{B}] \Rightarrow
  A[\vec{B},B]} . \]
For the converse, suppose that $\sigma : C \Rightarrow \forall
X_{n+1}. \, A$. To show that $\sigma$ satisfies the restriction
condition, choose an instance $B$. Suppose that $sab \in \sigma$ and $sa \in
P_{C[\vec{B}] \Rightarrow A[\vec{B},B]}$. We must show that
$sab \in
P_{C[\vec{B}] \Rightarrow A[\vec{B},B]}$. Let $D =
    A^{\ast}_{n+1}(sa)$. Then by definition of $\forall X_{n+1}. \, A$,
    $sab \in A[\vec{B},D]$, and by Proposition~\ref{inv}(2), $D \ginc
    B$. Hence by Proposition~\ref{submon}, $sab \in
P_{C[\vec{B}] \Rightarrow A[\vec{B},B]}$ as required. $\;\; \Box$

\paragraph{Naturality (Beck-Chevalley)}
Finally, we must show that the family of right adjoints $\Pi_n$ form
an indexed (or fibred) adjunction. This amounts to the following: for
each $\alpha : m \longrightarrow n$ in $\Base$, we must show that
\[ \alpha^{\ast} \circ \Pi_n = \Pi_m \circ (\alpha \times
\ident_{1})^{\ast} . \]
Concretely, if $\alpha = \vec{C}$, we must show that for each $A \in
\GU{n+1}$,
\[ (\forall X_{n+1}. \, A)[\vec{C}] \;\; = \;\; \forall X_{m+1}. \,
A[\vec{C}, X_{m+1}] . \]
This is Proposition~\ref{sub2}.

\paragraph{Remark}
We are now in a position to understand the logical significance of the relative polymorphic product $\Pi_i (A, B)$. We could define
\[ \GB{k} \;\; =\;\; \SubB \; \cap \; \GG{k} , \]
and obtain an indexed category $\CCB (k)$ based on $\GB{k}$ instead of $\GU{k}$. We would still have an adjunction
\[  \GG{n} (C, \Pi_{n+1}( A, B)) \;\; \cong \;\; \CCB
(n+1)(\vec{X}^{\ast}(C), A) . \]
However, in general $B$ would not have sufficiently strong closure
properties to give rise to a model of System F. Obviously, $\SubB$
must be closed under the cartesian closed operations of product and
function space. More subtly, $\SubB$ must be closed under the
polymorphic product $\Pi_i ({-}, B)$. (This is, essentially, the
``small completeness'' issue \cite{Hyl88}, although our ambient category of
games does not have the requisite exactness properties to allow our
construction to be internalised in the style of realizability
models.\footnote{However, by the result of Pitts \cite{Pit88}, \emph{any}
hyperdoctrine model can be fully and faithfully embedded in an
(intuitionistic) set-theoretic model.}) This circularity, which directly reflects the impredicativity of System F, is resolved by the recursive definition of $\UU$.

\section{Homomorphisms}
We shall now view games as \emph{structures}, and introduce a natural
notion of homomorphism between games. These will serve as a useful
auxiliary tool in obtaining our results on genericity.

A homomorphism $h : A \longrightarrow B$ is a function
\[ h : P_A \longrightarrow P_B \]
which is
\begin{itemize}
\item \emph{length-preserving}: $|h(s)| = |s|$
\item \emph{prefix-preserving}: $s \sqsubseteq t \; \Rightarrow \;
  h(s) \sqsubseteq h(t)$
\item \emph{equivalence-preserving}: $s \gequiv t \; \Rightarrow \;
  h(s) \gequiv h(t)$.
\end{itemize}

There is an evident category $\GasS$ with  variable games as objects, and
homomorphisms as arrows.

\begin{lemma}[Play Reconstruction Lemma]
Let $A$, $B$ be variable games. If we are given $s \in P_A$,
and for each $m \in \Occ_A^i$, a play $t_m \in P_B$ with $|t_m | = \numoccs{m}{s}$,
then there is a unique $u \in P_{A[B/X_{i}]}$ such
  that:
\[ u \restrict A = s, \qquad u \restrict m = t_m  \;\; (m \in \Occ_A^i )
. \]
\end{lemma}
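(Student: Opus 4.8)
The plan is to construct $u$ by induction on $|s|$, interleaving the moves of $s$ with the moves drawn from the plays $t_m$ in exactly the order dictated by the requirement $u \restrict A = s$. The key observation is that the projection operations $\restrict A$ and $\restrict m$ completely determine how a sequence over $M_{A[B/X_i]}$ decomposes: a move of $u$ is either of the form $m[m']$ with $m \in \Occ_A^j$, $j \neq i$ (contributing to $u \restrict A$ only, since then $\rho(m) \neq i$ and the move equals $m$ itself under the occurrence-set description, or rather sits in $\Occ_A^j$), or of the form $m[m']$ with $m \in \Occ_A^i$ and $m' \in \Occ_B$ (contributing its last element $m$ to $u \restrict A$ and the ``inner'' part, after projecting, to $u \restrict m$). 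Wait — more carefully: every move of $s$ is a single element $a$; if $\rho_A(a) \neq i$ then $a$ itself is a move of $A[B/X_i]$ and we simply append it to $u$; if $a \in \Occ_A^i$, then the corresponding move of $u$ must be $a[b]$ where $b$ is the next unused move of $t_a$. So the induction hypothesis is: given $s = s' \cdot a$, there is a unique $u' \in P_{A[B/X_i]}$ with $u' \restrict A = s'$ and $u' \restrict m = t_m \restrict (\text{first } \numoccs{m}{s'} \text{ moves})$, and we extend $u'$ to $u$ by the single move determined as above.

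The main verification, and the place where the real content lies, is checking that the extended sequence $u$ actually lies in $P_{A[B/X_i]}$ — i.e.\ that it satisfies the alternation condition and the two projection conditions $u \restrict A \in P_A$ and $u \restrict m \in P_B$ for $m \in \Occ_A^i$. The projection conditions are immediate by construction: $u \restrict A = s \in P_A$ by hypothesis, and $u \restrict m$ is by construction an initial segment of $t_m$, hence in $P_B$ since $P_B$ is prefix-closed and $|t_m| = \numoccs{m}{s}$ ensures we never run out of moves and consume exactly the whole of $t_m$ at the end. The alternation condition is the subtle point: I would verify it using precisely the numerical identities (1)--(6) established in the proof of Proposition~\ref{sub1}-style well-definedness (the proof that $P_{A[\vec B]}$ satisfies the alternation condition), which relate $\parity(|t|)$, $\parity(\numoccs{m_1}{t \restrict A})$, $\lambda(m_1)$, $\lambda(m_2)$ and so on. The same computation shows that if the newly appended move is $a$ with $\rho(a) \neq i$, alternation for $u$ follows from alternation for $s = u \restrict A$ together with $|u| = |s|$; and if the appended move is $a[b]$, alternation follows from alternation in $A$ (for $u \restrict A$) and alternation in $B$ (for $u \restrict a$, which is an initial segment of $t_a \in P_B$), combined via identity (4), $\lambda(a[b]) = \lambda(a)\lambda(b)$, from Proposition~\ref{sublr}.

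For \textbf{uniqueness}: any $v \in P_{A[B/X_i]}$ satisfying $v \restrict A = s$ and $v \restrict m = t_m$ must, reading left to right, agree with $u$ move by move — since the underlying occurrence set $\Occ_{A[B/X_i]}$ is unambiguous, each move of $v$ has a unique decomposition, the sequence of ``outer'' parts is forced to be $s$, and the ``inner'' part of the $k$-th move with outer part $m \in \Occ_A^i$ is forced to be the $k$-th move of $t_m$; this is exactly how $u$ was built. I would phrase this as a straightforward induction on length mirroring the construction. The main obstacle I anticipate is purely bookkeeping: keeping the indices straight when counting which move of $t_m$ is consumed at which step, and making the interleaving precise enough that the alternation argument goes through cleanly; but there is no conceptual difficulty beyond the alternation computation already done for substitution.
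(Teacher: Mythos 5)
Your construction is exactly the paper's: the paper simply writes down the explicit formula $u_j = s_j$ when $\rho_A(s_j)\neq i$ and $u_j = s_j[(t_{s_j})_k]$ when $\rho_A(s_j)=i$ (with $k$ the number of occurrences of $s_j$ in $s$ up to position $j$), which is your interleaving phrased non-inductively, and your uniqueness argument is the intended one. The only remark worth making is that your separate verification of the alternation condition is redundant, since $P_{A[B/X_i]}$ is \emph{defined} solely by the two projection conditions $u \restrict A \in P_A$ and $u \restrict m \in P_B$ (alternation was already shown to follow from these in the well-definedness proposition for substitution).
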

\begin{proof}
We can define $u$ explicitly by:
\[ \begin{array}{lcll}
u_j & = & s_j , & \rho_{A} (s_j ) \neq i \\
u_j & = & s_{j}[m], & \rho_{A} (s_j ) = i \; \wedge \; (t_{s_j})_k =
m,
\end{array} \]
where $j$ is the $k$'th position in $s$ at which $s_j$ occurs.
\end{proof}

This Lemma makes it easy to define a functorial action of variable
games on homomorphisms. Let  $A$ be a variable game, and $h : B
\longrightarrow C$ a homomorphism.
We define
\[ A(h) : A[B/X_{i}] \longrightarrow A[C/X_{i}] \]
by
$A(h)(s) = t$,
where
\[ t \restrict A = s \restrict A, \qquad t \restrict m = h ( s
\restrict m), \;\; (m \in \Occ_A^i ) . \]

\begin{lemma}[Functoriality Lemma]
$A(h)$ is a well-defined homomorphism, and moreover this action is
functorial:
\[ A(g \circ h) = A(g) \circ A(h), \qquad A(\ident_{B}) =
\ident_{A[B/X_{i}]} . \]
\end{lemma}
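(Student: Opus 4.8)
The statement to prove is the Functoriality Lemma: that $A(h)$ is a well-defined homomorphism and that the assignment $h \mapsto A(h)$ is functorial. I will treat well-definedness, the three homomorphism properties, and functoriality in turn, with the Play Reconstruction Lemma doing most of the work.

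First, \emph{well-definedness}. Given $h : B \longrightarrow C$ and $s \in P_{A[B/X_i]}$, I set $s' = s \restrict A \in P_A$ and, for each $m \in \Occ_A^i$, $t_m = s \restrict m \in P_B$; since $h$ is length-preserving, $h(t_m) \in P_C$ has length $\numoccs{m}{s'}$. The Play Reconstruction Lemma (applied with $B$ replaced by $C$) then yields a unique $u \in P_{A[C/X_i]}$ with $u \restrict A = s'$ and $u \restrict m = h(t_m)$; this is exactly the $t$ prescribed in the definition of $A(h)(s)$, so $A(h)$ is a total function $P_{A[B/X_i]} \longrightarrow P_{A[C/X_i]}$.

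Next, the three homomorphism conditions. \emph{Length-preservation} is immediate: by the explicit formula for $u$ in the Play Reconstruction Lemma, $|u| = |s|$ (the construction replaces each move $s_j$ by a move $s_j[m']$, position by position), so $|A(h)(s)| = |s|$. \emph{Prefix-preservation}: if $s \sqsubseteq s''$ in $P_{A[B/X_i]}$ then $s \restrict A \sqsubseteq s'' \restrict A$ and $s \restrict m \sqsubseteq s'' \restrict m$ for each $m \in \Occ_A^i$; since $h$ is prefix-preserving, $h(s \restrict m) \sqsubseteq h(s'' \restrict m)$; by the uniqueness clause of the Play Reconstruction Lemma, $A(h)(s)$ is the (unique) play whose $A$-projection and $m$-projections are these prefixes, which forces $A(h)(s) \sqsubseteq A(h)(s'')$ — here it is cleanest to observe that the explicit positionwise formula for $u$ depends only on $s \restrict A$ and the $h(s\restrict m)$ in a prefix-monotone way. \emph{Equivalence-preservation}: if $s \gequiv_{A[B/X_i]} s''$, then by definition $s \restrict A \gequiv_A s'' \restrict A$ via some bijection $\pi$, and $s \restrict m \gequiv_B s'' \restrict \pi(m)$ for each $m \in \Occ_A^i$; since $h$ preserves $\gequiv$, $h(s \restrict m) \gequiv_C h(s'' \restrict \pi(m))$; and since $A(h)$ leaves the $A$-projection untouched, the same $\pi$ witnesses $A(h)(s) \gequiv_{A[C/X_i]} A(h)(s'')$.

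Finally, \emph{functoriality}. For $A(\ident_B) = \ident_{A[B/X_i]}$: $\ident_B$ is the identity on $P_B$, so $A(\ident_B)(s)$ is the unique play with $A$-projection $s \restrict A$ and $m$-projections $s \restrict m$ — which is $s$ itself, by the uniqueness in the Play Reconstruction Lemma. For $A(g \circ h) = A(g) \circ A(h)$ with $h : B \to C$, $g : C \to D$: both sides, applied to $s \in P_{A[B/X_i]}$, produce (by uniqueness) the play in $P_{A[D/X_i]}$ whose $A$-projection is $s \restrict A$ and whose $m$-projection is $(g \circ h)(s \restrict m) = g(h(s \restrict m))$; for the composite $A(g) \circ A(h)$ one uses that $A(h)(s) \restrict m = h(s \restrict m)$ to see that $A(g)$ applied to it has $m$-projection $g(h(s\restrict m))$, and $A$-projection still $s \restrict A$. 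Since a play in $P_{A[D/X_i]}$ is determined by these projections (Play Reconstruction Lemma), the two sides agree.

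\textbf{Main obstacle.} There is no deep difficulty here; the whole lemma is an exercise in pushing the Play Reconstruction Lemma around. The one point requiring slight care is prefix-preservation, where one must be sure that the \emph{unique} reconstructed play really is monotone in $s$ — this is where invoking the explicit positionwise formula from the proof of the Play Reconstruction Lemma, rather than just its uniqueness statement, makes the argument cleanest, since that formula visibly commutes with taking prefixes of $s$.
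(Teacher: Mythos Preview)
The paper does not actually prove the Functoriality Lemma; it is stated and immediately followed by the Homomorphism Lemma, so the authors evidently regard it as routine. Your proof is correct and is exactly the kind of argument the authors have in mind: everything is reduced to the existence and uniqueness clauses of the Play Reconstruction Lemma, together with the fact that the projection $s \restrict A$ is length- and prefix-preserving (it acts positionwise on moves) while each $s \restrict m$ is prefix-monotone in $s$. Your identification of prefix-preservation as the one place needing a moment's thought is apt; the cleanest way to say it is that the length-$|s|$ prefix of $A(h)(s'')$ has the correct $A$- and $m$-projections (using that $h$ is length- and prefix-preserving, so the length-$|s \restrict m|$ prefix of $h(s'' \restrict m)$ is $h(s \restrict m)$), and is therefore equal to $A(h)(s)$ by uniqueness.
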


\noindent The second important property is that \emph{homomorphisms preserve plays of
generic strategies}.

\begin{lemma}[Homomorphism Lemma]
Let $A$ be a variable game,  $\sigma : A$ a generic strategy, and $h : C
\longrightarrow D$ a homomorphism. Then
\[ s \in \sigma_{A[C/X_{i}]} \;\; \Longrightarrow \;\; A(h)(s) \in
  \sigma_{A[D/X_{i}]} . \]
\end{lemma}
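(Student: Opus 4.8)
The plan is to argue by induction on the length of $s$, exploiting the inductive characterisation of $\sigma_g$ as $\{\varepsilon\} \cup \{sab \mid s \in \sigma_g \wedge sa \in P_{A[\vec\UU]} \wedge \hat{g}(a) = b\}$, specialised to the single-variable situation so that $\sigma_{A[C/X_i]}$ consists of plays $s$ with $s \in \sigma_g$ and $sa \in P_{A[C/X_i]}$ at each Opponent move. First I would dispose of the base case $s = \varepsilon$, which is immediate since $A(h)(\varepsilon) = \varepsilon \in \sigma_{A[D/X_i]}$. For the inductive step, write $s = t \cdot a \cdot b$ with $t \cdot a \cdot b \in \sigma_{A[C/X_i]}$, so that $t \in \sigma_{A[C/X_i]}$, $t \cdot a \in P_{A[C/X_i]}$, and $\hat{g}(a) = b$. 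By the induction hypothesis, $A(h)(t) \in \sigma_{A[D/X_i]}$.

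The key step is to track what $A(h)$ does to the last two moves. Since $\sigma_g$ is generic, the second component of a $P$-move copies the second component of the immediately preceding $O$-move: writing $a = m_1[m_1']$ we have $b = g(m_1)[m_1']$. Now $A(h)$ acts as the identity on the $A$-part and applies $h$ (via the Play Reconstruction Lemma) on each projection $s \restrict m$ for $m \in \Occ_A^i$; crucially $A(h)$ changes only the \emph{instance} components $m_1'$, and does so in a way that depends only on the projections onto the occurrences of $X_i$, not on the generic part. So if $A(h)$ sends $a = m_1[m_1']$ to $m_1[\tilde{m}_1']$, then — because $g(m_1)$ lies over the same occurrence data as $m_1$, or more precisely because $\hat{g}$ commutes with the $-{\times}\ident$ action that $A(h)$ induces on instance components — it sends $b = g(m_1)[m_1']$ to $g(m_1)[\tilde{m}_1'] = \hat{g}(m_1[\tilde{m}_1'])$. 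Thus the image of the last move is again the $\hat{g}$-response to the image of the penultimate move. It remains to check $A(h)(t \cdot a) \in P_{A[D/X_i]}$: this follows because $h : C \longrightarrow D$ is a homomorphism, hence maps $P_C$ into $P_D$, so each projection $A(h)(t\cdot a) \restrict m = h(t \cdot a \restrict m) \in P_D$, while $A(h)(t \cdot a) \restrict A = (t\cdot a) \restrict A \in P_A$; together these give membership in $P_{A[D/X_i]}$ by the definition of substitution. Assembling these facts, $A(h)(s) = A(h)(t) \cdot A(h)(a) \cdot A(h)(b) \in \sigma_{A[D/X_i]}$, completing the induction.

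The main obstacle I anticipate is making precise the claim that $A(h)$ \emph{commutes with the generic copying performed by $\hat{g}$}. One must be careful that the instance component $\tilde{m}_1'$ produced by $A(h)$ on the $O$-move $m_1[m_1']$ is genuinely the same as the instance component produced on the $P$-move $g(m_1)[m_1']$ — this is where the Copy-Cat Lemma (giving $s \restrict m_1 = s \restrict g(m_1)$ for copy-cat strategies, so that the two projections feeding into $A(h)$ agree) or, more directly, the earlier observation that generic strategies satisfy $s \cdot m_1[m_1'] \cdot m_2[m_2'] \in \sigma \Rightarrow m_1' = m_2'$ does the work. Equivalently, one can phrase the whole argument at the level of the underlying partial function: $\sigma_{A[D/X_i]} = \sigma_g$ computed in the instance $D$, and $A(h)$ acts so that if $t \cdot a \in P_{A[C/X_i]}$ lies in $\sigma_g$'s domain then its image does too, with the same $g$-value on the generic part — which is exactly the content of $\sigma_g$ being determined by a function on $\Occ_A$ alone and $A(h)$ leaving $\Occ_A$ fixed. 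I would present the argument in this second, more structural form if the bookkeeping of the first becomes unwieldy.
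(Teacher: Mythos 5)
Your proposal is correct and takes essentially the same route as the paper's proof: induction on $|s|$, genericity to see that the final $O$/$P$ pair $m_1[a]\cdot m_2[a]$ shares its instance component, the Copy-Cat Lemma to see that the projections onto $m_1$ and $m_2=g(m_1)$ coincide so that $A(h)$ attaches the same new instance component $b$ to both moves, and genericity again to conclude membership in $\sigma_{A[D/X_{i}]}$. The only (harmless) difference is that you explicitly verify $A(h)(t\cdot a)\in P_{A[D/X_{i}]}$, a step the paper leaves to the Functoriality Lemma.
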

\begin{proof}
By induction on $|s|$. The base case is trivial. For the inductive
step, let
\[ u \equiv s \cdot m_1[a] \cdot m_2[a] \in \sigma_{A[C/X_{i}]}
. \]
By induction hypothesis, $A(h)(s) \in \sigma_{A[D/X_{i}]}$.
By the Copy-Cat Lemma, $u \restrict m_1 = u \restrict m_2$. Let $h(u
\restrict m_1 ) = v \cdot b$. Then
$A(h)(u) = A(h)(s) \cdot m_{1}[b] \cdot m_{2}[b]$, which is in
$\sigma_{A[D/X_{i}]}$ by genericity of $\sigma$.
\end{proof}

\section{Genericity}

Our aim in this section is to show that there are generic types in our
model, and indeed that, in a sense to be made precise, \emph{most
  types are generic}.

We fix a variable game $A \in \GG{1}$. Out aim is to find conditions
on variable games $B$ which imply that, for generic strategies $\sigma
, \tau : A$:
\[ \sigma_{B} \gequiv \tau_{B} \;\; \Longrightarrow \;\; \Lambda
(\sigma ) \gequiv \Lambda (\tau ) : \forall X. \, A . \] Since, as
explained in Section~5,
\[ \Lambda (\sigma ) \;\; = \;\; \sigma \;\; = \;\; \sigma_{\UU} , \]
this reduces to proving the implication
\[ \sigma_{B} \gequiv \tau_{B} \;\; \Longrightarrow \;\;
\sigma_{\UU} \gequiv \tau_{\UU} . \]
Our basic result is the following.

\begin{lemma}[Genericity Lemma]
\label{infgenericity}
If there is a homomorphism $h : \UU \longrightarrow B$, then $B$ is
generic.
\end{lemma}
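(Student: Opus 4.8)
Unpacking the definition of genericity, the claim is: for all generic copy-cat strategies $\sigma , \tau : A$, if $\sigma_B \gequiv \tau_B$ then $\sigma_\UU \gequiv \tau_\UU$, which since $\Lambda (\sigma) = \sigma_\UU$ is exactly $\Lambda (\sigma) \gequiv \Lambda (\tau) : \forall X. \, A$. As $\gequiv$ on strategies is the conjunction of two instances of $\preord$, and the data is symmetric in $\sigma$ and $\tau$, it is enough to prove the single implication $\sigma_B \preord \tau_B \; \Longrightarrow \; \sigma_\UU \preord \tau_\UU$. The plan is to transport plays of $\sigma$ and $\tau$ down to the instance $B$ along the functorial action $A(h) : A[\UU / X_1] \longrightarrow A[B / X_1]$, to apply the hypothesis there, and to pull the relevant consequences back up, using the Functoriality, Homomorphism and Copy-Cat Lemmas. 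The engine of the argument is that $A(h)$ fixes the $A$-part of every play, $A(h)(u) \restrict A = u \restrict A$; hence in particular it preserves the generic part of every move (leaving a move at a bound occurrence entirely unchanged), modifying at most the part pertaining to the instance.

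So assume $sab \in \sigma_\UU$, $t \in \tau_\UU$ and $sa \gequiv_{A[\UU/X_1]} ta'$ (note $ta' \in P_{A[\UU/X_1]}$, since $sa$ is and $sa \gequiv_{A[\UU/X_1]} ta'$). First I would locate $\tau$'s reply. By the Homomorphism Lemma $A(h)(sab) \in \sigma_B$ and $A(h)(t) \in \tau_B$, and by the Functoriality Lemma $A(h)$ is a homomorphism, in particular equivalence-preserving, so $A(h)(sa) \gequiv_B A(h)(ta')$. Applying $\sigma_B \preord \tau_B$ yields a move $\beta$ with $A(h)(ta') \cdot \beta \in \tau_B \subseteq \tau_\UU$ and $A(h)(sab) \gequiv_B A(h)(ta') \cdot \beta$. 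Write $\tau = \sigma_g$. Since $A(h)$ preserves generic parts, the last move of $A(h)(ta')$ has the same generic part $m$ as the last move $a'$ of $ta'$; and $A(h)(ta') \cdot \beta \in \sigma_g$ means $\beta$ is $\hat{g}$ applied to that last move, which forces $g(m)$ to be defined. Hence $b' \eqdef \hat{g}(a')$ is defined, and then $ta'b' \in \tau_\UU$ by the defining recursion for $\sigma_g$. Finally $A(h)(ta'b') \in \tau_B$ by the Homomorphism Lemma again, and it likewise extends $A(h)(ta')$ by one move, so by determinism of $\tau_B$ it coincides with $A(h)(ta') \cdot \beta$; thus $A(h)(sab) \gequiv_B A(h)(ta'b')$.

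It remains to verify $sab \gequiv_{A[\UU/X_1]} ta'b'$, which by the definition of the equivalence on $A[B/X_1]$ splits into the $A$-part and the family of instance-parts. The $A$-part is immediate: since $A(h)$ fixes $A$-parts, $(sab) \restrict A \gequiv_A (ta'b') \restrict A$ follows from the relation $A(h)(sab) \gequiv_B A(h)(ta'b')$ just established. For the instance-parts, the copied last moves $b$ and $b'$ each extend the play at exactly one occurrence---the one linked, under the copy-cat involution of $\sigma$ (resp.\ $\tau$), to the occurrence at which $O$ has just moved---and by the Copy-Cat Lemma the sequence recorded there is a verbatim copy of the sequence at that $O$-occurrence, which is already matched by $sa \gequiv_{A[\UU/X_1]} ta'$ (the last moves of $sa$ and $ta'$ correspond, hence so do their generic parts and the occurrences linked to these). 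The plays at all other occurrences are untouched by $b$ and $b'$ and are handled directly by $sa \gequiv_{A[\UU/X_1]} ta'$. The case of bound occurrences is easier still, since $A(h)$ is the identity there. This gives $\sigma_\UU \preord \tau_\UU$; by symmetry $\tau_\UU \preord \sigma_\UU$ as well, so $\sigma_\UU \gequiv \tau_\UU$, i.e.\ $B$ is generic.

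I expect the real difficulty to be precisely this pull-back step, and the crucial point will be that it has to be done \emph{without ever reflecting an instance-level equivalence from $B$ back to $\UU$}: a homomorphism need be neither injective nor $\gequiv$-reflecting, so no such reflection is available, and to demand one would make the lemma false. Accordingly, $A(h)$ may be used only where it is safe---namely to \emph{detect}, through its preservation of generic parts, that $\tau$ replies at all (a purely ``generic'' fact), and to transport the $A$-component of the target equivalence, which $A(h)$ carries over verbatim because it fixes $A$-parts---while the instance-level equivalence must be re-established entirely inside $\UU$, via the Copy-Cat Lemma, which guarantees that whatever a generic copy-cat strategy writes at an occurrence is merely a copy of what it has already read at the linked occurrence.
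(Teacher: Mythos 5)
Your proposal is correct and follows essentially the same route as the paper's proof: reduce $\gequiv$ to $\preord$ by symmetry, push the plays down to $B$ along $A(h)$, invoke the Homomorphism Lemma and the hypothesis $\sigma_B \gequiv \tau_B$ there, pull the existence of $\tau$'s reply back to $\UU$ via genericity, and re-establish the equivalence in $\UU$ by splitting into the $A$-part (carried verbatim since $A(h)$ fixes $A$-parts) and the instance parts (handled by the Copy-Cat Lemma together with $sa \gequiv ta'$). The only cosmetic difference is that you phrase the pull-back step through the underlying partial involution $g$ and determinism of $\tau_B$, where the paper invokes the ``generic'' closure condition on $\tau$ as a set of plays directly; these are equivalent.
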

\begin{proof}
We assume that $\sigma_{B} \gequiv \tau_{B}$, and show that
$\sigma_{\UU} \preord \tau_{\UU}$; a symmetric argument shows that $\tau_{\UU} \preord
\sigma_{\UU}$.

\noindent Suppose then that
\[ s \cdot m_{1}[a] \cdot m_{2}[a] \in \sigma , \quad t \in \tau ,
\quad s \cdot m_{1}[a] \gequiv t \cdot m'_{1}[a'] . \]
Let
\[ \begin{array}{lcl}
s' \cdot m_{1}[b] \cdot m_{2}[b] & = & A(h)(s \cdot m_{1}[a] \cdot
m_{2}[a]), \\
 t' \cdot
m'_{1}[b'] & = & A(h)(t \cdot m'_{1}[a']) .
\end{array} \]
Then since $A(h)$ is a homomorphism,
\[ s' \cdot m_{1}[b] \cdot m_{2}[b] \in P_{B} , \quad  t' \cdot
m'_{1}[b'] \in P_{B} , \quad s' \cdot
m_{1}[b]  \gequiv t' \cdot m'_{1}[b'] . \]
By the Homomorphism Lemma,
\[ s' \cdot m_{1}[b] \cdot m_{2}[b] \in \sigma , \quad t' \in \tau
. \]
Since by assumption $\sigma_{B} \gequiv \tau_{B}$, there exists
$m'_2$ such that:
\[ t' \cdot m'_{1}[b'] \cdot m'_{2}[b'] \in \tau \;\; \wedge \;\; s'
\cdot m_{1}[b] \cdot m_{2}[b] \gequiv t' \cdot m'_{1}[b'] \cdot
m'_{2}[b'] . \]
Since $\tau$ is generic, this implies that
\[ t \cdot m'_{1}[a'] \cdot m'_{2}[a'] \in \tau . \]
It remains to show that $s_1 \gequiv s_2$, where
\[ s_1 \equiv s \cdot m_{1}[a] \cdot m_{2}[a] , \quad s_2 \equiv t
\cdot m'_{1}[a'] \cdot m'_{2}[a'] . \]
Since by assumption
\[ s \cdot m_{1}[a] \gequiv t \cdot m'_{1}[a'] , \]
and $s'
\cdot m_{1}[b] \cdot m_{2}[b]  \gequiv t' \cdot m'_{1}[b'] \cdot
m'_{2}[b']$ implies that $s_1 \restrict A \gequiv s_2 \restrict A$,
it suffices to show that $s_1 \restrict m_2 \gequiv s_2 \restrict
m'_2$.
But by the Copy-Cat Lemma,
\[ s_1 \restrict m_2 = (s \restrict m_1 ) \cdot a , \quad s_2
\restrict m'_2 = (t \restrict m'_1 ) \cdot a' . \]
But
\[ s \cdot m_{1}[a] \gequiv t \cdot m'_{1}[a']  \;\; \Longrightarrow \;\; (s \restrict m_1 ) \cdot a \gequiv (t \restrict m'_1 ) \cdot a' , \]
and the proof is complete.
\end{proof}

\paragraph{Remark} The Genericity Lemma applies to \emph{any} variable
type $A$; in particular, it is \emph{not} required that $A$ be a
sub-game of $\UU$. Thus our analysis of genericity is quite robust,
and in particular is not limited to System F.

We define the \emph{infinite plays} over a game $A$ as follows:
$s \in P_A^{\infty}$ if every finite prefix of $s$ is in $P_A$.
We can use this notion to give a simple sufficient condition for the
hypothesis of the Genericity Lemma to hold.
\begin{lemma}
If $P_B^{\infty} \neq \varnothing$, then B is generic.
\end{lemma}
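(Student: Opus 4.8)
The plan is to reduce everything to the Genericity Lemma (Lemma~\ref{infgenericity}): it suffices to exhibit a homomorphism $h : \UU \longrightarrow B$. Fix an infinite play $z \in P_B^{\infty}$, say $z = \langle z_1 , z_2 , z_3 , \ldots \rangle$, and for $n \in \omega$ write $z_{\leq n} = \langle z_1 , \ldots , z_n \rangle$ for the prefix of $z$ of length $n$. Since $z$ is an infinite play over $B$, each $z_{\leq n}$ is a finite prefix of $z$ and hence lies in $P_B$. We then define
\[ h : P_{\UU} \longrightarrow P_B, \qquad h(s) = z_{\leq |s|} , \]
which is well-defined by the preceding remark.

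It remains to check the three homomorphism conditions. Length-preservation, $|h(s)| = |s|$, is immediate from the definition. Prefix-preservation holds because $s \sqsubseteq t$ implies $|s| \leq |t|$, whence $z_{\leq |s|} \sqsubseteq z_{\leq |t|}$. For equivalence-preservation, suppose $s \gequiv_{\UU} t$. By condition $\mathbf{(e1)}$ we have $s \longleftrightarrow t$, and since this correspondence is a length-preserving bijection we get $|s| = |t|$; hence $h(s) = z_{\leq |s|} = z_{\leq |t|} = h(t)$, and in particular $h(s) \gequiv_B h(t)$, using reflexivity of $\gequiv_B$ on $P_B$ together with $h(s) \in P_B$. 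Thus $h$ is a homomorphism, and $B$ is generic by the Genericity Lemma.

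There is no real obstacle here; the content of the argument is just the observation that the notion of homomorphism constrains only lengths, prefixes and the equivalence relation, and never the internal structure of moves. Consequently the \emph{diagonal} map, sending each play of $\UU$ to the corresponding-length prefix of a single fixed infinite play of $B$, automatically satisfies all the requirements, and the hypothesis $P_B^{\infty} \neq \varnothing$ is used only to guarantee that this map actually takes values in $P_B$.
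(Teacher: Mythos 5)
Your proof is correct and is essentially identical to the paper's: the paper also sends each $t \in P_{\UU}$ to the length-$|t|$ prefix of a fixed infinite play of $B$ and invokes the Genericity Lemma, leaving the homomorphism conditions as "trivially verified." You have simply written out those verifications explicitly, and they are all sound.
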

\begin{proof}
Suppose $s \in P_B^{\infty}$. Let $s_n \in P_B$ be the restriction of
$s$ to the first $n$ elements. We define $h : \UU \longrightarrow B$
by: $h(t) = s_{|t|}$. It is trivially verified that this is a
homomorphism. Genericity of $B$ then follows by the Genericity Lemma.
\end{proof}

We now apply these ideas to the denotations of System F types, the
objective being to show that ``most'' System F types denote generic
instances in the model.
Firstly, we define a notion of \emph{length} for games, which we then
transfer to types via their denotations as games.

\noindent We define
\[ |A| \;\; = \;\; \sup \{ |s| \mid s \in P_A \} . \]
Note that $|A| \leq \omega$.

We now show that any System F type whose denotation admits plays of
length greater than 2 is in fact generic!

\begin{lemma}[One, Two, Infinity  Lemma]
If $|T| > 2$, then $T$ is generic.
\end{lemma}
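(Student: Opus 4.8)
The plan is to reduce the Lemma to the preceding one. Since $P_{B}^{\infty} \neq \varnothing$ implies that $B$ is generic, it suffices to prove the dichotomy: for every System F type $T$, either $|T| \leq 2$ or $P_{T}^{\infty} \neq \varnothing$. I would prove this by induction on the normal form
\[ T \;\; = \;\; \forall \vec{X}. \, T_1 \rightarrow \cdots \rightarrow T_k \rightarrow Y , \]
where each $T_i$ is again in normal form and $Y$ is a type variable, distinguishing according to whether $Y$ is free or is one of the quantified variables.

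First I would deal with the ``head''. Every play of $T$ opens with an Opponent move in the $Y$-component. If $Y$ is a \emph{free} variable, that component is a copy of some $X_m$, on which plays of all finite lengths exist, and the surrounding relative polymorphic products place no constraint on moves confined to it (they track only the bound variables, none of which is $Y$); hence there is an infinite play lying entirely inside the $Y$-component and $P_{T}^{\infty} \neq \varnothing$. If $Y$ is bound, then immediately after the opening move the revealed instance $A_{i}^{\ast}(\cdot)$ --- as computed in Proposition~\ref{inv} --- is a sub-game of $\UU$ all of whose plays have length $\leq 1$; since $\SubU$ does contain such sub-games, Player has no legal reply inside the head, because any such reply would make $A_{i}^{\ast}$ strictly larger on an even-length prefix, contradicting the defining condition of $\Pi_{i}$. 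Consequently, if $Y$ is bound then every move after the first takes place in an argument copy, so in particular $k \geq 1$ and some $|T_j| \geq 1$; and if moreover $|T| > 2$, then inspecting the third move of a length-$3$ play (which must lie in the argument copy just opened, and yields a two-move play of that argument) shows that in fact $|T_i| \geq 2$ for some $i$.

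It remains to produce an infinite play from the hypothesis that some argument $T_i$ has $|T_i| \geq 2$ (with $Y$ bound). Here I would exploit the fact that the function space contains infinitely many argument copies $\{\fl{j}(1)\}_{j \in \omega}$: on each of his turns Player opens a \emph{fresh} copy of $T_i$, and Opponent answers inside it --- the latter is always legal, since Opponent is constrained only to play in \emph{some} instance, and an instance with sufficiently long plays (e.g.\ $X_m$) is always available; iterating forever gives a play in $P_{T}^{\infty}$. When, in addition, the bound head variable does not occur in $T_i$, one may instead play out the infinite play supplied by the induction hypothesis inside a single copy of $T_i$, since then $\Pi_{i}$ imposes no constraint on that copy; either way the dichotomy, and hence the Lemma, follows.

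The step I expect to be the main obstacle is verifying that this infinite play is \emph{legal against the relative polymorphic product}, i.e.\ that $A_{i}^{\ast}(t \cdot a) = A_{i}^{\ast}(t)$ on every even-length prefix. The difficulty is concentrated in the case where the bound head variable also occurs inside $T_i$, so that playing in a copy can touch that variable and threaten to enlarge the revealed instance. The resolution, which I would make precise using Propositions~\ref{inv} and~\ref{SubU}, is that Player never needs to enlarge the instance: since Player chooses the instance-part of his own moves, he can always take it to be $\gequiv$ to material Opponent has already exposed at the opening move, so that $A_{i}^{\ast}$ is held constant across every Player move --- which is exactly why Player must keep opening fresh copies rather than extending the ones in play. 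What then remains is routine bookkeeping: handling the several bound variables simultaneously, and observing that $|{-}|$ is insensitive to the polarity reversal induced by the left-hand argument of $\Rightarrow$, so that ``$|T_i| \geq 2$'' genuinely supplies the two-move runs needed in each argument copy.
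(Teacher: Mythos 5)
Your argument is correct and follows essentially the same route as the paper's: extract a play of length $3$, observe that its second and third moves must lie in a single copy of some $T_i$, replicate that two-move exchange across infinitely many fresh copies to obtain an infinite play, and conclude via the preceding lemma ($P_B^{\infty} \neq \varnothing$ implies genericity). Your case split on whether the head variable is free or bound, and your explicit check that Player's repeated moves hold $A_{i}^{\ast}$ constant, merely spell out details the paper leaves implicit.
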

\begin{proof}
Consider the normal form of $T$, which can be written as
\[ \forall \vec{X}. \, T_1 \rightarrow \cdots \rightarrow T_k
\rightarrow X . \]
If $|T| \geq 3$, then there is a play of length three, in which the
first move must be made in the rightmost occurrence of $X$, the second
in a copy of some $T_i$ (by the definition of plays in the polymorphic product),
and the third must also be played in that same copy of $T_i$ (by the usual
switching conditions). But then the second and third moves can be
repeated arbitrarily often in different copies of $T_i$, giving rise
to an infinite play.
\end{proof}

We now give explicit syntactic conditions on System F types which
imply that they are generic.

\begin{proposition}
Let $T = \forall \vec{X}. \, T_1 \rightarrow \cdots \rightarrow T_k
\rightarrow X$.
\begin{enumerate}
\item If for some $i: 1 \leq i \leq k$, $T_i = \forall \vec{Y}. \, U_1 \rightarrow \cdots \rightarrow U_l
\rightarrow X$, then $T$ is generic.
\item If for some $i: 1 \leq i \leq k$, $T_i = \forall \vec{Y}. \, U_1 \rightarrow \cdots \rightarrow U_l
\rightarrow Y$, and for some $j: 1 \leq j \leq l$, $U_j = \forall \vec{Z}. \, V_1 \rightarrow \cdots \rightarrow V_m
\rightarrow W$, where $W$ is \textbf{either} some $Z_p \in \vec{Z}$,
\textbf{or} $Y$, \textbf{or} some $X_q
\in \vec{X}$, then $T$ is generic.
\end{enumerate}
\end{proposition}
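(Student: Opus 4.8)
The plan is to reduce the statement, in both cases, to the One, Two, Infinity Lemma: it suffices to produce a single play of length $3$ in the game denoting $T$, since that lemma already contains the passage from a length-$3$ play to an infinite one (by replaying the second and third moves in fresh copies of the argument in which they occur). Throughout I would take the instance move $c$ to be the initial move of one of the variable components $X_{j}$ of $\UU$, so that $\langle c, c \rangle \in P_{\UU}$, and let every instance occurring in the constructed play be this same $c$; this trivialises all the polymorphic-product side-conditions.

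For case (1) I would write down the play $\langle\, h[c],\ k[c],\ k[c]\,\rangle$, where $h$ is the occurrence of $X$ at the head of $T$ and $k$ is the occurrence of $X$ at the head of some copy of the argument $T_{i}$; these are occurrences of the \emph{same} variable $X$, which is exactly what the hypothesis on $T_{i}$ provides. The first move opens the output of $T$, the second opens that copy of $T_{i}$, and the third is the second occurrence of $k[c]$, hence an $O$-move at an odd position, legal because $\langle c, c \rangle \in P_{\UU}$. It then remains to check: alternation and the projection conditions for $\Rightarrow$ (routine); the side-condition imposed by the leading quantifier block $\forall\vec{X}$ at the length-$2$ prefix, where $k[c]$ contributes only $\langle c \rangle$ to the instance of $X$, already present from $h[c]$, so that no new instance is disclosed; and the internal quantifier conditions of $T_{i}$, which are vacuous here since $k$ is an occurrence of $X$ and not of any variable bound inside $T_{i}$. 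Hence $|T| \geq 3$.

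For case (2) the play is $\langle\, h[c],\ k_{1}[c],\ k_{2}[c]\,\rangle$, where $h$ is the head occurrence of $X$ in $T$, $k_{1}$ the head occurrence of $Y$ in some copy of $T_{i}$, and $k_{2}$ the head occurrence of $W$ in a copy of the argument $U_{j}$ sitting inside that copy of $T_{i}$. The second move is Player's move in $T$, but it is the \emph{opening} move of the $T_{i}$-subplay, hence at an odd position of that subplay, so it is not constrained by $T_{i}$'s quantifier block $\forall\vec{Y}$; and, being an occurrence of $Y \in \vec{Y}$ rather than of any variable of $\vec{X}$, it is not constrained by $\forall\vec{X}$ either, so the instance of $Y$ may be revealed freely. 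The third move is Opponent's response inside that same copy of $T_{i}$, opening $U_{j}$ via its head $W$; it lies at an \emph{even} position of the $T_{i}$-subplay and is therefore subject to $\forall\vec{Y}$, which forbids it from revealing a fresh instance of any variable of $\vec{Y}$. This is the one delicate point, and it is precisely where the hypothesis on $W$ is needed: if $W = Z_{p} \in \vec{Z}$, the move reveals an instance only of $U_{j}$'s own quantifier block, which Opponent --- as the opener of the $U_{j}$-subplay --- may disclose freely; if $W = X_{q} \in \vec{X}$, the move lies at an odd \emph{overall} position, so it is constrained neither by $\forall\vec{Y}$ nor by $\forall\vec{X}$; and if $W = Y$, the move simply reuses the instance $c$ already revealed for $Y$ by the second move. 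In each of the three cases the third move is legal, so again $|T| \geq 3$, and the result follows from the One, Two, Infinity Lemma.

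The hard part is the bookkeeping in case (2): for a move occurring deep inside $T$ one must read off, from the labelling algebra (Proposition~\ref{sublr}) together with the way moves are prefixed in the $\Rightarrow$ and $\Pi_{i}$ constructions, whether the player making that move plays the Opponent or the Player role in each enclosing polymorphic product, since only this decides whether the move may legitimately disclose instances; it is against this analysis that the three-way split on $W$ must be checked. Once that is set up, everything else is a routine unfolding of the definitions of $A \Rightarrow B$, $\Pi_{i}(A, B)$, and the alternation and projection conditions.
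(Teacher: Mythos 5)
Your proposal is correct and follows essentially the same route as the paper: in each case exhibit a legal play of length~3 (the same plays --- $O$ in the head $X$ of $T$, $P$ in the head of $T_i$, $O$ in the head of $U_j$ or in the repeated occurrence) and then invoke the One, Two, Infinity Lemma. The paper's own proof is just a two-sentence sketch of this; your version adds the explicit verification of the polymorphic-product side-conditions, including the correct observation that the polarity inversion at negative occurrences is what licenses $P$'s second move and that the three-way case split on $W$ is exactly what makes the third move legal.
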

\begin{proof}
It is easily seen that types of the shapes described in the statement
of the Proposition have
plays of length 3. Indeed in the first case $O$ plays in the rightmost
occurrence of $X$ in $T$, $P$ responds in the rightmost occurrence of $X$ in
the given $T_i$, and then $O$ can respond in that same occurrence of
$X$.
In the second case, $O$ plays in $X$, $P$ plays in $Y$, and then $O$
can play in $W$. We then apply the previous Lemma.
\end{proof}

\noindent We apply this to the simple and familiar case of ``ML types''.
\begin{corollary}
Let $T = \forall X. \, U$, where $U$ is built from the type variable
$X$ and $\rightarrow$. If $U$ is non-trivial (i.e. it is not just
$X$), then $T$ is generic.
\end{corollary}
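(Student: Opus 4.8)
The plan is to read off the normal form of $T$ and then apply case~(1) of the preceding Proposition.

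First I would observe that, since $U$ is built solely from the variable $X$ and $\rightarrow$, parsing $\rightarrow$ to the right in the usual way exhibits $U$ literally in the form $T_1 \rightarrow \cdots \rightarrow T_k \rightarrow X$, where each $T_j$ is again built from $X$ and $\rightarrow$ (hence inductively of the same shape, with no leading quantifiers) and the head variable is $X$ simply because $X$ is the only type variable present. No System F isomorphism is needed for this step. The hypothesis that $U$ is non-trivial --- that is, $U \neq X$ --- is exactly the statement that $k \geq 1$.

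Hence $T = \forall X.\, U = \forall X.\, T_1 \rightarrow \cdots \rightarrow T_k \rightarrow X$ is already in the normal form assumed by the preceding Proposition, with the singleton quantifier block $\vec{X} = (X)$. Taking $i = 1$ (legitimate since $k \geq 1$): the component $T_1$ is quantifier-free and built from $X$, so it has the form $U_1 \rightarrow \cdots \rightarrow U_l \rightarrow X$ for some $l \geq 0$ (the case $l = 0$ being $T_1 = X$), with head variable exactly the $X$ of $T$ and empty leading quantifier block. This is precisely the situation covered by case~(1) of the Proposition, so $T$ is generic.

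There is essentially no obstacle here: the corollary is a direct specialization. The only thing worth making explicit is the elementary normal-form bookkeeping --- that a type built from $X$ and $\rightarrow$ and distinct from $X$ is, verbatim, a non-empty iterated arrow with head $X$ --- which is what makes both the Proposition and, equivalently, the One, Two, Infinity Lemma applicable; in the latter case because such a type manifestly admits a play of length three ($O$ in the rightmost occurrence of $X$, $P$ forced into a fresh copy of some $T_i$, then $O$ again inside that copy), whence $|\lsem T \rsem| > 2$.
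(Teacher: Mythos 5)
Your proof is correct and is exactly the argument the paper intends: the corollary is left without explicit proof precisely because, as you show, a type built from $X$ and $\rightarrow$ other than $X$ itself is verbatim of the form $T_1 \rightarrow \cdots \rightarrow T_k \rightarrow X$ with $k \geq 1$ and each $T_i$ headed by $X$, so case (1) of the preceding Proposition applies directly. Your closing remark that one could equally invoke the One, Two, Infinity Lemma via an explicit length-3 play is also consistent with how the paper proves that Proposition.
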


\paragraph{Examples}
The following are all examples of generic types.
\begin{itemize}
\item $\forall X. \, X \rightarrow X$
\item $\forall X. \, (X \rightarrow X) \rightarrow X$
\item $\forall X. \, (\forall Y. Y \rightarrow Y \rightarrow Y)
  \rightarrow X$.
\end{itemize}

\paragraph{Non-examples}
The following illustrate the (rather pathological) types which do not
fall under the scope of the above results. Note that the
first two both have length 1; while the third has length 2.
\begin{itemize}
\item $\forall X. X$
\item $\forall X.  \forall Y. \, X \rightarrow Y$.
\item $\forall X. \, X  \, \rightarrow \, \forall X. \, X$
\end{itemize}

\paragraph{Remark} An interesting point illustrated by these examples is that our
conditions on types are orthogonal to the issue of whether the types
are inhabited in System F. Thus the type $\forall X. \, (X \rightarrow
X) \rightarrow X$ is not inhabited in System F, but is generic in the
games model, while the type $\forall X. \, X  \, \rightarrow \,
\forall X. \, X$ is inhabited in System F, but does not satisfy our
conditions for genericity.

\section{Full Completeness}

In this section, we prove full completeness for ML types.  The
full completeness proof exploits the decomposition of Intuitionist
implication into Linear connectives.  We give the basic
definitions, referring to \cite{AJM00} for motivation and
technical details.


\subsection{Linear Structure} The required operations on
morphisms to give the categorical structure required to model the
connectives of intuitionist multiplicative exponential linear
logic can be defined exactly as for AJM games \cite{AJM00}.

We fix an algebraic signature consisting of the following set of
\emph{unary} operations:
\[ \al , \; \ar , \; \{ \bangindex{i} \mid i \in \omega \} , \; \linearfl, \; \fr, \; \tensorfl, \; \tensorfr. \]
We take $\linearMM$ to be the algebra over this signature freely
generated by $\omega$. Explicitly, $\MM$ has the following
``concrete syntax'':
\[
m \;\; ::= \;\; i \; (i \in \omega ) \;\; \mid \;\; \al (m) \;\; \mid
\;\; \ar (m) \;\; \mid \;\; \bangindex{i} (m) \;\; (i \in \omega )
\;\; \mid \;\; \linearfl (m) \mid \;\; \fr (m) \mid
\;\;
  \tensorfl (m)\mid \;\; \tensorfr (m).
\]
The \emph{labelling map} $\lambda : \linearMM \longrightarrow \{
P, O \}$. The polarity algebra on the carrier $\{ P, O \}$
interprets $\al$, $\ar$, $\fr,\tensorfl,\tensorfr$ and each
$\bangindex{i}$ as the identity, and $\linearfl$ as the involution
$\bar{(\ )}$, where $\bar{P} = O$, $\bar{O} = P$. The map on the
generators is the constant map sending each $i$ to $O$.

\subsubsection*{Bang: $!$ }
$!A$ is defined as follows.

\[ \Occ_{!A } \;\; = \;\; \{ \bangindex{i}(m) \mid i \in \omega
\; \wedge \; m \in \Occ_A \} .  \]

\noindent $P_{!A }$ is defined to be the set of all sequences in
$M_{!A}^{\ast}$ satisfying the alternation condition, and such
that:
\begin{itemize}
\item $\forall i \in \omega . \, s \restrict \bangindex{i}(1) \in P_A$.
\end{itemize}
Let $S = \{ \bangindex{i}(1) \mid i \in \omega \} $. Given a
permutation $\alpha$ on $\omega$, we define
\[ \breve{\alpha}(\bangindex{i}(1)) = \bangindex{\alpha (i)}(1). \]
The equivalence relation $s \gequiv_{!A} t$ is
defined by the condition
\[ \exists \alpha \in S(\omega ) . \,
\breve{\alpha}^{\ast}(s \restrict S) = t \restrict S .
\]
This is essentially identical to the definition in \cite{AJM00}.
The only difference is that we use the revised version of the
alternation condition in defining the positions.

\subsubsection*{Linear function space: $A \linimpl B$}
The linear function space $A \multimap B$ is defined as follows.

\[ \Occ_{A \multimap B} \;\; = \;\; \{ \linearfl(m) \mid \; m \in
\Occ_A \} \;\; \cup \;\; \{ \fr (m) \mid m \in
\Occ_B \} .  \]

\noindent $P_{A \linimpl B}$ is defined to be the set of all
sequences in $M_{A \linimpl B}^{\ast}$ satisfying the alternation
condition, and such that:
\begin{itemize}
\item $s \restrict \linearfl (1) \in P_A$.
\item $s \restrict \fr (1) \in P_B$.
\end{itemize}
Let $S = \{\linearfl (1) , \fr (1) \}$. The equivalence relation $s \gequiv_{A \linimpl  B} t$ is defined
by the condition
\[ s \restrict S = t \restrict S \; \wedge \; s \restrict \fr (1) \gequiv_B t \restrict \fr (1) \; \wedge \;
s \restrict \linearfl (1) \gequiv_A t \restrict \linearfl(1) ) .
\]
\subsubsection*{Tensor: $A \tensor B$}
The tensor $A \tensor B$ is defined as follows.

\[ \Occ_{A \tensor B} \;\; = \;\; \{ \tensorfl(m) \mid \; m \in
\Occ_A \} \;\; \cup \;\; \{ \tensorfr (m) \mid m \in \Occ_B \} .
\]
 \noindent $P_{A \tensor B}$ is defined to be the
set of all sequences in $M_{A \tensor B}^{\ast}$ satisfying the
alternation condition, and such that:
\begin{itemize}
\item $s \restrict \tensorfl (1) \in P_A$.
\item $s \restrict \tensorfr (1) \in P_B$.
\end{itemize}
Let $S = \{\tensorfl (1) , \tensorfr (1) \}$. The equivalence relation $s \gequiv_{A \tensor B} t$ is defined by
the condition
\[ s \restrict S = t \restrict S \; \wedge \; s \restrict \tensorfr (1) \gequiv_B t \restrict \tensorfr (1) \; \wedge \;
s \restrict \tensorfl (1) \gequiv_A t \restrict \tensorfl(1) ) .
\]

\subsection{Domain equation}
Define the two orders $\ginc, \domapprox$ on games as before
(Section~\ref{order}).   We define a variable game $\LUU \in
\GG{\omega}$ of second order types by the following recursive
equation:
\[ \LUU =  \&_{i>0} X_i \;\; \& \;\; \mathbf{1} \;\; \& \;\; (\LUU
\llwith \LUU ) \;\; \& \;\; (\LUU \linimpl \LUU ) \;\; \& \;\; (\LUU
\tensor \LUU ) \;\; \& \;\; (!\LUU ) \;\; \& \;\; \&_{i>0}
\Pi_{i} (\LUU , \LUU ) . \]
Explicitly, $\LUU$ is being defined as
the least fixed point of a continuous function $F : \GG{\omega}
\longrightarrow \GG{\omega}$.

We first summarize the key facts required to relate $\UU$ and
$\LUU$.  Define $A \Rightarrow B \; = \; {!A \linimpl B}$.
\begin{proposition} \hfill
\begin{itemize}
\item  $\Rightarrow$, $\llwith$, Substitution and Relative Polymorphic
Product are all $\ginc$-monotone.
\item $(\SubLU, \ginc )$ is a complete lattice.
\end{itemize}
\end{proposition}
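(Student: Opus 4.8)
The plan is to reduce each clause to the corresponding fact already established for the intuitionistic universe $\UU$ in Section~\ref{order}, exploiting the observation that the new linear constructions $\linimpl$, $\tensor$ and $!$ are defined by exactly the same template as $\Rightarrow$ and $\llwith$: the occurrence set of a compound game is a disjoint union of tagged copies of the occurrence sets of the components; the play set is carved out of the ambient move-set by requiring that each projection onto a component lie in that component's play set; and the equivalence is given by a relabelling condition on a fixed unambiguous set $S$ of ``initial'' moves together with componentwise equivalence of projections.

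First I would check $\ginc$-monotonicity of $\tensor$, $\linimpl$, $!$ and $\llwith$ directly. Suppose $A \ginc A'$ and $B \ginc B'$. Each operator acts on occurrence sets just by tagging and disjoint union, hence is $\subseteq$-monotone there. For plays, all the defining clauses are positive membership conditions: $s \in P_{!A}$ requires each $s \restrict \bangindex{i}(1)$ to lie in $P_A \subseteq P_{A'}$, so $P_{!A} \subseteq P_{!A'}$, and similarly for $\linimpl$, $\tensor$, $\llwith$; note that the index set $\omega$ used by $!$ is fixed, independent of $A$, so enlarging $A$ cannot shrink the relabelling group. The equivalence clause is inherited in exactly the form demanded by the definition of $\ginc$, just as in the verification of $\ginc$-monotonicity of $\Rightarrow$ and $\llwith$ cited before Proposition~\ref{Uclos}. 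Since $A \Rightarrow B = {!A} \linimpl B$, monotonicity of $\Rightarrow$ follows by composing $\ginc$-monotone operators.

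For Substitution and the Relative Polymorphic Product there is nothing new to prove. Both are defined purely in terms of the underlying game structure $(\Occ, P, \gequiv)$ and the substitution maps on $\MM$ (indeed $\Pi_i(A,B)$ is defined via $A[B/X_i]$ and the adjoint $A_i^{\ast}$), with no reference to which connectives were used to build a game, so Proposition~\ref{submon} and Proposition~\ref{prodcont} apply verbatim and yield $\ginc$-monotonicity. For the final clause, $\SubLU$ is by definition the set of sub-games of the single ``big game'' $\LUU$, and Proposition~\ref{SubU}(1) was proved for an arbitrary choice of universe; instantiating it at $\LUU$ gives that $(\SubLU, \ginc)$ is a complete lattice with meets and joins computed as intersections and unions of play-sets. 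The one preliminary point is that $\LUU$ is genuinely a well-defined game, i.e. the recursive equation has a solution: this follows from the least-fixed-point theorem once $\linimpl$, $\tensor$ and $!$ are seen to be $\domapprox$-monotone and continuous on move-sets — routine analogues of Proposition~\ref{domapprox}(2) — so that the defining operator $F$ is $\domapprox$-continuous by Proposition~\ref{domapprox}(3), exactly as for $\UU$.

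The argument is entirely routine; the only place calling for care is confirming that the linear constructions really do fit the ``disjoint-union-plus-projection'' template closely enough for the earlier monotonicity and continuity arguments to transfer unchanged — in particular that the $!$ construction, with its permutation-indexed equivalence, stays $\ginc$-monotone, which it does because inclusion-monotonicity concerns only the inclusion of play-sets and every defining condition is positive.
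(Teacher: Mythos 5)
Your argument is correct and is precisely the routine verification the paper leaves implicit: the proposition is stated without proof, and the intended justification is exactly your reduction to Proposition~\ref{submon}, Proposition~\ref{prodcont} and Proposition~\ref{SubU}, together with the direct check that $!$, $\linimpl$, $\tensor$ (and hence $\Rightarrow \; = \; {!(-)} \linimpl (-)$) are $\ginc$-monotone because their occurrence sets are tagged disjoint unions and their play conditions are positive componentwise membership conditions. Your added remark that $\LUU$ must first be seen to exist via $\domapprox$-continuity of the new connectives is a sensible completeness point that the paper also treats as understood.
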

From this proposition, it is clear that $\UU$ is essentially a
subgame of $\LUU$, with the proviso that the universe of moves
underlying $\UU$ is different from the universe of moves in
$\LUU$.  More precisely, consider a \emph{renaming map} $R: \MM
\longrightarrow \linearMM$, that interprets $\al$, $\ar$, $\fr$ of
$\MM$ as the operations with the same name on $\linearMM$, and
$\fl{i}$ as $\linearfl \circ \bangindex{i}$. The map on the
generators is the ``identity'' map sending each $i \in \MM$ to $i
\in \linearMM$. Modulo this renaming map, $\UU$ is a subgame of
$\LUU$.

The genericity results for $\UU$ carry over to $\LUU$, in
particular the analog of lemma~\ref{infgenericity}.
\begin{lemma}\label{LUUgenericity} If
$P_B^{\infty} \neq \varnothing$, then B is generic.
\end{lemma}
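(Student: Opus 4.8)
The plan is to mimic exactly the proof of the earlier lemma for $\UU$ --- the one stating that a game with a non-empty set of infinite plays is generic --- but invoking the $\LUU$-analog of the Genericity Lemma (Lemma~\ref{infgenericity}) in place of the original. As the excerpt has already observed, the Copy-Cat Lemma, the Functoriality and Homomorphism Lemmas, and the notions of homomorphism and of generic strategy are all phrased purely in terms of $P_A$, $\gequiv_A$ and the substitution structure on moves, so they carry over verbatim from $\MM$ to $\linearMM$ and from $\UU$ to $\LUU$. In particular we may assume the statement: \emph{if there is a homomorphism $h : \LUU \longrightarrow B$, then $B$ is generic.}

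It therefore suffices to construct a homomorphism $h : \LUU \longrightarrow B$ from the hypothesis $P_B^{\infty} \neq \varnothing$. Fix $s \in P_B^{\infty}$, and for $n \in \omega$ let $s_n$ be the prefix of $s$ of length $n$; since every finite prefix of $s$ lies in $P_B$, each $s_n \in P_B$. Define $h : P_{\LUU} \longrightarrow P_B$ by $h(t) = s_{|t|}$ (this is well defined precisely because $s$ is infinite, so $s_n$ exists for every $n$). Then $h$ is length-preserving, since $|h(t)| = |s_{|t|}| = |t|$; prefix-preserving, since $t \sqsubseteq t'$ implies $|t| \leq |t'|$ and hence $s_{|t|} \sqsubseteq s_{|t'|}$; and equivalence-preserving, since $t \gequiv_{\LUU} t'$ forces $|t| = |t'|$ by condition $\mathbf{(e1)}$ (any two $\gequiv$-related plays are related by a length-preserving bijection on moves), so that $h(t) = s_{|t|} = s_{|t'|} = h(t')$ and then $h(t) \gequiv_B h(t')$ by reflexivity of $\gequiv_B$ on $P_B$. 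Applying the $\LUU$-analog of the Genericity Lemma to $h$ gives that $B$ is generic.

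There is essentially no obstacle here beyond what has already been discharged: the only genuine content is the assertion that the homomorphism machinery and the Genericity Lemma transfer to the linear universe $\LUU$, which is exactly what ``the genericity results for $\UU$ carry over to $\LUU$'' records. That transfer is routine because $\LUU$, like $\UU$, is assembled from the $X_i$, $\mathbf{1}$, $\llwith$ and the relative polymorphic product, merely augmented by the linear connectives $\linimpl$, $\tensor$ and $!$, none of which alters the relevant definitions. The one point worth a moment's care in the homomorphism verification is equivalence-preservation, and it rests solely on the fact that $\gequiv$ never relates plays of different lengths; this holds for $\LUU$, and indeed for every variable game, by $\mathbf{(e1)}$.
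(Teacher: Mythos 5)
Your proof is correct and follows exactly the paper's route: the paper proves the $\UU$-version by defining $h(t)=s_{|t|}$ from an infinite play $s$ and invoking the Genericity Lemma, and disposes of the $\LUU$-version by observing that the homomorphism machinery and the Genericity Lemma transfer verbatim to the linear universe. Your verification of the three homomorphism conditions (including length-invariance of $\gequiv$ via $\mathbf{(e1)}$) is exactly the ``trivially verified'' step the paper elides.
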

In this light, since $\UU$ is essentially a subgame of $\LUU$, a
full completeness result for $\LUU$ implies full completeness for
$\UU$.

\subsection{Full completeness}

Consider an ML (universal closures of quantifier-free types) type
$T$, \ie $T = \forall \vec{X}. \, U$, where $U$ is quantifier-free.  
In the light of lemma~\ref{LUUgenericity}, it suffices to
prove the result when the type variables are instantiated with a
game $\iota$ such that $P_{\iota}^{\infty} \neq
\varnothing$. Explicitly, suppose that given a strategy $\sigma$ of
type $T$, we can find a term $M : T$ such that $\lsem M \rsem_{\iota}
\gequiv \sigma_{\iota}$. Then genericity implies that $\lsem M \rsem_{\LUU}
\gequiv \sigma_{\LUU}$, and hence that $\lsem M \rsem \gequiv \sigma$, as required.

We define $\iota$ as a well-opened subgame of $\LUU$, to enable us
to directly adopt the proofs from~\cite{AJM00}.  A game $B$ is
{\em well-opened}~\cite{AJM00} if the opening moves of $B$ can
\emph{only} appear as O-moves in opening positions. That is, for all 
$a\in M_B$, if
$a\in P_B$ then
\[ sa\in P_A \; \wedge \; | s |\; \mbox{even} \;\; \Longrightarrow \;\; 
s=\epsilon . \]

For notational convenience, we define $\iota$ as a subgame of
$\UU$.  By the earlier discussion, there is a variant of $\iota$
that is a subgame of $\LUU$.  Consider the System F type $(\forall
X) [(X \Rightarrow X) \Rightarrow X] $.  Let $n \in \omega$.
Consider the infinite position $s$ given by:
$$ \fr(n) \cdot \fl{1}(\fr(n)) \cdot \fl{1}(\fl{1}(n)) \cdot
\fl{2}(\fr(n)) \cdot \fl{2}(\fl{1}(n))
\cdot \fl{3}(\fr(n)) \ldots $$ Define $\iota$ as the minimum game
under the $\ginc$ order containing all the finite prefixes of $s$.
This is constructed as in Lemma~\ref{SubU}.  Explicitly, $\iota$ is
given  as  the set of positions that are equivalent to
finite prefixes of
$s$ in $\UU$.  An examination of the equivalence
in $\UU$ reveals that $\iota$ is well-opened.

Consider an ML type in which all type variables are instantiated
by $\iota$.  We now relate strategies in such types  to
$\beta\eta\Omega$-normal forms in the simply typed lambda calculus
built on a single base type $\iota$ with a constant $\Omega$ at
each type. $\Omega$ is interpreted in the model as the strategy
$\bot$ which only contains the empty sequence. For completeness,
we record the  $\beta\eta\Omega$-normal forms.
\begin{itemize}
\item For all types $T$, $\Omega : T$ and $x : T$  are $\beta\eta\Omega$ normal forms.
\item Let $M_i$ be $\beta\eta\Omega$-normal forms at types $T_i$, $1
  \leq i \leq k$.  Let $x$ be of type $T_1 \rightarrow \cdots \rightarrow
T_k \rightarrow \iota$.   Then $\lambda \vec{x}. \, x   M_1
\ldots M_k$ is a $\beta\eta\Omega$-normal form.
\end{itemize}

The statement of the decomposition theorem requires some further
notation from~\cite{AJM00}.   Consider
$$(A_1\llwith\dots\llwith A_k)\Rightarrow \iota$$ where
$$A_i=B_{i,1}\Rightarrow \dots B_{i,l_i}\Rightarrow \iota, \; \;\;
(1\leq i\leq l_i).$$ If for some $1\leq i\leq k$ and each $1\leq
j\leq l_i$ we have
$$\sigma_j:\tilde{A}\Rightarrow B_{i,j}$$ then we define
$${\bf C}_i(\sigma_1,\dots,\sigma_{l_i}):\tilde{A}\Rightarrow \iota$$ by
$${\bf
C}_i(\sigma_1,\dots,\sigma_{l_i})= {\tt Ap }\circ\lang \dots {\tt
Ap }\circ\lang \pi_i,\sigma_1\rang,\ldots,\sigma_{l_i}\rang .$$

With this notation, we are ready to state the decomposition lemma.
\begin{proposition}[Decomposition Lemma]\label{decomp}
Let $\sigma:(A_1\llwith\dots\llwith
A_p)\Rightarrow(A_{p+1}\Rightarrow\dots A_q\Rightarrow \iota)$ be
any strategy, where $$A_i=B_{i,1}\Rightarrow\dots
B_{i,l_i}\Rightarrow \iota,\;\; 1\leq i\leq q$$ We write
$\tilde{C}=A_1,\dots,A_p,\; \tilde{D}=A_{p+1},\dots,A_q$.
(Notation : if $\tau:\tilde{C},\tilde{D}\Rightarrow \iota$, then
$\Lambda_{\tilde{D}}(\tau):\tilde{C}\Rightarrow(A_{p+1}\Rightarrow\cdots
\Rightarrow A_q\Rightarrow \iota)$.)

Then exactly one of the following cases applies.
\begin{itemize}
\item[(i)] $\sigma=\Lambda_{\tilde{D}}(\bot_{\tilde{C},\tilde{D}}).$
\item[(ii)]
  $\sigma=\Lambda_{\tilde{D}}({\bf
C}_i(\sigma_1,\dots,\sigma_{l_i})),$ where $1\leq i\leq q$, and
\[\begin{array}{lllc}
\sigma_j:\tilde{C},\tilde{D} & \Rightarrow & B_{i,j}, & 1\leq
j\leq l_i
\end{array}\]
\end{itemize}
\end{proposition}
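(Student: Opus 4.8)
The plan is to curry down to the non-polymorphic, non-linear core and then run the standard AJM analysis of the first two moves of the strategy. Writing $\tau$ for the unique strategy with $\sigma = \Lambda_{\tilde{D}}(\tau)$, we have $\tau : \tilde{C},\tilde{D} \Rightarrow \iota$, i.e. $\tau : (A_1 \llwith \dots \llwith A_q) \Rightarrow \iota$; since $\Lambda_{\tilde{D}}$ is a bijection of hom-sets and the combinators $\mathbf{C}_i$ and $\bot$ are stable under it, it suffices to exhibit the claimed decomposition for $\tau$. First I would pin down the opening position. Because $\iota$ has been arranged to be well-opened, the unique $O$-move available at $\varepsilon$ in $(A_1 \llwith \dots \llwith A_q) \Rightarrow \iota$ is an opening move of the target copy of $\iota$. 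Now case split on whether $\tau$ responds to it. If it does not, then $\tau = \{ \varepsilon \} = \bot_{\tilde{C},\tilde{D}}$, so $\sigma = \Lambda_{\tilde{D}}(\bot_{\tilde{C},\tilde{D}})$, which is case (i).

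Otherwise $\tau$ responds with a $P$-move. By the switching discipline for $\Rightarrow$ (Opponent cannot switch components, so Player's move must lie on the domain side) this move lies in some component $A_i$, and by well-openedness of $\iota$ it is the opening move of $A_i$, that is, a move in the head occurrence of $\iota$ in $A_i = B_{i,1} \Rightarrow \dots \Rightarrow B_{i,l_i} \Rightarrow \iota$. This pins the index $i$ down uniquely. Given the two forced moves, for each $j$ with $1 \le j \le l_i$ I would extract a sub-strategy $\sigma_j : \tilde{C},\tilde{D} \Rightarrow B_{i,j}$: after those two moves Opponent may interrogate the argument $B_{i,j}$ by playing its opening move, and the history-free function underlying $\tau$, restricted to the moves reachable from that interrogation, is exactly the function underlying $\sigma_j$ (the full context $\tilde{C},\tilde{D}$ remaining available on the domain, since Player may switch freely). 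Using determinism, history-freeness and $\gequiv$-invariance of $\tau$ one checks that each $\sigma_j$ is a well-defined history-free strategy of the stated type and respects $\gequiv$.

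Finally I would verify the reassembly identity $\tau = \mathbf{C}_i(\sigma_1, \dots, \sigma_{l_i})$, equivalently $\sigma = \Lambda_{\tilde{D}}(\mathbf{C}_i(\sigma_1,\dots,\sigma_{l_i}))$. This is a matter of unfolding $\mathbf{C}_i(\sigma_1,\dots,\sigma_{l_i}) = \mathtt{Ap} \circ \langle \dots \mathtt{Ap} \circ \langle \pi_i, \sigma_1 \rangle, \dots, \sigma_{l_i}\rangle$ and checking that composition along $\mathtt{Ap}$ together with pairing and projection interleaves the plays of the $\sigma_j$ with the two forced opening moves so as to recover precisely the plays of $\tau$ — the same ``zipping'' computation as in \cite{AJM00}; the existence of genuine further moves inside $\iota$ does not affect it, since the decomposition only uses the tree structure near the root, where well-openedness of $\iota$ is what is needed. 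Exclusivity of the two cases is immediate: a strategy of the shape in (ii) contains a play of length $2$, while the strategy in (i) does not.

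The step I expect to be the main obstacle is this last reassembly identity: making the interleaving combinatorics precise without a flood of bookkeeping on plays of $(A_1 \llwith \dots \llwith A_q) \Rightarrow \iota$. This is exactly the point at which we lean on the corresponding development in \cite{AJM00}; the only genuinely new ingredient in the present setting is the verification — already carried out above — that the game $\iota$ chosen here is well-opened, which is what licenses transporting that argument verbatim.
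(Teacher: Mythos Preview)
Your proposal is correct and takes essentially the same approach as the paper, which simply defers to the corresponding argument in \cite{AJM00,AL00} after observing that $\iota$ is well-opened. The one point the paper singles out that you leave implicit is the Bang Lemma (Proposition~3.3.4 of \cite{AJM00}): well-openedness of $\iota$ is invoked there, and it is precisely what makes your substrategy-extraction step (pulling out each $\sigma_j : \tilde{C},\tilde{D} \Rightarrow B_{i,j}$ from under the exponentials) go through, so you may wish to name it rather than folding it into the appeal to history-freeness.
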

The proof follows standard arguments~\cite{AJM00,AL00}. In
particular, since $\iota$ is well-opened, the Bang Lemma
(Proposition 3.3.4 of~\cite{AJM00}) applies. The remainder of the
proof follows Proposition 3.4.5 of~\cite{AJM00}.

The Decomposition Lemma provides for one step of decomposition of
an arbitrary strategy into a form matching that of $\beta\eta\Omega$
normal forms in the $\lambdaomega$ calculus. However, infinite
strategies such as the $Y$ combinator  will not admit a well-founded
inductive decomposition process.

We conclude by describing a ``finiteness'' notion on strategies to
identify the strategies for which the decomposition terminates. We
define a notion of positive occurrences of $!$, following the
usual definition of positive occurrences of variables in a
formula. Consider a linear type built out of $!, \linimpl$ and
type variables.  We define positive and negative occurrences of
$!$ by structural induction.
\begin{itemize}
\item In $!A$, the positive occurrences of $!$ are the positive
occurrences in
$A$ and the outermost $!$.  The negative occurrences of $!$ in
$!A$ are the negative occurrences in $A$.
\item In $A \linimpl B$, the positive occurrences of $!$ are the
positive occurrences in $B$ and the negative occurrences in $A$.
The negative occurrences of $!$ are the negative occurrences in
$B$ and positive occurrences in $A$.
\end{itemize}
For any linear type $T$ built out of $!, \linimpl$ and $\iota$, consider $T'$ obtained by erasing the positive occurrences
of $!$ from $T$.  There is a canonical morphism $\delta_T : T \linimpl T'$
built by structural induction from dereliction maps (at the positive
occurrences of $!$) and identities (everywhere else). A strategy
$\sigma$ for an ML type $\forall X. F[X]$ is \emph{finite} if
there is a finite partial involution $f$ inducing
$\sigma_{\iota} ; \delta_{F[\iota ]}$.

The decomposition process is well-founded for \emph{finite}
strategies.
\begin{theorem}
For any ML type $\forall X. F[X]$, every finite strategy $\sigma$
is definable by a $\lambdaomega$ term in $\beta\eta\Omega$-normal form.
\end{theorem}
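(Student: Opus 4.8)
The plan is to argue by well-founded induction on the number of axiom links in a finite partial involution $f$ that induces $\sigma_{\iota} ; \delta_{F[\iota]}$, using the Decomposition Lemma (Proposition~\ref{decomp}) to strip off one constructor of a $\beta\eta\Omega$-normal form at each stage. It is convenient to prove the slightly more general statement for strategies $\sigma : (A_1 \llwith \cdots \llwith A_p) \Rightarrow (A_{p+1} \Rightarrow \cdots \Rightarrow A_q \Rightarrow \iota)$ with each $A_i = B_{i,1} \Rightarrow \cdots \Rightarrow B_{i,l_i} \Rightarrow \iota$ --- that is, for the shape of type the Decomposition Lemma is tailored to. Since $F$ is quantifier-free and built solely from $\Rightarrow$ over the single base type $\iota$, the case of ML types $\forall X. F[X]$ (with all variables instantiated at $\iota$) is a special case, and the sub-strategies produced by decomposition again fall under the general statement after uncurrying.

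So, apply the Decomposition Lemma to $\sigma$. If case (i) holds, $\sigma = \Lambda_{\tilde D}(\bot_{\tilde C, \tilde D})$ and $\sigma_{\iota}$ is defined by the normal form $\lambda \vec{x}.\, \Omega$; this is the base of the induction (no recursive calls, and it is also the outcome when $f$ is empty, since composing a non-$\bot$ strategy with the dereliction/identity composite $\delta_{F[\iota]}$ is again non-$\bot$). If case (ii) holds, $\sigma = \Lambda_{\tilde D}(\mathbf{C}_i(\sigma_1, \ldots, \sigma_{l_i}))$ with $\sigma_j : \tilde C, \tilde D \Rightarrow B_{i,j}$. Assuming --- this is the point addressed below --- that each $\sigma_j$ is again finite, say induced by a finite partial involution $f_j$, and that $\sum_{j=1}^{l_i} |f_j| < |f|$, the induction hypothesis yields $\beta\eta\Omega$-normal forms $M_j$ with $\lsem M_j \rsem_\iota \gequiv (\sigma_j)_\iota$. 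Then $M = \lambda x_1 \ldots x_q.\, x_i\, M_1 \ldots M_{l_i}$ (with $x_i$ the bound variable of type $A_i$) is itself in $\beta\eta\Omega$-normal form, and unwinding the definitions of $\mathbf{C}_i$ and $\Lambda_{\tilde D}$ together with soundness of $\lsem - \rsem$ gives $\lsem M \rsem_\iota \gequiv \sigma_\iota$. Since $P_\iota^{\infty} \neq \varnothing$, Lemma~\ref{LUUgenericity} promotes this to $\lsem M \rsem \gequiv \sigma$, completing the step.

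The main obstacle is exactly the claim used in case (ii): that each $\sigma_j$ is finite and that $\sum_j |f_j| < |f|$, which is what makes the decomposition process well-founded for finite strategies. This requires tracing how the finite partial involution $f$ for $\sigma_\iota ; \delta_{F[\iota]}$ decomposes through $\mathbf{C}_i(\sigma_1, \ldots, \sigma_{l_i})$. The building blocks of $\mathbf{C}_i$ --- the projections $\pi_i$, the applications $\mathsf{Ap}$, and the pairings --- all have copy-cat (twist-map) underlying partial involutions, so their composition via $\invcomp$ (Proposition~\ref{invcomp}) merely relocates the links of the $f_j$ into pairwise disjoint regions of $f$ without creating new ones. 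What must be pinned down is that the head interaction of the decomposition --- the initial $O$-move into some $A_i$ together with $P$'s response, which enters $B_{i,j}$ at an outermost (hence, by construction of $\delta$, positive) occurrence of $!$ that $\delta_{F[\iota]}$ erases by a dereliction --- accounts for at least one link of $f$ that is not charged to any $f_j$, yielding the strict inequality. The positive/negative occurrence discipline for $!$ is precisely the bookkeeping device ensuring that $\delta_{F[\iota]}$ restricts coherently to the dereliction/identity maps on the component types $B_{i,j}$ occurring beneath it, so that ``$\sigma_j$ is finite via $f_j$'' is even well-posed. Once this decrease lemma is established, the natural-number measure $|f|$ strictly decreases along every case-(ii) branch, the recursion terminates in case-(i) leaves, and the required $\lambdaomega$ normal form is read off from the resulting finite tree of decompositions; well-typedness of $M$ and the soundness computation $\lsem M \rsem_\iota \gequiv \sigma_\iota$ are routine given the shapes involved.
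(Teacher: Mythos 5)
Your proposal matches the paper's intended argument exactly: reduce to the instance at $\iota$ via genericity, then induct on the size of the finite partial involution inducing $\sigma_{\iota} ; \delta_{F[\iota]}$, using the Decomposition Lemma to peel off one constructor of a $\beta\eta\Omega$-normal form per step (case (i) giving $\lambda\vec{x}.\,\Omega$, case (ii) the head-variable application). The decrease lemma you isolate as the main obstacle is precisely the content of the paper's own (likewise unproved) assertion that ``the decomposition process is well-founded for finite strategies,'' so your account is, if anything, more explicit than the paper's.
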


Stronger results can be proved, although we will not enter into details
here because of space restrictions. Firstly, if we extend the syntax of
$\lambda\Omega$ terms to allow \emph{infinite} terms (\ie we take the
ideal completion under the $\Omega$-match ordering), then we can
remove the finiteness hypothesis in the Theorem.
Secondly, if we refine the game model to introduce a notion of
winning infinite play, and use this to restrict to \emph{winning
strategies}, as in \cite{Abr96}, then we can obtain a full completeness
result for the ML types of System F itself, without any need to introduce $\Omega$
into the syntax.

\section{Related Work}
A game semantics for System F was developed by Dominic Hughes in his D.Phil. thesis
\cite{Hug99}. A common feature of his approach with
our's is that both give a direct interpretation of open types as certain games,
and of type substitution as an operation on games. However, his approach is in a sense rather closer to syntax; it
involves carrying type information in the moves, and the resulting model
is much more complex. For example, showing that strategies in the
model are closed under composition is a major undertaking. Moreover,
the main result in \cite{Hug99} is a full completeness theorem
essentially stating that the model is isomorphic to the term model of
System F (with $\beta\eta$-equivalence), modulo types being reduced to
their normal forms. As observed by Longo \cite{Lon95}, the term model
of System F \emph{does not satisfy Genericity}; in fact, it does not
satisfy Axiom (C). It seems that the presence of explicit type
information in the moves will preclude the model in \cite{Hug99} from
having  genericity properties comparable to those we have established for
our model.

The D.Phil thesis of Andrzej Murawski \cite{Mur01} takes a broadly similar approach
to modelling polymorphism to that of \cite{Hug99}, although the main
focus in \cite{Mur01} is on modelling Light Linear Logic.

\paragraph{Acknowledgements.}  Samson Abramsky was supported in part
by UK EPSRC GR/R88861. Radha Jagadeesan was supported in part by NSF
CCR-0244901.

\end{document}